\newtheorem{theorem}{Theorem} [section]
\newtheorem{proposition}[theorem]{Proposition}	
\newtheorem{lemma}[theorem]{Lemma}
\newtheorem{remark}[theorem]{Remark}
\theoremstyle{definition}
\newtheorem*{notation}{Notation}
\DeclareMathOperator{\tr}{Tr}
\newcommand{\C}{\mathbb{C}}
\newcommand{\R}{\mathbb{R}}
\newcommand{\N}{\mathbb{N}}
\newcommand{\re}{\text{\upshape Re\,}}
\newcommand{\im}{\text{\upshape Im\,}}
\def\XXint#1#2#3{{\setbox0=\hbox{$#1{#2#3}{\int}$}
\vcenter{\hbox{$#2#3$}}\kern-.5\wd0}}
\tikzset{->-/.style={decoration={
				markings,
				mark=at position #1 with {\arrow{latex}}},postaction={decorate}}}
	\tikzset{-<-/.style={decoration={
				markings,
				mark=at position #1 with {\arrowreversed{latex}}},postaction={decorate}}}
\tikzset{
	master/.style={
		execute at end picture={
			\coordinate (lower right) at (current bounding box.south east);
			\coordinate (upper left) at (current bounding box.north west);
		}
	},
	slave/.style={
		execute at end picture={
			\pgfresetboundingbox
			\path (upper left) rectangle (lower right);
		}
	}
}
\tikzset{cross/.style={cross out, draw, 
         minimum size=2*(#1-\pgflinewidth), 
         inner sep=0pt, outer sep=0pt}}
\numberwithin{equation}{section}
\def\bigO{{\cal O}}
\begin{document}
\title{The multiplicative constant for the Meijer-$G$ kernel determinant}
\author{Christophe Charlier, Jonatan Lenells, Julian Mauersberger}

\date{{\small Department of Mathematics, KTH Royal Institute of Technology, \\ 100 44 Stockholm, Sweden.}}

\maketitle

\begin{abstract}
We compute the multiplicative constant in the large gap asymptotics of the Meijer-G point process. This point process generalizes the Bessel point process and appears at the hard edge of Cauchy--Laguerre multi-matrix models
and of certain product random matrix ensembles. 
\end{abstract}


\section{Introduction and main results} 
The Meijer-$G$ point process is a determinantal point process whose kernel is built out of Meijer $G$-functions. It appears in the study of the smallest squared singular values of certain product random matrices \cite{KZ2014, KKS2016} and in Cauchy multi-matrix models \cite{BGS2014, BB2015} in the limit of large matrix dimension (see below for more details). Of particular interest is the distribution of the smallest particle, or equivalently, the probability of finding no particle in the interval $[0,s]$, $s>0$. In some particular cases, this distribution is related to a system of partial differential equations \cite{TW1994,S2014}. 
In this work, we are interested in the tail behavior of this distribution as $s \to + \infty$, known as the large gap asymptotics. The study of such asymptotics was initiated in \cite{ClaeysGirSti}, where the first two terms in the asymptotic expansion were found, and then pursued in \cite{CLMMuttalib}, where the third term was obtained. 
The purpose of this paper is to obtain an explicit expression for the next term in the expansion, which is the multiplicative constant.

\paragraph{Meijer-$G$ point process.} Given $m,n,q,p \in \N= \{0,1,2,\ldots\}$ and $a_1,\ldots,a_p,b_1,\ldots,b_q \in \mathbb{C}$ such that
\begin{align*}
0 \leq m \leq q, \quad 0 \leq n \leq p \qquad \mbox{ and } \qquad a_{k}-b_{j} \neq 1, \, 2, \, 3, \, \ldots \quad \mbox{ for all } 1 \leq k \leq n, \mbox{ and } 1 \leq j \leq m,
\end{align*}
the Meijer $G$-function is defined by \cite[Eq. 16.17.1]{NIST}
\begin{align}\label{def of Meijer G}
G_{p,q}^{m,n}\left( \left. \begin{array}{c}
a_1,\ldots,a_p \\
b_1,\ldots,b_q
\end{array} \right| z \right)=\frac{1}{2\pi i} \int_{L} \frac{\prod_{j=1}^m \Gamma(b_j-t) \prod_{j=1}^{n} \Gamma(1-a_j+t)}{\prod_{j=m+1}^{q} \Gamma(1-b_j+t) \prod_{j=n+1}^{p}\Gamma(a_j-t)} z^t dt,
\end{align}
where $z \in \mathbb{C}$ and $z^{t} = |z|^{t} e^{i t \arg z}$ with $\arg z \in (-\pi,\pi)$. The integration contour $L$ separates the poles of $\Gamma(b_j-t)$, $j=1,\ldots,m$, from those of $\Gamma(1-a_j+t)$, $j=1,\ldots,n$. Furthermore, $L$ is unbounded, oriented upwards, and goes to $\infty$ in sectors of the complex plane such that the integral in \eqref{def of Meijer G} converges. For example, we can choose $L$ as follows:
\begin{itemize}
\item \vspace{-0.1cm} If $|\arg z| < \frac{\pi}{2}(2m+2n-q-p)$, then $L$ can be chosen to start at $-i\infty$ and end at $+i\infty$.
\item \vspace{-0.2cm} If $q>p$, then $L$ can be a loop starting and ending at $+\infty$.
\item \vspace{-0.2cm} If $q<p$, then $L$ can be a loop starting and ending at $-\infty$.
\end{itemize}
Let $r>q \geq 0$ be fixed integers and let 
\begin{align}\label{cond on nu and mu}
\nu_{1},\ldots,\nu_{r},\mu_{1},\ldots,\mu_{q}>-1.
\end{align}
The Meijer-$G$ point process is a determinantal point process on $\R^+ = (0, + \infty)$ whose kernel is given by
\begin{align}\label{Meijer G kernel}
\mathbb{K}(x,y) = \int_{0}^{1} G_{q,r+1}^{1,q}\left( \left. \begin{array}{c}
-\mu_{1},\ldots,-\mu_{q} \\
0,-\nu_{1},\ldots,-\nu_{r}
\end{array} \right| tx \right)G_{q,r+1}^{r,0}\left( \left. \begin{array}{c}
\mu_{1},\ldots,\mu_{q} \\
\nu_{1},\ldots,\nu_{r},0
\end{array} \right| ty \right) dt.
\end{align}
The kernel \eqref{Meijer G kernel} can equivalently be written as \cite[Eq. (1.14) and Proposition A.2]{ClaeysGirSti}
\begin{align}\label{def kernels}
\mathbb{K}(x,y) =  \int_{\gamma}\frac{du}{2\pi i} \int_{\tilde{\gamma}}\frac{dv}{2\pi i} \frac{F(u)}{F(v)} \frac{x^{-u}y^{v-1}}{v-u},
\end{align}
where $F$ is given in terms of the Gamma function $\Gamma$ by
\begin{align}\label{def of F}
F(z)=\frac{\Gamma(z)\prod_{j=1}^{q} \Gamma(1+\mu_j-z)}{\prod_{j=1}^r \Gamma(1+\nu_j-z)}.
\end{align}
The contours $\gamma$ and $\tilde{\gamma}$ in \eqref{def kernels} are disjoint, oriented upwards, and tend to infinity in sectors lying strictly in the left and right half-planes, respectively. Furthermore, they separate the poles of $\Gamma(z)$ from the poles of $\prod_{j=1}^{q} \Gamma(1+\mu_j-z)\prod_{j=1}^{r} \Gamma(1+\nu_j-z)$, see Figure \ref{fig: gamma and gammatilde}. 

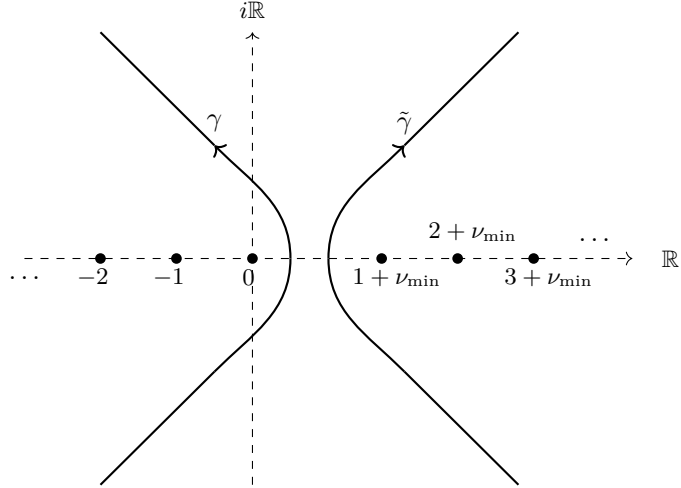
\begin{figure}
	\begin{center}
		\begin{tikzpicture}	
		
		\draw[dashed,->] (5,-3) -- (5,3);
		\draw[dashed,->] (2,0) -- (10,0);
		\node at (5,3.3) {\small $i\mathbb{R}$};
		\node at (10.5,0) {\small $\mathbb{R}$};
		
		\draw[thick] (4.5,-1.5) to [out=45, in=-90]  (5.5,0);
		\draw[thick,->] (5.5,0) to [out=90, in=-45]  (4.5,1.5);
		\draw[thick] (4.5,-1.5) to [out=-135, in=45] (3,-3);
		\draw[thick] (4.5,1.5) to [out=135, in=-45] (3,3);
		\node at (4.5,1.8) {$\gamma$};
		
		\draw[thick] (7,-1.5) to [out=135, in=-90]  (6,0);
		\draw[thick,->] (6,0) to [out=90, in=-135]  (7,1.5);
		\draw[thick] (7,-1.5) to [out=-45, in=135] (8.5,-3);
		\draw[thick] (7,1.5) to [out=45, in=-135] (8.5,3);
		\node at (7,1.8) {$\tilde \gamma$};
		
		\fill (5,0) circle (0.07cm);
		\node at (4.95,-0.25) {\small $0$};
		\fill (4,0) circle (0.07cm);
		\node at (3.9,-0.25) {\small $-1$};
		\fill (3,0) circle (0.07cm);
		\node at (2.9,-0.25) {\small $-2$};
		
		\fill (6.7,0) circle (0.07cm);
		\node at (6.9,-0.25) {\small $1+\nu_{\min}$};
		\fill (7.7,0) circle (0.07cm);
		\node at (7.9,0.35) {\small $2+\nu_{\min}$};
		\fill (8.7,0) circle (0.07cm);
		\node at (8.9,-0.25) {\small $3+\nu_{\min}$};
		\node at (2,-0.25) {$\ldots$};
		\node at (9.5,0.25) {$\ldots$};
		\end{tikzpicture}
		\caption{The contours $\gamma$ and $\tilde{\gamma}$, and $\nu_{\min}=\min \{ \nu_1,\ldots,\nu_r,\mu_{1},\ldots,\mu_{q} \}$.} 
		\label{fig: gamma and gammatilde}
	\end{center}
\end{figure}

\vspace{0.2cm}The Meijer-$G$ point process generalizes in a natural way the Bessel point process, which is the point process most commonly encountered at hard edges for random matrix ensembles. In fact, if $q=0$, $r=1$ and $\nu_{1}=\nu > -1$, then the kernel \eqref{Meijer G kernel} can be written as (see e.g. \cite[Section 5.3]{KZ2014})
\begin{align}\label{K reduces to Bessel here}
\mathbb{K}(x,y) = \left( \frac{y}{x} \right)^{\frac{\nu}{2}}\int_{0}^{1} J_{\nu}(2\sqrt{tx})J_{\nu}(2\sqrt{ty}) dt = 4 \left( \frac{y}{x} \right)^{\frac{\nu}{2}} \mathbb{K}_{\mathrm{Be}}(4x,4y),
\end{align}
where $\mathbb{K}_{\mathrm{Be}}$ is the kernel of the Bessel point process
\begin{equation}\label{Bessel kernel}
\mathbb{K}_{\mathrm{Be}}(x,y) = \frac{J_{\nu}(\sqrt{x})\sqrt{y}J_{\nu}^{\prime}(\sqrt{y})-\sqrt{x}J_{\nu}^{\prime}(\sqrt{x})J_{\nu}(\sqrt{y})}{2(x-y)},
\end{equation}
and $J_\nu$ is the Bessel function of the first kind of order $\nu$. 

\paragraph{Gap probability.} We consider the probability to observe a gap on $[0,s]$ in the Meijer-$G$ point process. It follows from the general theory of determinantal point processes \cite{Soshnikov} that this probability can be written as a Fredholm determinant:
\begin{align}\label{gap and Fredholm}
\mathbb{P}(\mbox{gap on }[0,s]) = \det\big(1-\mathbb{K}\big|_{[0,s]}\big).
\end{align}
It is well-known \cite{TW1994} that the distribution of the smallest particle in the Bessel point process (which corresponds to \eqref{gap and Fredholm} with $q = 0$ and $r=1$) is naturally expressed in terms of the solution of a Painlev\'{e} V equation. For $q=0$, $r \geq 2$ and integer values of $\nu_{1},\ldots,\nu_{r}$, the Fredholm determinant \eqref{gap and Fredholm} is instead related to a more involved system of partial differential equations \cite{S2014}.

\paragraph{Large gap asymptotics.} Asymptotics for \eqref{gap and Fredholm} as $s \to + \infty$ are referred to as \textit{large gap asymptotics}. For the Bessel point process, these asymptotics are known and given by: 
\begin{align}\label{Asymptotics Bessel}
\det\big( 1-\mathbb{K}|_{[0,s]}\big) = \frac{G(1+\nu)}{(2\pi)^{\frac{\nu}{2}}} \exp \bigg( -s+2\nu \sqrt{s} - \frac{\nu^2}{4} \ln(4s) + \bigO\big( s^{-1/2} \big)\bigg), \quad \mbox{as } s \to + \infty,
\end{align}
where $G$ denotes Barnes' $G$-function and $\mathbb{K}$ is as in \eqref{K reduces to Bessel here}.  The asymptotics \eqref{Asymptotics Bessel} were first conjectured by Tracy and Widom in \cite{TW1994}, then proved by Ehrhardt \cite{Ehr} for all complex values of $\nu$ in the strip $-1 < \re \nu < 1$, and finally proved by Deift, Krasovsky and Vasilevska \cite{DKV2011} for $-1< \re \nu$. Note that, because of the rescaling \eqref{K reduces to Bessel here}, the left-hand-side in \eqref{Asymptotics Bessel} is equal to $\det \big( 1-\mathbb{K}_{\mathrm{Be}}|_{[0,4s]}\big)$.

\medskip The study of the large gap asymptotics of the Meijer-$G$ point process (for general real values of the parameters) has been initiated by Claeys, Girotti and Stivigny in \cite{ClaeysGirSti}. They proved that there exist real constants $\rho$, $a$, $b$, $c$ and $C$ such that
\begin{align} \label{preliminary expansion fredholm det}
\det \big( 1- \mathbb{K}|_{[0,s]}\big)= C \exp\Big( -as^{2\rho}+bs^{\rho} + c \ln s+ \bigO \big( s^{-\rho}\big) \Big), \qquad \mbox{as } s \to + \infty,
\end{align}
and derived the following explicit expressions for $\rho$, $a$ and $b$:\footnote{Three point processes were considered in \cite{ClaeysGirSti}, and the corresponding quantities were denoted with an upperscript $(j)$, $j=1,2,3$. The coefficients \eqref{higher order constants11}-\eqref{higher order constants12} correspond to \cite[Theorem 1.2]{ClaeysGirSti} with $j=2$.}
\begin{align} 
& \rho=\frac{1}{1+r-q}, \qquad a=\frac{(r-q)^{\frac{1-r+q}{1+r-q}}(r-q+1)^2}{4}, \label{higher order constants11} \\
& b=(1+r-q)(r-q)^{-\frac{r-q}{1+r-q}} \bigg[ \sum_{j=1}^r \nu_j - \sum_{k=1}^{q} \mu_k \bigg]. \label{higher order constants12}
\end{align} 
These asymptotics have been extended to all orders in \cite{CLMMuttalib}, where it was shown that, for any $N \in \mathbb{N}$, there exist constants $C_{1},\ldots,C_{N} \in \mathbb{R}$ such that
\begin{align} \label{all order expansion}
\det \big( 1- \mathbb{K}|_{[0,s]}\big)= C \exp\bigg(  \hspace{-0.13cm}-as^{2\rho}+bs^{\rho} + c \ln s + \sum_{j=1}^{N} C_{j}s^{-j \rho} + \bigO \big( s^{-(N+1)\rho}\big) \bigg), \quad \mbox{as } s \to + \infty.
\end{align}
Furthermore, the following explicit expression for $c$ was obtained (see \cite[Remark 5.2]{CLMMuttalib}):
\begin{align} \label{higher order constants2}
c=\frac{r-q-1}{12(r-q+1)}-\frac{1}{2(r-q+1)} \bigg(  \sum_{j=1}^r \nu_j^2 - \sum_{k=1}^{q} \mu_k^2 \bigg).
\end{align}
Note that $C$ is a \textit{multiplicative} constant in \eqref{all order expansion}, which means that there is no precise description of the large gap asymptotics without its explicit expression. 

The problem of determining multiplicative constants in the asymptotics of Toeplitz, Hankel, and Fredholm determinants is a notoriously difficult problem in random matrix theory with a long history. The constant in the gap expansion of the sine kernel is known as the Dyson or Widom--Dyson constant (see e.g. \cite{DIKZ2007, EhrSine}); its occurrence in the asymptotic expansion was first established in 1973 by des Cloizeaux and Mehta \cite{dCM1973}, and its value was first determined (non-rigorously) by Dyson \cite{D1976} building on earlier work of Widom \cite{W1971}. Three different rigorous derivations of the Dyson constant were eventually presented in relatively rapid succession starting with the derivation by Krasovsky \cite{K2003} in 2004, followed by the derivations in \cite{EhrSine} and \cite{DIKZ2007}; these results also apply to certain related Toeplitz determinants.
Pioneering works where multiplicative constants are found for other determinants include \cite{DIK, BBD2008} for the Airy Fredholm determinant, \cite{Ehr, DKV2011} for the Bessel Fredholm determinant, and \cite{K2007, DIK FH} for Hankel and Toeplitz determinants with Fisher-Hartwig singularities. 



\medskip In this paper, we derive an explicit expression for the constant $C$ in \eqref{all order expansion} for general values of the parameters. The exact expression of $C$ combined with \eqref{higher order constants11}, \eqref{higher order constants12} and \eqref{higher order constants2} allows for an accurate description of the large gap asymptotics. The following theorem is our main result.

\newpage 

\begin{theorem}[Explicit expression for the constant $C$]\label{mainthm}
Let $r>q\ge0$ be fixed integers and let $\nu_1,\ldots, \nu_r,\mu_1,\ldots,\mu_q > -1$. The constant $C$ that appears in the asymptotic formula \eqref{preliminary expansion fredholm det} is given by
\begin{align}
C&=\frac{\prod_{j=1}^{r} G(1+\nu_j)\prod_{k=1}^{q} (2\pi)^{\frac{\mu_k}{2}}}{\prod_{k=1}^{q} G(1+\mu_k)\prod_{j=1}^{r} (2\pi)^{\frac{\nu_j}{2}}} \exp\big\{ (1-r+q)\zeta'(-1)  \big\}  \nonumber
\\
&\quad \times \exp \bigg\{  \bigg(\frac{1+r-q-(r-q)^2 }{2(1+r-q)} \bigg[\sum_{\ell=1}^{r}\nu_{\ell}^2-\sum_{k=1}^{q}\mu_k^2 \bigg] +\frac{-2+(r-q)^2(r-q-1)}{24(1+r-q)}  \nonumber
\\ 
& \qquad \quad   + \sum_{1\leq j<k\leq r}\nu_{j}\nu_{k}+ \sum_{1\leq j<k\leq q}\mu_{j}\mu_{k} - \sum_{j=1}^{r}\nu_{j} \sum_{k=1}^{q} \mu_{k}  +\sum_{k=1}^{q}\mu_k^2  \bigg)  \ln(r-q)\bigg\} \nonumber
\\
&\quad \times \exp\bigg\{ \bigg( -\frac{2-r+q}{2}\bigg[\sum_{\ell=1}^{r}\nu_{\ell}^2-\sum_{k=1}^{q}\mu_k^2\bigg] - \frac{(r-q-1)^2}{24} \nonumber \\
&\qquad \quad - \sum_{1\leq j<k\leq r}\nu_{j}\nu_{k}- \sum_{1\leq j<k\leq q}\mu_{j}\mu_{k} + \sum_{j=1}^{r}\nu_{j} \sum_{k=1}^{q} \mu_{k} -\sum_{k=1}^{q}\mu_k^2 \bigg) \ln(1+r-q) \bigg\},\label{constant C2}
\end{align}
where $G$ denotes Barnes' $G$-function and $\zeta'(-1)$ the derivative of Riemann's zeta function evaluated at $-1$.
\end{theorem}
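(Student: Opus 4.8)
The plan is to combine the Riemann--Hilbert (RH) steepest-descent analysis of \cite{CLMMuttalib} with two differential identities. Rescaling $x\mapsto sx$ turns $\mathbb{K}|_{[0,s]}$ into an integrable operator on $[0,1]$ with $s$ as a large parameter, hence into an associated RH problem $\Psi=\Psi(\,\cdot\,;s,\vec\nu,\vec\mu)$; I would take as input the Deift--Zhou analysis of \cite{CLMMuttalib}, in which the equilibrium problem is supported on a single interval, the global parametrix is built from Szeg\H{o}-type functions carrying the exponents $\nu_j,\mu_k$ at the endpoints, the hard-edge parametrix at $0$ is built from Meijer $G$-functions with indices $\vec\nu,\vec\mu$, the soft-edge parametrix is the Airy parametrix, and the error satisfies $R=I+\bigO(s^{-\rho})$ with a full asymptotic expansion. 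Together with \eqref{all order expansion} and \eqref{higher order constants11}--\eqref{higher order constants2}, this would be the starting point.

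The first identity, in $s$, expresses $\partial_s\ln\det(1-\mathbb{K}|_{[0,s]})$ through $\Psi$ (via the resolvent of the integrable operator, equivalently through subleading coefficients of $\Psi$ near $0$); substituting the parametrices reproduces $\partial_s(-as^{2\rho}+bs^{\rho}+c\ln s)+\bigO(s^{-\rho-1})$ and thus re-derives \eqref{preliminary expansion fredholm det}, but leaves $\ln C$ undetermined. The second identity, in the parameters, is the crux: for $\alpha\in\{\nu_1,\dots,\nu_r,\mu_1,\dots,\mu_q\}$ one writes $\partial_\alpha\ln\det(1-\mathbb{K}|_{[0,s]})=-\tr\big((1-\mathbb{K}|_{[0,s]})^{-1}\partial_\alpha\mathbb{K}|_{[0,s]}\big)$ and, using that the $\alpha$-dependence of the kernel \eqref{def kernels} enters only through $F$ in \eqref{def of F}, rewrites it as an explicit functional of $\Psi$ localized near $z=0$. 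Since $\ln\det(1-\mathbb{K}|_{[0,s]})-(-as^{2\rho}+bs^{\rho}+c\ln s)\to\ln C$ as $s\to+\infty$, uniformly on parameter-compacta together with its $\alpha$-derivatives, one passes to the limit $s\to+\infty$; inserting the hard-edge Meijer-$G$ parametrix and the endpoint data of the global parametrix then yields closed forms for $\partial_{\nu_j}\ln C$ and $\partial_{\mu_k}\ln C$. The Barnes $G$-function ratios and the $\zeta'(-1)$-term of \eqref{constant C2} would appear exactly here, via the classical asymptotics of $\ln G$ and of the $\Gamma$-products in the Meijer-$G$ parametrix.

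Integrating this PDE system determines $\ln C$ up to an additive function of $r$ and $q$, namely the value of $\ln C$ at $\nu_1=\dots=\nu_r=\mu_1=\dots=\mu_q=0$. This seed I would handle separately: specializing $\nu_r=\mu_q$ collapses a $\Gamma$-ratio in \eqref{def of F}, identifying the type-$(r,q)$ kernel with the type-$(r-1,q-1)$ kernel while keeping $\rho$ fixed, so that the seed for type $(r,q)$ equals the seed for type $(r-q,0)$; the case $(r,q)=(1,0)$, $\nu_1=0$ is then fixed by the known Bessel value $C=G(1)/(2\pi)^{0}=1$ coming from \eqref{K reduces to Bessel here}--\eqref{Asymptotics Bessel}, and the remaining seeds for type $(m,0)$, $m\ge2$, are obtained from a further limit transition of the Meijer-$G$ parametrix that lowers $m$ by one at the expense of explicit $\Gamma$-factors, whose Stirling asymptotics produce the surviving $\zeta'(-1)$, $\ln(r-q)$ and $\ln(1+r-q)$ contributions. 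Combining the integrated PDE solution with this seed gives \eqref{constant C2}.

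The main obstacle is the seed: extracting the purely $(r,q)$-dependent factor requires tracking the transcendental constants ($\zeta'(-1)$, equivalently Glaisher's constant) through the RH asymptotics without losing them, which is precisely where multiplicative constants have historically been hard to control; securing the uniformity of the $s\to+\infty$ limit in the parameter identities (differentiated in $\alpha$) is a secondary technical point.
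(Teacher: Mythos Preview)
Your parameter differential identities in $\nu_j,\mu_k$ mirror the paper's Sections~\ref{Section: differential identities} and~\ref{Section: Asymptotics nu mu}, and your use of the cancellation $\nu_r=\mu_q$ to reduce the seed from type $(r,q)$ to type $(r-q,0)$ is correct. The genuine gap is the step you yourself flag as the main obstacle: determining the seed constant for type $(m,0)$ with all $\nu_j$ equal and $m\ge 2$. Your proposal---``a further limit transition of the Meijer-$G$ parametrix that lowers $m$ by one''---is not an argument. Once all $\nu_j$ are fixed (say to $0$), the kernel \eqref{def kernels} has $F(z)=\Gamma(z)/\Gamma(1-z)^m$ with no remaining continuous parameter whose limit sends the type-$(m,0)$ process to the type-$(m-1,0)$ process; the integer $m$ is discrete, and no limit of the $\nu_j$ (to $-1$, to $+\infty$, or otherwise) produces such a reduction in any controlled way. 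This is precisely the point at which naive parameter deformation fails to reach the Bessel case, and it is where the difficulty of the multiplicative constant actually resides.

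The paper resolves this by \emph{making the exponent continuous}: it introduces the auxiliary family $\mathbb{K}_r$ with $F_r(z)=\Gamma(z)/\Gamma(1+\nu-z)^r$ for real $r\ge 1$ (see \eqref{def of Kr}--\eqref{def of Fr}), observes that the steepest-descent analysis of \cite{ClaeysGirSti} goes through verbatim for non-integer $r$ (only the constants $c_1,\dots,c_8$ change, Remark~\ref{remark: the case Fr}), derives a differential identity for $\partial_r\ln\det(1-\mathbb{K}_r|_{[0,s]})$, and integrates it in $r$ from the Bessel case $r=1$ to $r=r-q$ (Section~\ref{Section: Asymptotics r}, Proposition~\ref{prop: asymptotics of Kr}). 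The $\zeta'(-1)$ term and the purely $(r-q)$-dependent logarithms in \eqref{constant C2} arise from this $r$-integration, not from any Meijer-$G$ local parametrix. Incidentally, the RH problem used here is not the one you describe: it is the IIKS problem on $\gamma\cup\tilde\gamma$ from \cite{ClaeysGirSti} (on the Mellin side), with Airy parametrices at the two endpoints $b_1,b_2$ and no hard-edge Meijer-$G$ parametrix anywhere.
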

 
The proof of Theorem \ref{mainthm} will be given in Section \ref{Section: proof main thm} after considerable preparations have been carried out in Sections \ref{Section: preliminaries}-\ref{Section: Asymptotics nu mu}. 

The value of $C$ was previously known for some very particular choices of the parameters. In Section \ref{section: consistency check}, we recall these expressions and show that Theorem \ref{mainthm} is consistent with them. At the end of Section \ref{section: consistency check}, we also provide two numerical checks of Theorem \ref{mainthm} for certain values of the parameters.

\subsection{Applications in random matrix theory}\label{section: applications}
The Meijer-$G$ point process appears at the hard edge scaling limit of several random matrix ensembles as the size of the matrices becomes large. 
In what follows, we explain how Theorem \ref{mainthm}, combined with the results of \cite{ClaeysGirSti,CLMMuttalib}, can be used to obtain information on the large gap asymptotics at the hard edge for these models. 

\paragraph{The Cauchy--Laguerre multi-matrix model.} 
The Cauchy two-matrix model has been introduced in \cite{BGS2009} and is a model for two positive definite Hermitian matrices coupled in a chain. The probability density function is defined over all pairs $(M_{1},M_{2})$ of two $n \times n$ positive definite Hermitian matrices, and it takes the form
\begin{align}\label{density function matrices M1 M2}
\frac{1}{Z_{n}} \frac{\det(M_{1})^{\alpha_{1}}\det(M_{2})^{\alpha_{2}}}{\det(M_{1}+M_{2})^{n}}e^{-\mathrm{Tr}(V_{1}(M_{1})+V_{2}(M_{2}))}dM_{1}dM_{2},
\end{align}
where $Z_{n}$ is the normalization constant, the two scalar potentials $V_{1}(x)$, $V_{2}(x)$ grow sufficiently fast as $x \to + \infty$, and the parameters $\alpha_{1}$ and $\alpha_{2}$ satisfy $\alpha_{1} >-1$, $\alpha_{2}>-1$, and $\alpha_{1} + \alpha_{2} > -1$. The eigenvalues $x_{1},\ldots,x_{n}$ of $M_{1}$ together with the eigenvalues $y_{1},\ldots,y_{n}$ of $M_{2}$ form a two-level determinantal point process. The Meijer-$G$ kernel \eqref{Meijer G kernel} was first discovered by Bertola, Gekhtman and Szmigielski in \cite{BGS2014}, where they considered the special case of $V_{1}(x) = V_{2}(x) = x$, known as the Cauchy-Laguerre two-matrix model. 
If we consider the point process involving only the eigenvalues of $M_{1}$, the associated limiting kernel as $n \to + \infty$ in the hard edge scaling limit, denoted by $\mathcal{G}_{01}$ in \cite{BGS2014}, is given by $( \frac{x}{y} )^{\alpha_{1}} \mathbb{K}(x,y)$, where $\mathbb{K}$ is given by \eqref{Meijer G kernel} with $r=2$, $q = 0$, $\nu_{1} = \alpha_{1} + \alpha_{2}$ and $\nu_{2} = \alpha_{1}$. Letting $x_{\min} := \min  \{x_{1},\ldots,x_{n}\}$, we have (see \cite[Appendix]{ClaeysGirSti})
\begin{align}\label{limiting distri C2M}
\lim_{n \to + \infty} \mathbb{P}(x_{\min}>\tfrac{s}{n}) = \det \big( 1- \mathbb{K}|_{[0,s]}\big),
\end{align}
and we can obtain the tail behavior as $s  \to + \infty$ up to and including the constant for the right-hand-side of \eqref{limiting distri C2M} by combining \eqref{all order expansion}, \eqref{higher order constants11}, \eqref{higher order constants12}, \eqref{higher order constants2} and \eqref{constant C2} with $r=2$, $q = 0$, $\nu_{1} = \alpha_{1} + \alpha_{2}$ and $\nu_{2} = \alpha_{1}$. 

\medskip

The Cauchy two-matrix model has been generalized to an arbitrary number $r$ of matrices in \cite{BB2015}, and similar results have been obtained at the hard edge as in the case of two matrices. Large gap asymptotics up to and including the constant can also be obtained using \eqref{higher order constants11}, \eqref{higher order constants12}, \eqref{higher order constants2} and Theorem \ref{mainthm}, with $q=0$ but general values of $r$, $\nu_{1},\ldots,\nu_{r}$.

\paragraph{Products of Ginibre matrices.} A complex Ginibre matrix is a random matrix whose entries are independent and identically distributed complex Gaussian variables. Let $G_1,\ldots,G_r$ be independent complex standard Ginibre matrices of size $(n+\nu_j) \times (n+\nu_{j-1})$, where $\nu_{0} = 0$ and $\nu_{1},\ldots,\nu_{r}$ are non-negative integers, and consider the product $G_r \cdots G_1$. If $r=1$, the squared singular values are well-studied and behave statistically as the eigenvalues of the Laguerre Unitary Ensemble, which are distributed according to a determinantal point process whose limiting kernel as $n \to + \infty$ in the hard edge scaling limit is given by the Bessel kernel \eqref{Bessel kernel} with $\nu = \nu_{1}$. For general $r \geq 2$, it is known from Akemann, Kieburg and Wei \cite{AKW2013} that the squared singular values of $G_r \cdots G_1$ still form a determinantal point process, and from Kuijlaars and Zhang \cite{KZ2014} that the limiting kernel in the hard edge scaling limit is the Meijer-$G$ kernel \eqref{Meijer G kernel} with $q=0$ (see also \cite{AI2015} for a detailed overview). If $x_{\min}$ denotes the smallest squared singular value of $G_r \cdots G_1$, then the limit \eqref{limiting distri C2M} holds, and we can obtain the tail behavior as $s  \to + \infty$ up to and including the constant for the right-hand-side of \eqref{limiting distri C2M} by combining \eqref{all order expansion}, \eqref{higher order constants11}, \eqref{higher order constants12}, \eqref{higher order constants2} and \eqref{constant C2}, and setting $q=0$.

\paragraph{Products of truncated unitary matrices.} Let $U_{1},\ldots,U_{r}$ be $r$ independent Haar distributed unitary matrices of size $\ell_{1} \times \ell_{1},\ldots ,\ell_{r} \times \ell_{r}$, respectively, and let $T_{j}$ be the upper left $(n + \nu_{j})\times (n + \nu_{j-1})$ truncation of $U_{j}$, $j=1,\ldots,r$. The parameters $\ell_{1},\ldots,\ell_{r}$ are positive integers, $\nu_{0} = 0$ and $\nu_{1},\ldots,\nu_{r}$ are non-negative integers. Furthermore, assume that $\ell_{1} \geq 2n + \nu_{1}$ and $\ell_{j} \geq n + \nu_{j}+1$ for $j \geq 2$. The squared singular values of the product $T_{r}\cdots T_{1}$ behave statistically as a determinantal point process \cite{KKS2016}. Taking $n \to + \infty$, we simultaneously have to let $\ell_{j} \to + \infty$ for $j=1,\ldots,r$. We choose a subset $J$ of indices
\begin{align*}
J = \{j_{1},\ldots,j_{q}\} \subset \{2,\ldots,r\}, \qquad \mbox{with } 0  \leq q = |J| < r
\end{align*}
and integers $\mu_{1},\ldots,\mu_{q}$ with $\mu_{k} \geq \nu_{k}+1$, and assume that
\begin{align*}
& \ell_{j}-n \to + \infty, & & \mbox{for } j \in \{1,\ldots,r\}\setminus J, \\
& \ell_{j_{k}}-n = \mu_{k}, & & \mbox{for } j_{k} \in J.
\end{align*}
Let $x_{\min}$ denote the smallest squared singular value of $T_{r}\cdots T_{1}$. It follows from \cite[Theorem 2.8]{KKS2016} and \cite[Appendix]{ClaeysGirSti} that
\begin{align}\label{limiting distri product truncated unitary}
\lim_{n \to + \infty} \mathbb{P}(x_{\min}>\tfrac{s}{c_{n}}) = \det \big( 1- \mathbb{K}|_{[0,s]}\big), \qquad \mbox{with } \quad c_{n} = n \prod_{j \notin J}(\ell_{j}-n),
\end{align}
and as in \eqref{limiting distri C2M} we can obtain asymptotics as $s \to + \infty$ for \eqref{limiting distri product truncated unitary} up to and including the constant by combining \eqref{all order expansion}, \eqref{higher order constants11}, \eqref{higher order constants12}, \eqref{higher order constants2} and \eqref{constant C2}. Note that in the above two models of product random matrices, we only need to utilize Theorem \ref{mainthm} for integer values of the parameters.

\subsection{Consistency checks of Theorem \ref{mainthm} and numerical confirmations}\label{section: consistency check}
We provide three different consistency checks of Theorem \ref{mainthm}; the first two verify consistency with known results in the literature for special choices of the parameters (for the constants $\rho$, $a$, $b$ and $c$, these checks were already carried out in \cite{ClaeysGirSti,CLMMuttalib}), while the third verifies consistency under the transformation $(r,q) \to (r+1, q+1)$ whenever $\nu_{r+1} = \mu_{q+1}$. For ease of explanation, we sometimes indicate the dependence of $\rho$, $a$, $b$, $c$, $C$ on the parameters $r$, $\nu_{1},\ldots,\nu_{r}$, $q$ and $\mu_{1},\ldots,\mu_{q}$ explicitly, e.g. for $C$ we write
\begin{align*}
C\big( (r; \nu_{1},\ldots,\nu_{r}), (q; \mu_{1},\ldots,\mu_{q}) \big).
\end{align*}
We also provide numerical confirmations of Theorem \ref{mainthm} at the end of this section.

\paragraph{Consistency with the large gap asymptotics of the Bessel point process.} Since our proof of Theorem \ref{mainthm} uses the asymptotic formula (\ref{Asymptotics Bessel}) for the Bessel kernel determinant, see Section \ref{outlinesubsec}, our work does not provide an independent derivation of the constant factor in (\ref{Asymptotics Bessel}); nevertheless, it is instructive to verify that Theorem \ref{mainthm} is consistent with (\ref{Asymptotics Bessel}). 
We verify from \eqref{higher order constants11}, \eqref{higher order constants12}, \eqref{higher order constants2} and \eqref{constant C2} that
\begin{align*}
& \rho\big( (1;\nu),(0;-)\big) = \frac{1}{2}, & &  a\big( (1;\nu),(0;-)\big) = 1, & & b\big( (1;\nu),(0;-)\big) = 2\nu, & & c\big( (1;\nu),(0;-)\big) = -  \frac{\nu^{2}}{4},
\end{align*}
and
\begin{align*}
C\big( (1;\nu),(0;-)\big) = \frac{G(1+\nu)}{(2\pi)^{\frac{\nu}{2}}} 2^{- \frac{\nu^{2}}{2}},
\end{align*}
which is indeed consistent with \eqref{Asymptotics Bessel}. 

\paragraph{Consistency with known results for the Muttalib--Borodin ensembles.} The Muttalib--Borodin ensembles \cite{M1995} with a Laguerre weight are joint probability density functions of the form
\begin{equation}\label{density function Muttalib Borodin}
\frac{1}{Z_{n}} \prod_{1 \leq j < k \leq n} (x_{k}-x_{j})(x_{k}^{\theta}-x_{j}^{\theta}) \prod_{j=1}^{n} x_{j}^{\alpha}e^{-x_{j}}dx_{j},
\end{equation}
where the $n$ points $x_{1},\ldots,x_{n}$ belong to the interval $[0,+\infty)$, $Z_{n}$ is a normalization constant and $\theta > 0$ and $\alpha>-1$ are two parameters of the model. These points behave statistically as a determinantal point process whose hard edge limiting kernel $\mathbb{K}^{\mathrm{MB}}$ can be written in terms of Wright's generalized Bessel functions \cite{B1999}. The corresponding large gap asymptotics are of the form \cite{ClaeysGirSti}
\begin{align*}
\det \big( 1- \mathbb{K}^{\mathrm{MB}}|_{[0,s]}\big)= C^{\mathrm{MB}} \exp\Big( -a^{\mathrm{MB}}s^{2\rho^{\mathrm{MB}}}+b^{\mathrm{MB}}s^{\rho^{\mathrm{MB}}} + c^{\mathrm{MB}} \ln s+ \bigO \big( s^{-\rho^{\mathrm{MB}}}\big) \Big), \qquad \mbox{as } s \to + \infty.
\end{align*}
The constants $\rho^{\mathrm{MB}}$, $a^{\mathrm{MB}}$ and $b^{\mathrm{MB}}$ have been obtained in \cite{ClaeysGirSti}, and $c^{\mathrm{MB}}$ and $C^{\mathrm{MB}}$ in \cite{CLMMuttalib}. For $q=0$ and certain particular choices of the parameters $r$, $\nu_{1}, \ldots,\nu_{r}$, $\alpha$ and $\theta$, the kernels $\mathbb{K}$ and $\mathbb{K}^{\mathrm{MB}}$ define the same point process (up to rescaling), see \cite[Theorem 5.1]{KS2014}. More precisely, if $r \geq 1$ is an integer, $\alpha > -1$ and
\begin{align}
& \theta = \frac{1}{r}, & & \nu_{j} = \alpha + \frac{j-1}{r}, \qquad j = 1,\ldots,r, \label{cond parameters for relation first and third kernel}
\end{align}
then the kernels $\mathbb{K}$ and $\mathbb{K}^{\mathrm{MB}}$ are related by
\begin{align*}
\left( \frac{x}{y} \right)^{\alpha}\mathbb{K}(x,y) = r^{r} \mathbb{K}^{\mathrm{MB}}(r^{r}x,r^{r}y).
\end{align*}
Therefore, if the parameters satisfy \eqref{cond parameters for relation first and third kernel}, we obtain the following relations:
\begin{align}
& \rho = \rho^{\mathrm{MB}}, \qquad a = a^{\mathrm{MB}} r^{2r\rho}, \qquad b = b^{\mathrm{MB}} r^{r\rho}, \qquad c = c^{\mathrm{MB}}, \label{rho a b first and third kernel} \\
& C = r^{rc}C^{\mathrm{MB}}. \label{bigC first and third kernel}
\end{align}
The relations \eqref{rho a b first and third kernel} were already verified in \cite[Remark 5.2]{CLMMuttalib}, and we now verify that \eqref{bigC first and third kernel} holds. Let us explicitly write the dependence of $C^{\text{MB}}$ on $\theta$ and $\alpha$. From \cite[Theorem 1.1]{CLMMuttalib}, we have
\begin{align}
C^{\text{MB}}( \theta,\alpha )&=\frac{G(1+\alpha)}{(2\pi)^{\frac{\alpha}{2}}} \exp \big( d(1,\alpha) - d(\theta,\alpha) \big)  \exp\left( \frac{24 \alpha (\alpha +2)+15+3\theta + 4 \theta^{2}}{24(1+\theta)} \ln \theta \right) \nonumber
\\
& \quad \times \exp \left( \frac{6\alpha \theta - 6 \alpha (1+\alpha)-(\theta-1)^{2}}{12 \theta} \ln(1+\theta) \right), \label{MB big C}
\end{align}
where $d(\theta,\alpha)$ is a regularized sum, see \cite[Eq. (1.12)]{CLMMuttalib}. If $\theta$ is rational, it follows from \cite[Proposition 1.4]{CLMMuttalib} that $d(\theta,\alpha)$ can be expressed in terms of $\zeta'(-1)$ and Barnes' $G$-function evaluated at certain points. By specializing \cite[Proposition 1.4]{CLMMuttalib} for $\theta = \frac{1}{r}$, we obtain
\begin{align}
d(\tfrac{1}{r},\alpha) &=r\zeta'(-1) + \frac{1+(1+2\alpha)r}{4} \ln(2\pi)- \frac{1}{12}\left( 3+\frac{1}{r}+r+6\alpha(1+r + \alpha r) \right) \ln r \nonumber
\\ 
& \quad - \sum_{k = 1}^{r} \ln G \left( 1+\alpha + \frac{k}{r} \right). \label{d with theta=1/r}
\end{align}
Substituting \eqref{d with theta=1/r} into \eqref{MB big C} with $\theta=\frac{1}{r}$, we obtain after a direct computation that
\begin{align}
r^{rc}C^{\text{MB}}( \tfrac{1}{r},\alpha ) = & \; \frac{G(1+\alpha)\prod_{j=1}^{r}G\left(1+\alpha + \frac{k}{r}\right)}{G(2+\alpha)(2\pi)^{\frac{r-1+2r\alpha}{4}}} \exp\{ (1-r)\zeta'(-1)\} \nonumber \\
& \times \exp \left\{ \frac{1-r(5+12\alpha)+2r^{2}(1+6\alpha+6\alpha^{2})}{24(1+r)}\ln r \right\} \nonumber \\
& \times \exp \left\{ - \frac{1-2r(1+3\alpha)+r^{2}(1+6\alpha + 6 \alpha^{2})}{12 r}\ln(1+r) \right\}. \label{lol1}
\end{align}
On the other hand, by substituting the particular values of $\nu_{j}$ given by \eqref{cond parameters for relation first and third kernel} into \eqref{constant C2}, another long but straightforward computation shows that
\begin{align*}
C\big( (r;\alpha,\alpha+\tfrac{1}{r},\ldots,\alpha+\tfrac{r-1}{r}), (0,-) \big)
\end{align*}
is also given by the right-hand-side of \eqref{lol1}, which proves \eqref{bigC first and third kernel}.

\paragraph{Poles-zeros cancellation.} If one increases simultaneously $r$ and $q$ by $1$, with $\nu_{r+1}$ and $\mu_{q+1}$ such that $\nu_{r+1} = \mu_{q+1}$, it is easy to see from \eqref{def of F} that $F$ (and henceforth the kernel $\mathbb{K}$) remains unchanged. We verify directly from \eqref{constant C2} that
\begin{align*}
C\big( (r+1;\nu_{1},\ldots,\nu_{r},\nu_{r+1}),(q+1;\mu_{1},\ldots,\mu_{q},\nu_{r+1}) \big) = C\big( (r;\nu_{1},\ldots,\nu_{r}),(q;\mu_{1},\ldots,\mu_{q}) \big),
\end{align*}
which is consistent with this observation.

\paragraph{Numerical confirmations.} 
\begin{figure}[h]
\begin{tikzpicture}[master]
\node at (0,0) {};
\node at (0,-0.1) {\includegraphics[scale=0.35]{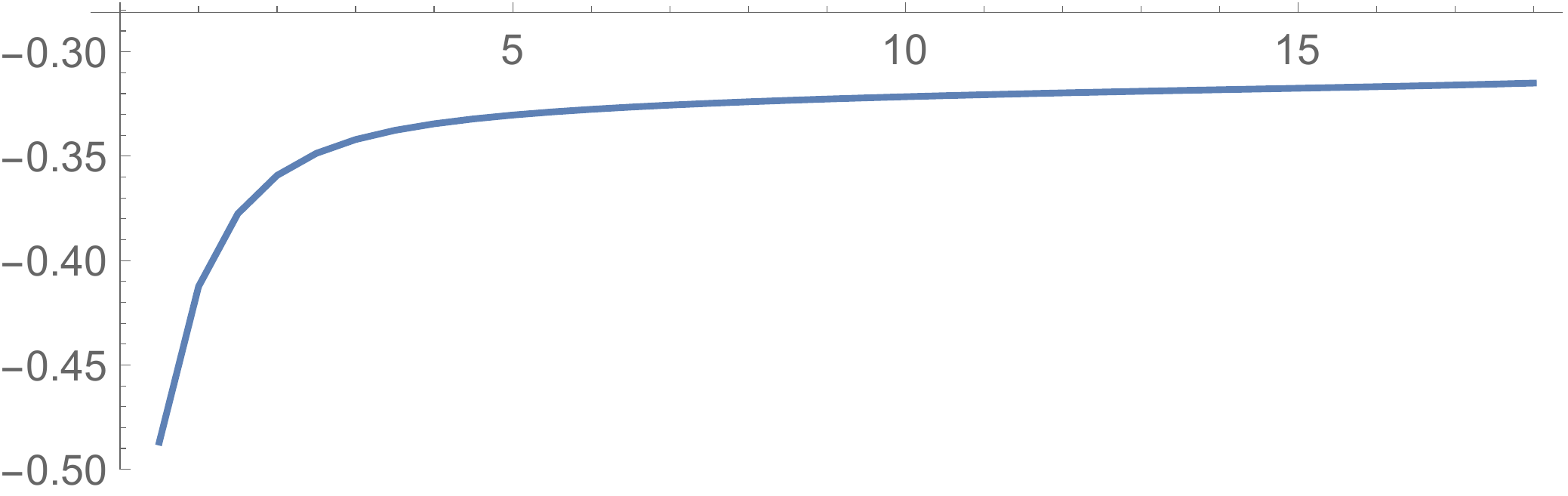}};
\end{tikzpicture}
\begin{tikzpicture}[slave]
\node at (0,0) {\includegraphics[scale=0.35]{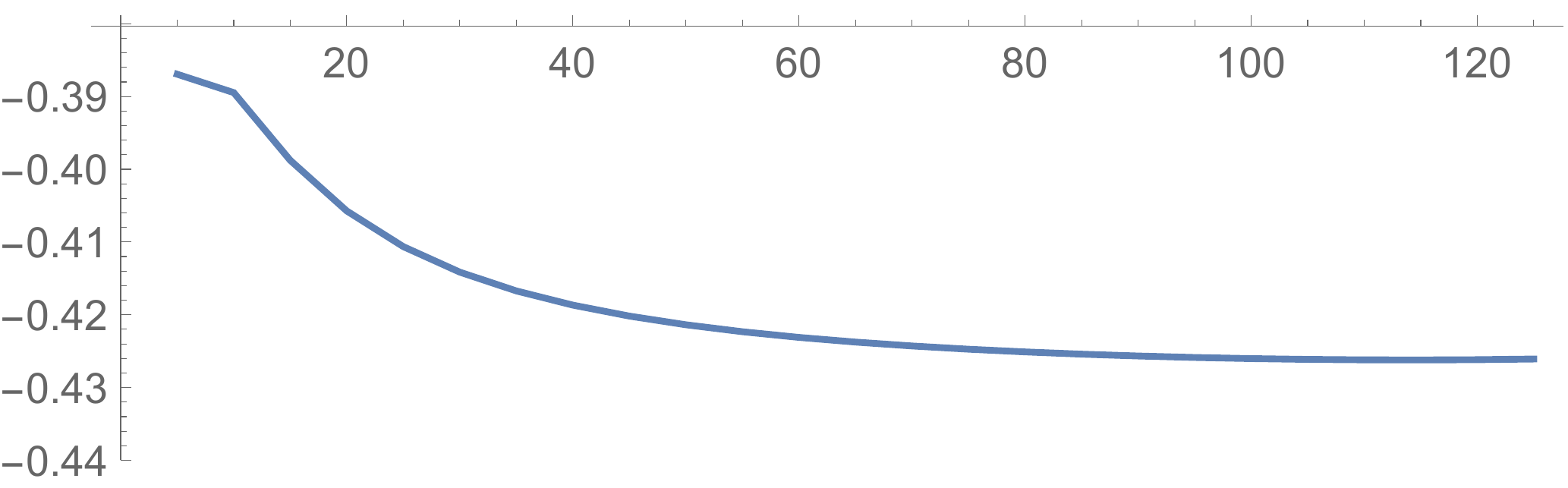}};
\end{tikzpicture}
\caption[]{\label{fig: num check}  Numerical confirmations of Theorem \ref{mainthm}.
}
\end{figure}
The all-order expansion \eqref{all order expansion} implies in particular that the 
function 
\begin{align}\label{function in num check}
s \mapsto s^{\rho}\Big(\ln \det \big( 1- \mathbb{K}|_{[0,s]}\big) -\Big[as^{2\rho}+bs^{\rho} + c \ln s + \ln C \Big] \Big)
\end{align} 
converges to a constant $C_{1}$ as $s \to + \infty$. Figure \ref{fig: num check} shows the graph of the function \eqref{function in num check} in the following two cases:
\begin{align*}
\mbox{Left: } \quad r=3, \quad q=2, \quad \nu_{1}=1.31, \quad \nu_{2} = 2.15, \quad \nu_{3} = 3.19, \quad \mu_{1}=1.87, \quad \mu_{2} = 2.61, \\
\mbox{Right: } \quad r=4, \quad q=1, \quad \nu_{1}=1.31, \quad \nu_{2} = 2.15, \quad \nu_{3} = 2.61, \quad \nu_{4} = 3.19, \quad \mu_{1}=1.87, 
\end{align*}
which correspond to the coefficients
\begin{align*}
& \mbox{Left: } & & \rho=\frac{1}{2}, & & a = 1, & & b=4.34, & & c\approx -1.551, & & \ln C \approx -2.963, \\
& \mbox{Right: } & & \rho=\frac{1}{4}, & & a = \frac{4}{\sqrt{3}}, & & b\approx 12.97, & & c\approx -2.437, & & \ln C \approx -10.097.
\end{align*}
For these two choices of the parameters, we observe in Figure \ref{fig: num check} that the function \eqref{function in num check} seems indeed to converge to a constant as $s \to + \infty$. These observations support the validity of \eqref{higher order constants11}, \eqref{higher order constants12}, \eqref{higher order constants2} and Theorem \ref{mainthm}, and even slightly more than that, since they also suggest that the first error term in the large $s$ asymptotics of $\ln \det \big( 1- \mathbb{K}|_{[0,s]})$ is of order $s^{-\rho}$, which is consistent with \eqref{all order expansion}. These numerical checks have been made using the linear algebra package for Fredholm determinants of Bornemann \cite{Bornemann}.

\subsection{Outline of the proof}\label{outlinesubsec}
The expressions \eqref{higher order constants11}, \eqref{higher order constants12} and \eqref{higher order constants2} for the coefficients $\rho$, $a$, $b$ and $c$, as well as the all-order expansion \eqref{all order expansion}, were obtained in \cite{ClaeysGirSti,CLMMuttalib} via a method that we briefly explain here. There is a standard procedure, named after Its, Izergin, Korepin and Slavnov (IIKS) \cite{IIKS1990}, which expresses the logarithmic derivative\footnote{The derivative can be taken with respect to any given parameter of the associated kernel, as long as the kernel depends smoothly on this parameter.} of the Fredholm determinant of an integrable kernel in terms of the solution of a Riemann--Hilbert (RH) problem. The kernel $\mathbb{K}$ given in \eqref{def kernels} is not integrable (in general), but it was shown in \cite{ClaeysGirSti} that the IIKS procedure can still be applied (this fact is far from obvious---it is based on ideas from \cite{BC2012b,BC2012} and uses the Mellin transform, see \cite{ClaeysGirSti} for details). Using the IIKS procedure, the authors of \cite{ClaeysGirSti} were able to express
\begin{align}
\partial_s \ln \det \big( 1- \mathbb{K}|_{[0,s]}\big) \label{lol2}
\end{align}
in terms of the solution $Y$ of a $2\times 2$ matrix Riemann--Hilbert (RH) problem. A Deift/Zhou \cite{DZ1993} nonlinear steepest descent analysis of this RH problem yields an all-order expansion of $Y$ as $s \to + \infty$  \cite{ClaeysGirSti, CLMMuttalib}, and hence also of \eqref{lol2}. The constants $\rho$, $a$, $b$ and $c$ of \eqref{higher order constants11}, \eqref{higher order constants12} and \eqref{higher order constants2} were then obtained after integrating in $s$ the large $s$ asymptotics of \eqref{lol2}. However, with this method, $C$ appears as an integration constant, and therefore remains undetermined.

\medskip The considerations above show that the determination of $C$ is an essentially different problem than finding the other constants in the large $s$ asymptotics of $\det ( 1- \mathbb{K}|_{[0,s]})$. The general strategy of the present work consists of using different, and more complicated, differential identities than \eqref{lol2}. Similar strategies were used in some of the works mentioned below (\ref{higher order constants2}), and also in \cite{ErcMcL, BleIts} (although in \cite{ErcMcL, BleIts} the multiplicative constants were not determined). 

\medskip As mentioned, we need to use different differential identities than \eqref{lol2} to evaluate $C$ (i.e. differential identities with respect to other parameters than $s$). The large gap asymptotics for the Bessel point process is known up to and including the constant (see \eqref{Asymptotics Bessel}), so the idea is to find a path in the set of parameters $r$, $q$, $\nu_{1},\ldots,\nu_{r}$, $\mu_{1},\ldots,\mu_{q}$ which interpolates smoothly between the Bessel kernel and $\mathbb{K}$. The existence of such a path is a priori not clear, since the parameters $r$ and $q$ are integers. The simple, but central idea of this paper is to first set the parameters $\nu_{1},\ldots,\nu_{r},\mu_{1},\ldots,\mu_{q}$ associated to $\mathbb{K}$ equal to $\nu_{\min}$, where
\begin{align*}
\nu_{\min}=\min \{ \nu_1,\ldots,\nu_r,\mu_{1},\ldots,\mu_{q} \},
\end{align*}
and ``smooth" the product of Gamma functions in (\ref{def of F}) by considering the following kernel:
\begin{align} \label{def of Kr}
\mathbb{K}_r(x,y) = \int_{\gamma}\frac{du}{2\pi i} \int_{\tilde{\gamma}}\frac{dv}{2\pi i} \frac{F_{r}(u)}{F_{r}(v)} \frac{x^{-u}y^{v-1}}{v-u},
\end{align}
where $r\geq 1$ and $\nu > -1$ are real-valued parameters and
\begin{align} \label{def of Fr}
F_r(z)=\frac{\Gamma(z)}{\Gamma(1+\nu-z)^r},
\end{align}
and where we choose the branch such that $F_{r}(z)$ is analytic for $z \in \mathbb{C}\setminus (-\infty,0]$. Starting with $r = 1$ (note that $\mathbb{K}_r$ reduces to the Bessel kernel for $r=1$), we first increase $r$ from $1$ to $r-q$ and then successively move each of the remaining parameters from $\nu_{\min}$ to its desired value. The process relies on the successive integration of appropriate differential identities for the following quantities:
\begin{align}
& \partial_r \ln \det \big( 1- \mathbb{K}_r|_{[0,s]}\big), \label{first diff identity outline} \\
& \partial_{\nu_\ell} \ln \det \big( 1- \mathbb{K}|_{[0,s]}\big), \quad \ell \in \{1,\ldots,r\}, \label{second diff identity outline} \\
& \partial_{\mu_\ell} \ln \det \big( 1- \mathbb{K}|_{[0,s]}\big), \quad \ell \in \{1,\ldots,q\}. \label{third diff identity outline}
\end{align}
More precisely, let us define $\mathbb{K}^{(\ell)}$, $\ell \in \{0,1,\ldots,r, r+1,\ldots, r+q \}$ by
\begin{align}
\mathbb{K}^{(\ell)} = \begin{cases}
\mathbb{K}|_{\nu_{\ell+1}=...=\nu_{r}=\mu_{1}=...=\mu_{q} = \nu_{\min}}, & \mbox{if } \ell \in \{0,...,r-1\}, \\
\mathbb{K}|_{\mu_{\ell-r+1}=...=\mu_{q} = \nu_{\min}}, & \mbox{if } \ell \in \{r,...r+q\}.
\end{cases} \label{def of K ell}
\end{align}
Note that $\mathbb{K}^{(0)} = \mathbb{K}_{r-q}$, where $\mathbb{K}_{r-q}$ is defined by \eqref{def of Kr} (with $r$ replaced by $r-q$ and $\nu$ by $\nu_{\min}$), and $\mathbb{K}^{(r+q)} = \mathbb{K}$. By integrating successively \eqref{first diff identity outline}, \eqref{second diff identity outline} and \eqref{third diff identity outline}, we obtain
\begin{align}
& \ln \det \big( 1- \mathbb{K}_{r-q}|_{[0,s]}\big) = \ln \det \big( 1- \mathbb{K}_{r=1}|_{[0,s]}\big) + \int_{1}^{r-q} \partial_{r'} \ln \det \big( 1- \mathbb{K}_{r'}|_{[0,s]}\big) dr', \label{first integration outline} \\
& \ln \det \big( 1- \mathbb{K}^{(\ell)}|_{[0,s]}\big) = \ln \det \big( 1- \mathbb{K}^{(\ell-1)}|_{[0,s]}\big) + \int_{\nu_{\min}}^{\nu_{\ell}} \partial_{\nu_{\ell}'} \ln \det \big( 1- \mathbb{K}^{(\ell)}|_{[0,s]}\big) d\nu_{\ell}', \quad \ell = 1,...,r, \nonumber \\
& \ln \det \big( 1- \mathbb{K}^{(r+\ell)}|_{[0,s]}\big) = \ln \det \big( 1- \mathbb{K}^{(r+\ell-1)}|_{[0,s]}\big) + \int_{\nu_{\min}}^{\mu_{\ell}} \partial_{\mu_{\ell}'} \ln \det \big( 1- \mathbb{K}^{(r+\ell)}|_{[0,s]}\big) d\mu_{\ell}', \quad \ell = 1,...,q. \nonumber
\end{align}
Since the large $s$ asymptotics of $\ln \det \big( 1- \mathbb{K}_{r=1}|_{[0,s]}\big)$ are known up to and including the constant term, see \eqref{Asymptotics Bessel}, this method allows us to obtain $C$ by keeping track of the term of order $1$ in the identities \eqref{first integration outline}. 

\begin{remark}\label{focusremark}
In this work, we focus on proving the expression \eqref{constant C2} for $C$. We could also have computed the coefficients $\rho$, $a$, $b$ and $c$ with the same method (this would have provided an alternative proof and another consistency check for these constants), but these constants are already known \cite{ClaeysGirSti,CLMMuttalib}, and in order to limit the complexity and length of the paper, we have decided to not pursue this direction.
\end{remark}

\subsection{Organization of the paper}
By employing the IIKS procedure, we will express the quantities \eqref{first diff identity outline}, \eqref{second diff identity outline} and \eqref{third diff identity outline} in terms of $Y$. 
Since we use the same RH problem as in \cite{ClaeysGirSti, CLMMuttalib}, we can recycle some of the analysis of these papers. We present the necessary material from \cite{ClaeysGirSti, CLMMuttalib} in Section \ref{Section: preliminaries}. In Section \ref{Section: differential identities}, we express the quantities \eqref{first diff identity outline}, \eqref{second diff identity outline} and \eqref{third diff identity outline} in terms of $Y$. Section \ref{Section: Asymptotics r} is devoted to the first differential identity \eqref{first diff identity outline}. More precisely, we compute the large $s$ asymptotics of \eqref{first diff identity outline} up to and including the constant term and then perform the integration with respect to $r'$ in \eqref{first integration outline}. In Section \ref{Section: Asymptotics nu mu}, we proceed similarly with the differential identities with respect to $\nu_{i}$, $i=1,...,r$ and $\mu_{j}$, $j=1,...q$. The proof of Theorem \ref{mainthm} is completed in Section \ref{Section: proof main thm}.


\subsection*{Acknowledgements}
Support is acknowledged from the European Research Council, Grant Agreement No. 682537, the Swedish Research Council, Grant No. 2015-05430, the G\"oran Gustafsson Foundation, and the Ruth and Nils-Erik Stenb\"ack Foundation.

\section{Background from \cite{ClaeysGirSti,CLMMuttalib}} \label{Section: preliminaries}
In this section we recall some results from \cite{ClaeysGirSti,CLMMuttalib} that will be used throughout the paper. The notation adopted in this paper is the same as in \cite{CLMMuttalib}, and is also almost identical to the notation used in \cite{ClaeysGirSti}. The only difference with \cite{ClaeysGirSti} is that the function denoted by $\mathcal{G}$ in this paper and in \cite{CLMMuttalib} is instead denoted by $G$ in \cite{ClaeysGirSti}. In this paper, as well as in \cite{CLMMuttalib}, $G$ denotes Barnes' $G$-function.\footnote{If $G$ has subscripts and superscripts, such as $G_{p,q}^{m,n}$, then it denotes the Meijer $G$-function.} 

We start by stating the RH problem for $Y$. 

\subsubsection*{RH problem for $Y$}
\begin{itemize}
\item[(a)] $Y: \C \setminus (\gamma \cup \tilde{\gamma}) \to \mathbb{C}^{2 \times 2}$ is analytic, where $\gamma$ and $\tilde{\gamma}$ are the contours shown in Figure \ref{fig: gamma and gammatilde}.
\item[(b)] $Y$ has continuous boundary values $Y_+$ and $Y_-$ on $\gamma \cup \tilde{\gamma}$ from the left ($+$) side and right ($-$) side of $\gamma \cup \tilde{\gamma}$, respectively, and obeys the jump relations
\begin{align}
Y_+(z)&=Y_-(z) \begin{pmatrix}
1 & -s^{-z}F(z) \\ 0 & 1
\end{pmatrix},& & \mbox{ if $z \in \gamma$}, \label{jumps for Y on gamma}
\\
Y_+(z)&=Y_-(z) \begin{pmatrix}
1 &  0 \\ s^{z}F(z)^{-1} & 1
\end{pmatrix},& & \mbox{ if $z \in \tilde{\gamma}$}, \label{jumps for Y on gamma tilde}
\end{align}
where $F$ is defined as in \eqref{def of F}.
\item[(c)] $Y$ admits an expansion of the form
\begin{align}\label{eq: expansion of Y}
Y(z) = I +\frac{Y_1(s)}{z} + \bigO\big( z^{-2} \big), \qquad \mbox{as } z \to \infty,
\end{align}
where $Y_1$ is a $2 \times 2$ matrix that depends on $s$ but not on $z$.
\end{itemize}
It is shown in \cite{ClaeysGirSti} that the solution of the RH problem for $Y$ exists and is unique for all $s \in \R$. In the steepest descent analysis of the RH problem, the authors of \cite{ClaeysGirSti} introduce a sequence of transformations $Y\mapsto U \mapsto T \mapsto S \mapsto R$, where $R$ is the solution of a small norm RH problem. We only recall here what is needed for the proof of Theorem \ref{mainthm}, and refer to \cite{ClaeysGirSti, CLMMuttalib} for more details. 
Let us choose the branch for $\ln F$ such that
\begin{align}\label{def of ln F}
\ln F(z)= \ln \Gamma(z) + \sum_{k=1}^{q} \ln \Gamma(1+\mu_k-z) - \sum_{j=1}^{r} \ln \Gamma(1+\nu_j-z),
\end{align}
and the branches on the right-hand-side of \eqref{def of ln F} are the principal ones. Then $z \mapsto \ln F(z)$ is analytic for $z \in \mathbb{C}\setminus \big( (-\infty,0] \cup [1+\nu_{\min},+\infty) \big)$. The first transformation $Y \mapsto U$ involves the change of variables
\begin{align*}
z(\zeta)=is^\rho \zeta + \frac{1+\nu_{\min}}{2},
\end{align*}
where we recall that $\nu_{\min}=\min \{ \nu_1,\ldots,\nu_r,\mu_{1},\ldots,\mu_{q} \}$. The asymptotics of $\ln F(z(\zeta))$ as $s^\rho \zeta \to \infty$ are 
\begin{align*}
\ln F\bigg(is^\rho \zeta + \frac{1+\nu_{\min}}{2} \bigg) &= is^\rho \zeta \ln(s) + is^\rho \big(c_1 \zeta \ln(i\zeta) + c_2 \zeta \ln (-i\zeta) +c_3 \zeta  \big)
\\
& \quad + c_4 \ln (s) + c_5 \ln (i\zeta)+c_6 \ln (-i\zeta) + c_7 +\frac{c_8}{i s^\rho \zeta} + \bigO \bigg( \frac{1}{s^{2\rho}\zeta^2} \bigg),
\end{align*}
where the constants $c_1,\ldots,c_8$ are given by
\begin{align} \label{eq: values of constants cj}
& c_1=1, \qquad  c_2 =r-q, \qquad c_3=-(r-q+1) \qquad c_4=\frac{\nu_{\min}}{2}+\frac{1}{1+r-q}\Bigg(\sum_{k=1}^{q} \mu_k - \sum_{j=1}^{r} \nu_j \Bigg), \nonumber \\
& c_5=\frac{\nu_{\min}}{2}, \qquad c_6 =(r-q) \frac{\nu_{\min}}{2} + \sum_{k=1}^{q} \mu_k - \sum_{j=1}^{r} \nu_j, \qquad c_7=\frac{1+q-r}{2} \ln(2\pi), \nonumber
\\
 & c_8 = \frac{1+r-q}{8} \bigg( \nu_{\min}^2- \frac{1}{3}\bigg) -\frac{1}{2}\Bigg(\sum_{k=1}^{q} \mu_k^2 - \sum_{j=1}^{r} \nu_j^2 \Bigg) +\frac{\nu_{\min}}{2} \Bigg(\sum_{k=1}^{q} \mu_k - \sum_{j=1}^{r} \nu_j \Bigg).
\end{align}
We define the function $\mathcal{G}$ by
\begin{align} \label{def of G}
\mathcal{G}(\zeta)=F\bigg( is^\rho \zeta + \frac{1+\nu_{\min}}{2} \bigg) e^{-is^\rho(\zeta \ln s - h(\zeta) )},
\end{align}
where $h(\zeta)= -\zeta (c_1 \ln (i\zeta)+c_2 \ln(-i\zeta)+c_3)$. Note that the function $\mathcal{G}(\zeta)$ also depends on the parameters $s$, $r$, $q$, $\nu_j$, and $\mu_k$, although this is not indicated in the notation. 
The $T \mapsto S$ transformation utilizes a $g$-function $\zeta \mapsto g(\zeta)$ which is analytic for $\zeta \in \mathbb{C}\setminus \Sigma_{5}$, where $\Sigma_{5}$ is the union of two segments
\begin{align}
\Sigma_5 = [b_1,0] \cup [0,b_2],
\end{align}
oriented from left to right, see Figure \ref{fig: Sigma5}. The points $b_{1}$ and $b_{2}$ which characterize $\Sigma_{5}$ are defined by
\begin{align} \label{b1b2def}
b_2 =-\overline{b_1} = |b_2| e^{i\phi}, \quad \phi \in \big[0,\tfrac{\pi}{2} \big),
\end{align}
where
\begin{align*}
\re b_2 &= -\re b_1 = 2\bigg(\frac{c_2}{c_1}\bigg)^{\frac{c_1-c_2}{2(c_1+c_2)}} e^{-\frac{c_1+c_2+c_3}{c_1+c_2}}= 2(r-q)^{\frac{1-r+q}{2(1+r-q)}},
\\
\sin \phi &= \frac{c_2-c_1}{c_2+c_1}=\frac{r-q-1}{r-q+1} \in [0,1).
\end{align*}
The $g$-function is defined via its second derivative given by
\begin{align} \label{def of g''}
g''(\zeta)=-i\frac{c_1+c_2}{2} \bigg( \frac{1}{\zeta} -\frac{1}{r(\zeta)}+ \frac{i \im b_2}{\zeta r(\zeta)} \bigg),
\end{align}
where $\zeta \mapsto r(\zeta)$ is analytic for $\zeta \in \mathbb{C}\setminus \Sigma_{5}$ and is defined by
\begin{align}
r(\zeta)= \sqrt{(\zeta-b_1)(\zeta-b_2)},
\end{align}
where the branch is fixed such that $r(\zeta) \sim \zeta$ as $\zeta \to \infty$. For $\zeta \in \Sigma_{5}$, one has $r_{+}(\zeta) + r_{-}(\zeta) = 0$. It can be shown that $g''(\zeta) = \bigO(\zeta^{-3})$ as $\zeta \to \infty$, and then $g'$ and $g$ are defined by
\begin{align}
g'(\zeta) = \int_{\infty}^{\zeta} g''(\xi) d\xi, \qquad g(\zeta) = \int_{\infty}^{\zeta} g'(\xi) d\xi,
\end{align}
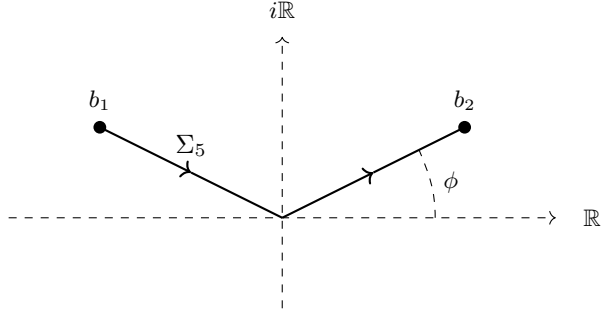
\begin{figure}
	\begin{center}
		\begin{tikzpicture}[scale=1.2]
		
		\draw[dashed,->] (5,-1) -- (5,2);
		\draw[dashed,->] (2,0) -- (8,0);	
		\node at (5,2.3) {\small $i \mathbb{R}$};
		\node at (8.4,0) {\small $\mathbb{R}$};
		
		\draw[thick,->] (3,1) -- (4,0.5);
		\draw[thick] (4,0.5) -- (5,0);	
		\draw[thick,->] (5,0) -- (6,0.5);	
		\draw[thick] (6,0.5) -- (7,1);	
		\node at (4,0.8) {$\Sigma_5$};
		
		\draw[dashed] (6.677,0) to [out=90,in=-67.5] (6.5,0.75);
		\node at (6.85,0.4) {\small $\phi$};	
		
		\fill (3,1) circle (0.07cm);
		\node at (3,1.3) {\small $b_1$};
		\fill (7,1) circle (0.07cm);
		\node at (7,1.3) {\small $b_2$};		
		\end{tikzpicture}
		\caption{The points $b_2= |b_2|e^{i\phi}$ and $b_1=-\overline{b_2}$ and the contour $\Sigma_5=[b_1,0]\cup [0,b_2]$.} \label{fig: Sigma5}
	\end{center}
\end{figure}
where the path of integration lies in $\C \setminus \Sigma_5$. The RH problem for $S$ has exponentially decaying jumps outside $\Sigma_{5}$, and one needs to construct approximations to $S$ in different regions of the complex plane. Let $\mathbb{D}_{\delta}(b_{1})$ and $\mathbb{D}_{\delta}(b_{2})$ denote two disks of sufficiently small radius $\delta > 0$ centered at $b_{1}$ and $b_{2}$, respectively.  For $\zeta \in \mathbb{C}\setminus \big( \mathbb{D}_{\delta}(b_{1}) \cup \mathbb{D}_{\delta}(b_{2}) \big)$, $S$ is approximated by a so-called global parametrix $P^{(\infty)}$, while for $\zeta \in \mathbb{D}_{\delta}(b_{1}) \cup \mathbb{D}_{\delta}(b_{2})$, $S$ is approximated by local parametrices $P$ that are defined in terms of Airy functions. We omit the exact definition of $P$ here, the interested reader can find it in \cite{ClaeysGirSti,CLMMuttalib}. The construction of $P^{(\infty)}$ is given in terms of a function $p$ that will be important for us and which is defined by
\begin{align}\label{def of p}
p(\zeta) &= -\frac{r(\zeta)}{2 \pi i} \int_{\Sigma_5} \frac{\ln \mathcal{G}(\xi)}{r_+(\xi)} \frac{d\xi }{\xi-\zeta},
\end{align}
where the branch of $\ln \mathcal{G}$ is such that
\begin{align*}
\ln \mathcal{G}(\zeta) =\ln F\bigg( is^\rho \zeta + \frac{1+\nu_{\min}}{2} \bigg) -is^\rho(\zeta \ln s - h(\zeta) ).
\end{align*}
The function $p$ has the following jumps across $\Sigma_5$:
\begin{align} \label{preliminary jumps}
p_+(\zeta) + p_-(\zeta) = - \ln \mathcal{G}(\zeta), \qquad \zeta \in \Sigma_5,
\end{align}
and the following asymptotics as $\zeta \to \infty$:
\begin{align*}
& p(\zeta) = p_0 + \bigO (\zeta^{-1}), \qquad \mbox{ where } \qquad  p_0 = \frac{1}{2 \pi i} \int_{\Sigma_5} \frac{\ln \mathcal{G}(\mathcal{\xi})}{r_+(\xi)} d\xi.
\end{align*}
The global parametrix $P^{(\infty)}(\zeta)$ is defined by \cite[Eq. (3.51)]{ClaeysGirSti}
\begin{align} \label{def of Pinfty}
P^{(\infty)}(\zeta)= e^{-p_0 \sigma_3} Q^{(\infty)}(\zeta) e^{p(\zeta) \sigma_3} \quad \text{with}\quad   Q^{(\infty)}(\zeta) = \frac{1}{2} \begin{pmatrix}
\frac{\gamma(\zeta)+\gamma(\zeta)^{-1}}{2} & \frac{\gamma(\zeta)-\gamma(\zeta)^{-1}}{2i}
\\
\frac{\gamma(\zeta)-\gamma(\zeta)^{-1}}{-2i} & \frac{\gamma(\zeta)+\gamma(\zeta)^{-1}}{2}
\end{pmatrix},
\end{align}
where the branch of the function
\begin{align*}
\gamma(\zeta)=\bigg( \frac{\zeta-b_1}{\zeta-b_2} \bigg)^{\frac{1}{4}}
\end{align*}
is chosen such that $\gamma(\zeta)$ is analytic on $\C \setminus \Sigma_5$ and $\gamma(\zeta) \sim 1$ as $\zeta \to \infty$.
The solution of the RH problem for $R$ is then given by \cite{ClaeysGirSti} (see also \cite[Eq. (2.32)]{CLMMuttalib})
\begin{align}\label{def of R}
R(\zeta) = e^{p_0\sigma_3}S(\zeta) \times \begin{cases} P(\zeta)^{-1} e^{-p_0 \sigma_3}, \quad &\zeta \in \mathbb{D}_{\delta}(b_1) \cup\mathbb{D}_{\delta}(b_2),  \\
P^{(\infty)}(\zeta)^{-1} e^{-p_0 \sigma_3},\quad &\zeta \in \Gamma_R\setminus(\mathbb{D}_{\delta}(b_1) \cup\mathbb{D}_{\delta}(b_2)).
\end{cases}
\end{align}
Let $\{\Sigma_j\}_1^4$ denote the contours
\begin{align*}
\Sigma_2=-\overline{\Sigma_1}= b_2 + e^{i(\phi+\epsilon)} \R_{\ge 0}, \qquad \Sigma_4=-\overline{\Sigma_3}= b_2 + e^{-i\epsilon} \R_{\ge 0}
\end{align*}
for some fixed $\epsilon \in (0,\frac{\pi}{10})$, with the orientation from left to right, and define
\begin{align} \label{def of tilde Sigmaj}
\tilde{\Sigma}_j = \Sigma_j \setminus (\mathbb{D}_\delta (b_1)\cup \mathbb{D}_\delta (b_2)), \qquad j=1,...,5.
\end{align}
The function $R$ defined in \eqref{def of R} is analytic for $\zeta \in \mathbb{C}\setminus \Gamma_{R}$, where
\begin{align*}
\Gamma_R = \big(\partial \mathbb{D}_\delta (b_1)\cup \partial \mathbb{D}_\delta (b_2)\big) \cup \bigcup_{j=1}^5 \tilde{\Sigma}_j,
\end{align*}
and $\partial \mathbb{D}_\delta (b_1)$ and $\partial \mathbb{D}_\delta (b_2)$ are oriented clockwise, see Figure \ref{fig: GammaR}.
\begin{figure}
	\begin{center}
		\begin{tikzpicture}[scale=1.2]
		
		\draw[dashed,->] (5,-1) -- (5,2);
		\draw[dashed,->] (2,0) -- (8,0);	
		\node at (5,2.3) {\small $i \mathbb{R}$};
		\node at (8.4,0) {\small $\mathbb{R}$};
		
		\draw[thick,->] (3,1) -- (4,0.5);
		\draw[thick] (4,0.5) -- (5,0);	
		\draw[thick,->] (5,0) -- (6,0.5);	
		\draw[thick] (6,0.5) -- (7,1);

		\draw[thick] (3,1) -- (1.5,0.75);
		\draw[thick,->] (0,0.5) -- (1.5,0.75);	
		\draw[thick,->] (7,1) -- (8.5,0.75);	
		\draw[thick] (8.5,0.75) -- (10,0.5);

		\draw[thick] (3,1) -- (1.5,2.5);
		\draw[thick,->] (0.5,3.5) --  (1.5,2.5);	
		\draw[thick,->] (7,1) -- (8.5,2.5);	
		\draw[thick] (8.5,2.5) -- (9.5,3.5);

		\draw[thick, fill= white, 
		decoration={markings, mark=at position 0.125 with {\arrow{<}}},
		postaction={decorate}
		] (3,1) circle (0.5cm);
		\draw[thick, fill= white, 
		decoration={markings, mark=at position 0.4 with {\arrow{<}}},
		postaction={decorate}] (7,1) circle (0.5cm);
		
		\fill (3,1) circle (0.07cm);
		\node at (3,1.2) {\small $b_1$};
		\fill (7,1) circle (0.07cm);
		\node at (7,1.2) {\small $b_2$};		
		\end{tikzpicture}
		\caption{The contour $\Gamma_R$.} \label{fig: GammaR}
	\end{center}
\end{figure}
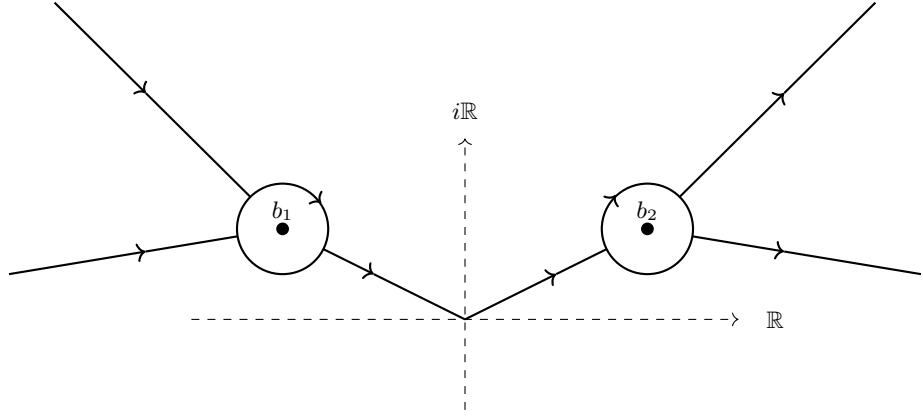
\begin{remark}\label{remark: the case Fr}
The jumps for $Y$ depend on $F$. If $F$ is replaced by $F_{r}$ in \eqref{jumps for Y on gamma}-\eqref{jumps for Y on gamma tilde}, then the steepest descent analysis of $Y$ has not been carried out in \cite{ClaeysGirSti}. However, it is not hard to see that the same analysis applies also in this case (with $\nu_{\min} = \nu$); the only difference is that the coefficients \eqref{eq: values of constants cj} are replaced by
\begin{align}
& c_1=1, \qquad c_2 =r, \qquad c_3=-(r+1), \qquad c_4=\nu \bigg(\frac{1}{2}-\frac{r}{1+r}\bigg), \nonumber \\
& c_5=\frac{\nu}{2}, \qquad c_6=-\frac{r\nu}{2}, \qquad c_7=\frac{1-r}{2} \ln(2\pi), \qquad c_8= \frac{1+r}{8} \bigg( \nu^2- \frac{1}{3}\bigg).  \label{eq: values of constants cj Fr}
\end{align}
\end{remark}

\section{Differential identities in $r$, $\nu_\ell$, and $\mu_\ell$} \label{Section: differential identities}
In this section, we express the logarithmic derivatives \eqref{first diff identity outline}, \eqref{second diff identity outline} and \eqref{third diff identity outline} in terms of the RH problem for $Y$ via the IIKS procedure. By combining \cite[Propositions 2.1 and 2.2]{ClaeysGirSti} and \cite[Theorem 2.1]{BC2012}, we obtain
\begin{align}
\partial_r \ln \det  \big( 1- \mathbb{K}_r|_{[0,s]}\big) &= \int_{\gamma \cup \tilde{\gamma}} \tr [Y_-^{-1}(z)Y_-'(z) \partial_r J(z) J^{-1}(z)] \frac{dz}{2\pi i}, \label{lol3}
\\
\partial_{\nu_{\ell}} \ln \det  \big( 1- \mathbb{K}|_{[0,s]}\big) &= \int_{\gamma \cup \tilde{\gamma}} \tr [Y_-^{-1}(z)Y_-'(z) \partial_{\nu_{\ell}} J(z) J^{-1}(z)] \frac{dz}{2\pi i}, \quad \ell \in \{1,\ldots,r\}, \label{lol4}
\\
\partial_{\mu_{\ell}} \ln \det  \big( 1- \mathbb{K}|_{[0,s]}\big) &= \int_{\gamma \cup \tilde{\gamma}} \tr [Y_-^{-1}(z)Y_-'(z) \partial_{\mu_{\ell}} J(z) J^{-1}(z)] \frac{dz}{2\pi i}, \quad \ell \in \{1,\ldots,q\}, \label{lol5}
\end{align}
where the RH solution $Y$ in \eqref{lol3} has the jumps \eqref{jumps for Y on gamma}-\eqref{jumps for Y on gamma tilde} with $F$ replaced by $F_{r}$ (defined in \eqref{def of Fr}). The quantity $(\partial_{r}J)J^{-1}$ in \eqref{lol3} is given by
\begin{align}
\partial_r J(z) J^{-1}(z) = \ln \Gamma (1+\nu -z)\big( J(z)-I \big) \sigma_3.
\end{align}
In \eqref{lol4}-\eqref{lol5}, $Y$ satisfies the jumps \eqref{jumps for Y on gamma}-\eqref{jumps for Y on gamma tilde} with $F$ given by \eqref{def of F}, and we have
\begin{align}
\partial_{\nu_{\ell}} J(z) J^{-1}(z) &= \psi (1+\nu_{\ell}-z)\big( J(z)-I \big) \sigma_3, & \ell &\in \{ 1,\ldots, r \},
\\
\partial_{\mu_{\ell}} J(z) J^{-1}(z) &= -\psi (1+\mu_{\ell}-z)\big( J(z)-I \big) \sigma_3, & \ell &\in \{ 1,\ldots, q \},
\end{align}
where $\psi=(\ln \Gamma)'$ denotes the di-gamma function. The same arguments as in the proof of \cite[Lemma 6.1]{CLMMuttalib} apply here (so we do not provide details), and we obtain
\begin{align} \label{first diff identity r}
\partial_r \ln \det  \big( 1- \mathbb{K}_r|_{[0,s]}\big) &= \frac{1}{2}\int_{\gamma \cup \tilde{\gamma}}\ln \Gamma (1+\nu -z) \tr [Y_+^{-1}(z)Y_+'(z) \sigma_3 -Y_-^{-1}(z)Y_-'(z)\sigma_3] \frac{dz}{2\pi i}, 
\\ \label{first diff identity nu}
\partial_{\nu_{\ell}} \ln \det  \big( 1- \mathbb{K}|_{[0,s]}\big) &=\frac{1}{2} \int_{\gamma \cup \tilde{\gamma}} \psi (1+\nu_{\ell}-z)  \tr [Y_+^{-1}(z)Y_+'(z) \sigma_3 -Y_-^{-1}(z)Y_-'(z)\sigma_3]\frac{dz}{2\pi i},
\\ \label{first diff identity mu}
\partial_{\mu_{\ell}} \ln \det  \big( 1- \mathbb{K}|_{[0,s]}\big) &=-\frac{1}{2} \int_{\gamma \cup \tilde{\gamma}}\psi (1+\mu_{\ell}-z)  \tr [Y_+^{-1}(z)Y_+'(z) \sigma_3 -Y_-^{-1}(z)Y_-'(z)\sigma_3] \frac{dz}{2\pi i}, 
\end{align}
where $\ell \in \{ 1,\ldots, r \}$ in \eqref{first diff identity nu} and $\ell \in \{ 1,\ldots, q \}$ in \eqref{first diff identity mu}. 

Using the chain of transformations $Y \mapsto U \mapsto T \mapsto S \mapsto R$ in the steepest descent analysis of \cite{ClaeysGirSti}, we rewrite the differential identities \eqref{first diff identity r}, \eqref{first diff identity nu}, and \eqref{first diff identity mu} in a way that is more convenient for the asymptotic analysis as $s \to +\infty$. Recall that the transformation $Y \mapsto U$ involves the change of variables $z=i \zeta s^\rho + (1+\nu_{\min})/2$ (and that $\nu_{\min} = \nu$ in \eqref{first diff identity r}, see also Remark \ref{remark: the case Fr}). The functions
\begin{align} \label{functions with poles}
\ln \Gamma\bigg( \frac{1+\nu}{2}-i \zeta s^\rho  \bigg), \quad \psi \bigg( \frac{1+2\nu_{\ell}-\nu_{\min}}{2}-i \zeta s^\rho  \bigg), \quad \text{and} \quad  \psi \bigg( \frac{1+2\mu_{\ell}-\nu_{\min}}{2}-i \zeta s^\rho  \bigg)
\end{align}
appearing in \eqref{first diff identity r}, \eqref{first diff identity nu} and \eqref{first diff identity mu} have infinitely many poles on $i \mathbb{R}^{-}$. These poles depend on $s$ and approach $0$ as $s \to + \infty$. For example, the left-most function in \eqref{functions with poles} has simple poles at $\{\zeta_{j}\}_{j=0}^{+\infty} \subset i \mathbb{R}^{-}$, where
\begin{align*}
\zeta_j=\frac{1}{i s^{\rho}} \bigg(\frac{1+\nu}{2}+j \bigg), \qquad j=0,1,...
\end{align*} 
Following \cite{CLMMuttalib}, we define for $K > |b_2|$ the contour $\sigma_K$ as given in Figure \ref{fig: sigma}. The contour $\sigma_K$ surrounds $\Sigma_5$ in the positive direction in such a way that the poles of \eqref{functions with poles} lie in the region exterior to $\sigma_{K}$. The circular part of $\sigma_{K}$ has radius $K$. We choose the contour $\sigma_K$ to cross the imaginary axis at the point $\zeta_0/2$ and to have a horizontal part of constant length as $s$ changes. Note that, since the poles of \eqref{functions with poles} approach $0$ as $s \to + \infty$, the contour $\sigma_{K}$ depends on $s$, even if $K$ is independent of $s$. We define $\sigma = \sigma_{2|b_2|}$. Furthermore, we define the contour
\begin{align*}
\tilde{\Sigma}_K = \bigcup_{j=1}^4 \tilde{\Sigma}_j \setminus \{|\zeta| \leq K\},
\end{align*}
where the contours $\tilde{\Sigma}_j$ are defined by \eqref{def of tilde Sigmaj}.
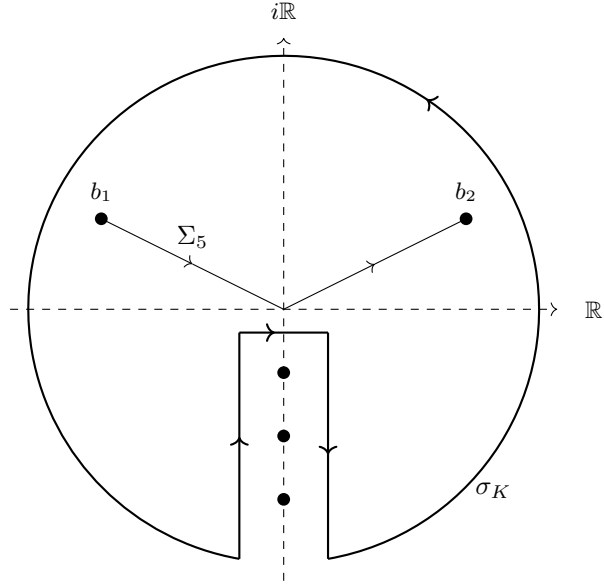
\begin{figure}
	\begin{center}
		\begin{tikzpicture}[scale=1.2]
		
		\draw[dashed,->] (5,-3) -- (5,3);
		\draw[dashed,->] (2,0) -- (8,0);	
		\node at (5,3.3) {\small $i \mathbb{R}$};
		\node at (8.4,0) {\small $\mathbb{R}$};
		
		\draw[thick, decoration={markings, mark=at position 0.4 with {\arrow{>}}}	,postaction={decorate}] ([shift=(-80:2.8cm)]5,0) arc (-80:260:2.8cm);
		\draw[thick,decoration={markings, mark=at position 0.5 with {\arrow{<}}}	,postaction={decorate}] ([shift=(-80:2.8cm)]5,0) to ([shift=(-80:2.8cm)]5,2.5);
		\draw[thick,decoration={markings, mark=at position 0.7 with {\arrow{<}}}	,postaction={decorate}] ([shift=(-80:2.8cm)]5,2.5) to ([shift=(260:2.8cm)]5,2.5);
		\draw[thick, decoration={markings, mark=at position 0.5 with {\arrow{<}}}	,postaction={decorate}] ([shift=(260:2.8cm)]5,2.5) to ([shift=(260:2.8cm)]5,0);
		\node at (7.3,-2) {$\sigma_K$};
		
		\fill (5,-0.7) circle (0.07cm);
		\fill (5,-1.4) circle (0.07cm);
		\fill (5,-2.1) circle (0.07cm);
		
		\draw[->] (3,1) -- (4,0.5);
		\draw (4,0.5) -- (5,0);	
		\draw[->] (5,0) -- (6,0.5);	
		\draw (6,0.5) -- (7,1);	
		\node at (4,0.8) {$\Sigma_5$};
		
		\fill (3,1) circle (0.07cm);
		\node at (3,1.3) {\small $b_1$};
		\fill (7,1) circle (0.07cm);
		\node at (7,1.3) {\small $b_2$};		
		\end{tikzpicture}
		\caption{The contour $\sigma_K$ surrounds $\Sigma_5$ but does not enclose any of the poles of the functions in \eqref{functions with poles}.} \label{fig: sigma}
	\end{center}
\end{figure}

\newpage
\begin{lemma}[Differential identities]\label{lemma: differential identities} Let $K$ be such that $K > 2 |b_2|$. Then the following statements hold:
\begin{itemize}
\item[(a)] Let $r \geq 1$ and $\nu=\nu_{\min} > -1$. Then
\begin{align} \label{diff identity r}
\partial_r \ln \det  \big( 1- \mathbb{K}_r|_{[0,s]}\big) &= I_{1,r}+I_{2,r}+I_{3,r}(K)+I_{4,r}(K),
\end{align}
where
\begin{align}
 I_{1,r} &= - s^{\rho}  \int_{\sigma} \ln \Gamma\Big(\frac{1+\nu}{2} -i s^{\rho} \zeta\Big) \, g^{\prime}(\zeta) \frac{d\zeta}{2\pi i}, \label{def of I1r}
\\
 I_{2,r} &= - \frac{1}{2} \int_{\sigma} \ln \Gamma\Big(\frac{1+\nu}{2} -i s^{\rho} \zeta\Big) \tr\Big[P^{(\infty)}(\zeta)^{-1}P^{(\infty)}(\zeta)^{\prime}\sigma_{3}\Big]\frac{d\zeta}{2\pi i},  \label{def of I2r}
\\
 I_{3,r}(K) &= - \frac{1}{2} \int_{\sigma_K} \ln \Gamma\Big(\frac{1+\nu}{2} -i s^{\rho} \zeta\Big) \tr \Big[P^{(\infty)}(\zeta)^{-1}e^{-p_0 \sigma_3}R^{-1}(\zeta)R^{\prime}(\zeta)e^{p_0 \sigma_3}P^{(\infty)}(\zeta) \sigma_{3} \Big] \frac{d\zeta}{2\pi i}, \label{def of I3r}
\\ \nonumber
 I_{4,r}(K) &=  \frac{1}{2} \int_{\tilde{\Sigma}_K} \ln \Gamma\Big(\frac{1+\nu}{2} -i s^{\rho} \zeta\Big) \times
 \\
 &\qquad  \tr\Big[P^{(\infty)}(\zeta)^{-1}e^{-p_0 \sigma_3}\big( R_+^{-1}(\zeta)R_+^{\prime}(\zeta) -R_-^{-1}(\zeta)R_-^{\prime}(\zeta)\big)e^{p_0 \sigma_3}P^{(\infty)}(\zeta) \sigma_{3} \Big] \frac{d\zeta}{2\pi i}. \label{def of I4r}
\end{align}
\item[(b)] Let $r>q\ge0$ be integers. Let $\ell \in \{1,\ldots, r \}$ and $k \in \{1,\ldots, q\}$. Then
\begin{align} 
\partial_{\nu_{\ell}}  \ln \det  \big( 1- \mathbb{K}|_{[0,s]}\big)&= I_{1,\nu_{\ell}}+I_{2,\nu_{\ell}}+I_{3,\nu_{\ell}}(K)+I_{4,\nu_{\ell}}(K), \label{diff identity munu}
\\
\partial_{\mu_k}  \ln \det  \big( 1- \mathbb{K}|_{[0,s]}\big)&= -I_{1,\mu_k}-I_{2,\mu_k}-I_{3,\mu_k}(K)-I_{4,\mu_k}(K), \label{diff identity munu2}
\end{align}
where, for $\alpha \in \{\nu_{\ell}, \mu_k \}$,
\begin{align}\label{I1alphadef}
I_{1,\alpha} = & - s^{\rho}  \int_{\sigma} \psi \Big(\frac{1+2\alpha-\nu_{\min}}{2} -i s^{\rho} \zeta\Big) \, g^{\prime}(\zeta) \frac{d\zeta}{2\pi i}, 
\\ \label{I2alphadef}
I_{2,\alpha} =& - \frac{1}{2} \int_{\sigma} \psi \Big(\frac{1+2\alpha-\nu_{\min}}{2} -i s^{\rho} \zeta\Big) \tr\Big[P^{(\infty)}(\zeta)^{-1}P^{(\infty)}(\zeta)^{\prime}\sigma_{3}\Big]\frac{d\zeta}{2\pi i}, 
	\\ \nonumber
I_{3,\alpha}(K)=& - \frac{1}{2} \int_{\sigma_K} \psi\Big(\frac{1+2\alpha-\nu_{\min}}{2} -i s^{\rho} \zeta\Big) 
	\\ \label{I3alphadef}
& \times \tr \Big[P^{(\infty)}(\zeta)^{-1}e^{-p_0 \sigma_3}R^{-1}(\zeta)R^{\prime}(\zeta)e^{p_0 \sigma_3}P^{(\infty)}(\zeta) \sigma_{3} \Big] \frac{d\zeta}{2\pi i},
	\\\nonumber
I_{4,\alpha}(K) = &\; \frac{1}{2} \int_{\tilde{\Sigma}_K} \psi\Big(\frac{1+2\alpha-\nu_{\min}}{2} -i s^{\rho} \zeta\Big) 
\tr\Big[P^{(\infty)}(\zeta)^{-1}e^{-p_0 \sigma_3}
	\\ \label{I4alphadef}
&  \times \big( R_+^{-1}(\zeta)R_+^{\prime}(\zeta) -R_-^{-1}(\zeta)R_-^{\prime}(\zeta)\big)e^{p_0 \sigma_3}P^{(\infty)}(\zeta) \sigma_{3} \Big] \frac{d\zeta}{2\pi i}.
\end{align}
\end{itemize}

\end{lemma}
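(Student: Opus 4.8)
The three identities are proved by the same argument, so it suffices to treat the generic integral
\[
\frac{1}{2}\int_{\gamma \cup \tilde{\gamma}} w(z)\,\tr\big[Y_+^{-1}(z)Y_+'(z)\sigma_3 - Y_-^{-1}(z)Y_-'(z)\sigma_3\big]\frac{dz}{2\pi i},
\]
appearing on the right-hand side of \eqref{first diff identity r}, \eqref{first diff identity nu}, \eqref{first diff identity mu}, where $w(z)=\ln\Gamma(1+\nu-z)$ in part (a) (with $\nu_{\min}=\nu$ and $F$ replaced by $F_r$, cf. Remark \ref{remark: the case Fr}) and $w(z)=\pm\psi(1+\alpha-z)$ in part (b). The plan is first to perform the change of variables $z=is^{\rho}\zeta+\tfrac{1+\nu_{\min}}{2}$, and then to substitute the chain $Y\mapsto U\mapsto T\mapsto S\mapsto R$ of \cite{ClaeysGirSti}, which expresses $Y$ along the image of $\gamma\cup\tilde{\gamma}$ as an explicit product of constant matrices $e^{\pm p_0\sigma_3}$, diagonal factors built from the $g$-function, the global parametrix $P^{(\infty)}$ (or the Airy parametrix $P$ inside the disks $\mathbb{D}_\delta(b_1)$, $\mathbb{D}_\delta(b_2)$), and the solution $R$ of the small-norm RH problem.

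Next I would decompose the trace bracket. Since $\tr[A^{-1}A'\sigma_3]$ is unchanged when $A$ is multiplied on the left by a constant matrix and, when $A$ is multiplied on the right by a diagonal matrix $D$, equals $\tr[A^{-1}A'\,D\sigma_3 D^{-1}]=\tr[A^{-1}A'\sigma_3]$, only the logarithmic derivatives of the conjugating factors contribute. Writing $Y$ through the transformation chain, the bracket $\tr[Y_+^{-1}Y_+'\sigma_3 - Y_-^{-1}Y_-'\sigma_3]$ splits into (i) a piece proportional to $g'(\zeta)$ coming from the $g$-function, (ii) the piece $\tr[(P^{(\infty)})^{-1}(P^{(\infty)})'\sigma_3]$, and (iii) the piece $\tr[(P^{(\infty)})^{-1}e^{-p_0\sigma_3}R^{-1}R'e^{p_0\sigma_3}P^{(\infty)}\sigma_3]$; identifying the numerical coefficient of the $g'$-term is a short computation using the normalization of $g$ and $g''=\bigO(\zeta^{-3})$.

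I would then deform contours. Pieces (i) and (ii) extend analytically in $\zeta$ to $\mathbb{C}\setminus\Sigma_5$ and decay at infinity, so the image of $\gamma\cup\tilde{\gamma}$ can be collapsed onto the contour $\sigma$ surrounding $\Sigma_5$ without crossing any of the $s$-dependent poles of the functions \eqref{functions with poles} on $i\mathbb{R}^-$ (the specific shape of $\sigma$, crossing $i\mathbb{R}$ at $\zeta_0/2$ with a horizontal portion of fixed length as in Figure \ref{fig: sigma}, is exactly what guarantees this); this produces $I_{1,\bullet}+I_{2,\bullet}$, and since the integrands are analytic between radius $2|b_2|$ and radius $K$ one may freely use the smaller contour $\sigma=\sigma_{2|b_2|}$. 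Piece (iii) involves $R$, which is analytic off $\Gamma_R$ with $R(\zeta)=I+\bigO(\zeta^{-1})$; collapsing the contour onto $\sigma_K$ (enclosing the two disks, which is possible because $K>2|b_2|$) gives the closed-contour integral $I_{3,\bullet}(K)$ of $R^{-1}R'$, plus the contribution of the jump of $R$ along the part of $\bigcup_{j=1}^4\tilde{\Sigma}_j$ lying outside $\{|\zeta|\le K\}$, namely $I_{4,\bullet}(K)$ with integrand $R_+^{-1}R_+'-R_-^{-1}R_-'$; the arcs $\partial\mathbb{D}_\delta(b_1)$, $\partial\mathbb{D}_\delta(b_2)$ are absorbed into the $\sigma_K$-integral. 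For part (b) one uses Remark \ref{remark: the case Fr} only in the sense that the same parametrices apply, and replaces $\ln\Gamma(1+\nu-z)$ by $\pm\psi(1+\alpha-z)$, picking up the global sign in \eqref{diff identity munu2}.

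The main obstacle is the bookkeeping of this contour deformation in the presence of the poles of $\ln\Gamma$ and $\psi$ that accumulate at $0$ as $s\to+\infty$: one has to check that at every stage the integration contour remains on the pole-free side, and that the non-compact tails of $\gamma\cup\tilde{\gamma}$, of $\sigma_K$, and of $\tilde{\Sigma}_K$ all yield convergent integrals — which follows from the exponential decay of the jumps of $T$, $S$, $R$ away from $\Sigma_5$ together with the at most logarithmic (for $\ln\Gamma$) or polynomial (for $\psi$) growth of $w$. Since this is structurally \emph{identical} to the proof of \cite[Lemma 6.1]{CLMMuttalib}, with only the weight $w$ and, in part (a), the function $F_r$ changed, I would follow that reference closely and omit the routine estimates.
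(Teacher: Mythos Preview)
Your approach is correct and is essentially the same as the paper's: the authors simply state that the proof ``is analogous to the proof of \cite[Lemma 6.2]{CLMMuttalib} and consists of implementing the chain of transformations $Y\mapsto U\mapsto T\mapsto S\mapsto R$ in \eqref{first diff identity r}, \eqref{first diff identity nu}, and \eqref{first diff identity mu}, and performing a contour deformation,'' which is exactly the plan you describe. The only minor slip is that you cite \cite[Lemma~6.1]{CLMMuttalib} at the end, whereas Lemma~6.1 is the analog used to derive \eqref{first diff identity r}--\eqref{first diff identity mu} themselves; the step from those identities to the $I_{1}+I_{2}+I_{3}(K)+I_{4}(K)$ decomposition is the analog of \cite[Lemma~6.2]{CLMMuttalib}.
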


\begin{proof}
The proof is analogous to the proof of \cite[Lemma 6.2]{CLMMuttalib} and consists of implementing the chain of transformations $Y\mapsto U \mapsto T \mapsto S \mapsto R$ in \eqref{first diff identity r}, \eqref{first diff identity nu}, and \eqref{first diff identity mu}, and performing a contour deformation.
\end{proof}

Sections \ref{Section: Asymptotics r} and \ref{Section: Asymptotics nu mu} are devoted to the computation of the constant terms in the large $s$ asymptotics of the right-hand sides of \eqref{diff identity r}, \eqref{diff identity munu}--\eqref{diff identity munu2} and to the integration of these identities. The proof of Theorem \ref{mainthm} is then given in Section \ref{Section: proof main thm}.

\section{Asymptotics of the differential identity in $r$} \label{Section: Asymptotics r}
In this section, we compute the large $s$ asymptotics of the four quantities $I_{1,r}$, $I_{2,r}$, $I_{3,r}(K)$, and $I_{4,r}(K)$ appearing on the right-hand side of the differential identity \eqref{diff identity r} in $r$. By integrating the resulting asymptotics with respect to $r$, we obtain the constant term in the large $s$ asymptotics of $\ln \det  \big( 1- \mathbb{K}_r|_{[0,s]}\big)$.

Throughout this section, we assume that $r\geq 1$ and $\nu > -1$. The quantities $c_1,\ldots,c_8$ and $b_1,b_2$ are defined by \eqref{eq: values of constants cj Fr} and \eqref{b1b2def}, respectively.
As mentioned in Remark \ref{focusremark}, we focus in this work on proving the expression \eqref{constant C2} for $C$, because the coefficients $\rho$, $a$, $b$ and $c$ are already known from \cite{ClaeysGirSti,CLMMuttalib}. Therefore, to avoid unnecessary computations, we introduce the notation $\Omega$.

\begin{notation} 
Let $t \in\R$ and $f,g:(t,\infty) \to \C$. The notation
\begin{align*}
f(s)=\Omega\big( g(s) \big), \qquad \mbox{as } s \to +\infty,
\end{align*}
means that either $f \equiv 0$ or that there exist $c > 0$ and $s_{0}>0$ independent of $s$ such that 
\begin{align*}
|f(s)| \geq c |g(s)|, \qquad \mbox{for all } s \geq s_{0}.
\end{align*}
\end{notation}

\subsection{Asymptotics of $I_{1,r}$}
\begin{proposition} \label{prop: I_1r}
Let $\nu > -1$ and let $I_{1,r}$ be the function defined by \eqref{diff identity r}. Then
\begin{align} \label{asymtptotics I_1r}
I_{1,r}= \Omega(\ln(s^{\rho}))+ I_{1,r}^{(c)} + \mathcal{O}\bigg( \frac{\ln ( s^\rho )}{s^\rho} \bigg)
\end{align}
as $s\to +\infty$ uniformly for $r$ in compact subsets of $[1,+\infty)$, where
\begin{align} \label{I_1r3 explicit}
I_{1,r}^{(c)}&=\frac{c_1+c_2}{2}\bigg\{ \frac{1-3\nu^2}{24} + \zeta'(-1)-\ln G \bigg(\frac{\nu+1}{2}\bigg) +\frac{\nu-1}{2} \ln \Gamma\bigg(\frac{\nu+1}{2}\bigg)\bigg\} \bigg( 1- \frac{\im b_2}{|b_2|}\bigg) \nonumber
\\
& \quad + \frac{c_1+c_2}{48}(1-3\nu^2)  \bigg( \left( 1+\frac{\im b_{2}}{|b_{2}|} \right)\ln\left( \frac{|b_2|+\im b_2}{2} \right) - 2\frac{\im b_2}{|b_{2}|} \ln |b_{2}| \bigg).
\end{align}
\end{proposition}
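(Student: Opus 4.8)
The quantity to analyze is
\[
I_{1,r} = - s^{\rho} \int_{\sigma} \ln \Gamma\Big(\tfrac{1+\nu}{2} - i s^{\rho}\zeta\Big)\, g'(\zeta)\, \frac{d\zeta}{2\pi i},
\]
where $\sigma = \sigma_{2|b_2|}$ encircles $\Sigma_5$ but none of the poles of $\zeta \mapsto \ln\Gamma(\tfrac{1+\nu}{2} - is^\rho\zeta)$. The first step is to insert the large-argument (Stirling) expansion of $\ln\Gamma$. On $\sigma$ we have $|s^\rho\zeta| \to \infty$ uniformly, so
\[
\ln\Gamma\Big(\tfrac{1+\nu}{2} - is^\rho\zeta\Big) = \Big(\tfrac{\nu}{2} - is^\rho\zeta\Big)\ln\big(-is^\rho\zeta\big) + is^\rho\zeta - \tfrac{1}{2}\ln(-is^\rho\zeta) + \tfrac{1}{2}\ln(2\pi) + \bigO\big( (s^\rho\zeta)^{-1}\big),
\]
with an appropriate branch of the logarithm consistent with the one used for $\mathcal{G}$ and $p$. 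Substituting $\ln(-is^\rho\zeta) = \ln(s^\rho) + \ln(-i\zeta)$ splits the integrand into: a term proportional to $s^\rho\ln(s^\rho)$ times $\int_\sigma \zeta\, g'(\zeta)\,d\zeta$; a term proportional to $s^\rho$ times $\int_\sigma \zeta\,[\ln(-i\zeta) + \text{const}]\,g'(\zeta)\,d\zeta$; a term proportional to $\ln(s^\rho)$ times $\int_\sigma g'(\zeta)\,d\zeta$; and an $s$-independent term $\int_\sigma [\tfrac{\nu}{2}\ln(-i\zeta) - \tfrac12\ln(-i\zeta) + \tfrac12\ln(2\pi)]\,g'(\zeta)\,d\zeta$, plus the error $\bigO(\ln(s^\rho)/s^\rho)$ coming from the Stirling remainder (the extra $\ln s^\rho$ in the error tracks the $-is^\rho\zeta$ prefactor hitting the $\bigO((s^\rho\zeta)^{-1})$ tail). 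This is where the $\Omega(\ln(s^\rho))$ placeholder absorbs all contributions that are not $\bigO(1)$: the $s^\rho\ln s^\rho$, $s^\rho$, and $\ln s^\rho$ pieces. I would \emph{not} compute those explicitly, citing Remark \ref{focusremark} and the $\Omega$ convention.

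\textbf{Extracting the constant term.} The heart of the matter is the $s$-independent contour integral
\[
I_{1,r}^{(c)} = -\int_{\sigma} \Big(\tfrac{\nu-1}{2}\ln(-i\zeta) + \tfrac{1}{2}\ln(2\pi)\Big)\, g'(\zeta)\, \frac{d\zeta}{2\pi i}.
\]
To evaluate it I would deform $\sigma$ and use the explicit description of $g'$. From \eqref{def of g''} and the formulas for $b_1, b_2$, one has $g''(\zeta) = -i\tfrac{c_1+c_2}{2}\big(\tfrac1\zeta - \tfrac{1}{r(\zeta)} + \tfrac{i\,\im b_2}{\zeta r(\zeta)}\big)$ with $r(\zeta) = \sqrt{(\zeta-b_1)(\zeta-b_2)}$, and $g'(\zeta) = \int_\infty^\zeta g''$. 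The constant $\tfrac12\ln(2\pi)$ piece integrates against $g'$; shrinking $\sigma$ onto $\Sigma_5$ and using that $g'$ has a computable jump and behaves like $\bigO(\zeta^{-2})$ at infinity should give a clean residue/branch-cut computation. For the $\ln(-i\zeta)$ piece the singularity at $\zeta = 0$ interacts with the $1/\zeta$ term in $g''$, so I expect to integrate by parts: $\int_\sigma \ln(-i\zeta)\, g'(\zeta)\, d\zeta = -\int_\sigma \tfrac{g(\zeta)}{ \cdots }$ (boundary terms vanish since $g(\zeta) = \bigO(\zeta^{-1})$ and $\sigma$ is closed), reducing everything to integrals of $g$ or $g''$ against elementary functions, which can be done by collapsing $\sigma$ onto $\Sigma_5 = [b_1,0]\cup[0,b_2]$ and picking up the jump $r_+ + r_- = 0$. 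The appearance of Barnes' $G$ and $\zeta'(-1)$ in \eqref{I_1r3 explicit} strongly suggests that the cleanest route is instead to recognize the whole combination $\ln\Gamma(\tfrac{1+\nu}{2} - is^\rho\zeta) - (\text{Stirling polynomial part})$ as an object whose integral against $g'$ is computed via a known identity; more concretely, I would follow the analogous computation in \cite[Section 7]{CLMMuttalib}, where an integral of exactly this shape is evaluated, and adapt it by replacing their parameters with $(c_1+c_2)/2$, $\nu$, and the geometric data $\im b_2/|b_2|$, $|b_2|$.

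\textbf{Assembling \eqref{I_1r3 explicit}.} Once the elementary integrals are in hand, the constant $I_{1,r}^{(c)}$ will be a linear combination of $\zeta'(-1)$, $\ln G(\tfrac{\nu+1}{2})$, $\ln\Gamma(\tfrac{\nu+1}{2})$, the pure number $\tfrac{1-3\nu^2}{24}$, and logarithmic terms $\ln|b_2|$, $\ln\big(\tfrac{|b_2|+\im b_2}{2}\big)$, all with coefficients built from $c_1 + c_2$ and $\im b_2/|b_2|$; regrouping these into the two displayed brackets — one multiplied by $\big(1 - \tfrac{\im b_2}{|b_2|}\big)$ and one by $\big(1 + \tfrac{\im b_2}{|b_2|}\big)\ln(\cdots) - 2\tfrac{\im b_2}{|b_2|}\ln|b_2|$ — is then a matter of bookkeeping. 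The uniformity in $r$ over compact subsets of $[1,\infty)$ follows because all the geometric quantities $b_1, b_2, \phi$, and the Stirling error bounds depend continuously on $r$ (equivalently on $c_2 = r$) and $\sigma$ can be chosen locally independent of $r$. \textbf{Main obstacle:} the branch-cut bookkeeping in the $\ln(-i\zeta)$ integral against $g'$ near $\zeta = 0$, where the logarithm's branch point collides with the $1/\zeta$ pole of $g''$; getting the constant exactly right (in particular the precise coefficients of $\ln G$ versus $\ln\Gamma$, which come from a reflection-type identity for Barnes' $G$) is the delicate step, and is where I would lean most heavily on the parallel computation in \cite{CLMMuttalib}.
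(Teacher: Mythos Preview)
Your approach has a genuine gap at two linked places.

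\textbf{Uniformity of Stirling on $\sigma$.} You assert that $|s^\rho\zeta|\to\infty$ uniformly on $\sigma$, but this is false. By construction $\sigma=\sigma_{2|b_2|}$ crosses the imaginary axis at $\zeta_0/2$ with $\zeta_0=\tfrac{1}{is^\rho}\tfrac{1+\nu}{2}$, so on the horizontal part near the bottom one has $|s^\rho\zeta|=O(1)$. Stirling therefore cannot be inserted uniformly, and the contribution of that portion of the contour is not controlled by your expansion.

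\textbf{Order counting and the missing transcendental constants.} Even on the part of $\sigma$ where Stirling is valid, you have dropped the overall prefactor $-s^\rho$ when sorting terms: the pieces you label ``$s^\rho\ln s^\rho$'', ``$s^\rho$'', ``$\ln s^\rho$'', and ``$s$-independent'' are in fact of orders $s^{2\rho}\ln s^\rho$, $s^{2\rho}$, $s^\rho\ln s^\rho$, and $s^\rho$ respectively. The genuine $O(1)$ contribution to $I_{1,r}$ would come from the $O((s^\rho\zeta)^{-1})$ tail of Stirling, multiplied by $-s^\rho$. More fundamentally, the target formula \eqref{I_1r3 explicit} contains $\zeta'(-1)$, $\ln G(\tfrac{\nu+1}{2})$, and $\ln\Gamma(\tfrac{\nu+1}{2})$; no finite Stirling truncation integrated against $g'$ over an $s$-independent contour can produce these constants. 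They encode precisely the information lost near $\zeta=0$, where your expansion breaks down.

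\textbf{What the paper does instead.} The paper integrates by parts in the \emph{other} direction: it replaces $\ln\Gamma$ by its primitive
\[
\Psi(\zeta)=s^\rho\int_0^\zeta\ln\Gamma\Big(\tfrac{1+\nu}{2}-is^\rho\xi\Big)\,d\xi,
\]
so that $I_{1,r}=\int_\sigma\Psi(\zeta)g''(\zeta)\tfrac{d\zeta}{2\pi i}$. The crucial point is that $\Psi(0)=0$, so $\Psi(\zeta)/\zeta$ is analytic at $0$; combined with the explicit form of $g''$ this allows collapsing $\sigma$ onto $\Sigma_5$ and then deforming to the circular arc $\gamma_{b_2b_1}$ of radius $|b_2|$, which stays uniformly away from $0$. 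On that arc one uses the \emph{exact} identity $\int_1^z\ln\Gamma(z')\,dz'=\tfrac{z-1}{2}\ln(2\pi)-\tfrac{z(z-1)}{2}+(z-1)\ln\Gamma(z)-\ln G(z)$ to write $\Psi$ in closed form, and then the large-$z$ asymptotics of $\ln G$ (which is where $\zeta'(-1)$ enters) together with those of $\ln\Gamma$ give the constant term; the lower endpoint $z=\tfrac{1+\nu}{2}$ supplies the $\ln G(\tfrac{\nu+1}{2})$ and $\ln\Gamma(\tfrac{\nu+1}{2})$ contributions. The remaining integrals over $\gamma_{b_2b_1}$ are then the four elementary ones recorded in \eqref{integral identities I1r}.
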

\begin{proof}
We define the function
\begin{align}\label{def of Psi primitive}
\Psi(\zeta)= s^\rho \int_0^\zeta \ln \Gamma \bigg( \frac{1+\nu}{2}-i s^\rho \xi \bigg)d\xi.
\end{align}
Then $\Psi(\zeta)$ is analytic on $\sigma$ and an integration by parts and \eqref{def of g''} yield
\begin{align*}
I_{1,r}&=   \int_{\sigma}\Psi(\zeta) g^{\prime \prime}(\zeta) \frac{d\zeta}{2\pi i} = i\frac{c_1+c_2}{2}\int_{\sigma}\Psi(\zeta) \bigg( 1- i\frac{\im b_2}{\zeta} \bigg) \frac{1}{r(\zeta)}\frac{d\zeta}{2\pi i},
\end{align*}
where we have used the fact that $\frac{1}{\zeta}\Psi(\zeta)$ has no pole at $\zeta = 0$. We first collapse the contour $\sigma$ onto $\Sigma_{5}$. Second, recalling that $r_{+}(\zeta) + r_{-}(\zeta) = 0$ for $\zeta \in \Sigma_{5}$, we rewrite the resulting integral only in terms of $r_{+}$. Finally, we deform the contour on the $+$ side of $\Sigma_{5}$ into another contour $\gamma_{b_2b_1}$, which is the part of the counterclockwise oriented circle with radius $|b_2|$ centered at the origin going from $b_2$ to $b_1$. We note again that, since $\frac{1}{\zeta}\Psi(\zeta)$ is analytic at $\zeta = 0$, there is no residue at $\zeta = 0$ during this contour deformation. This gives
\begin{align}\label{new expression I_1r}
I_{1,r}&= i(c_1+c_2)  \int_{\gamma_{b_2b_1}}\Psi(\zeta) \bigg( 1- i\frac{\im b_2}{\zeta} \bigg) \frac{1}{r(\zeta)}\frac{d\zeta}{2\pi i}.
\end{align}
It remains to compute the large $s$ asymptotics of $\Psi(\zeta)$ uniformly for $\zeta \in \gamma_{b_{2}b_{1}}$. The following formula is useful for us (cf. \cite[Eq. 5.17.4]{NIST})
\begin{align}  \label{identity int ln Gamma}
\int_1^{z} \ln \Gamma(z') dz' = \frac{z-1}{2} \ln (2\pi) - \frac{(z-1)z}{2} +(z-1) \ln \Gamma(z) - \ln G(z),
\end{align}
where $G$ is Barnes' $G$-function. By applying \eqref{identity int ln Gamma} twice with
\begin{align*}
z=\frac{1+\nu}{2}-is^{\rho}\zeta \qquad \mbox{ and } \qquad z=\frac{1+\nu}{2}
\end{align*}
in \eqref{def of Psi primitive}, we obtain
\begin{align} \label{Psi explicit}
\Psi(\zeta)=& \; i \int_{\frac{1+\nu}{2}}^{\frac{1+\nu}{2}-is^{\rho} \zeta} \ln \Gamma(z')dz' \nonumber \\
= & \frac{i}{2}\bigg[s^{2\rho}\zeta^2  +is^\rho \zeta \big(\nu -\ln(2\pi)\big)-2 \ln  \frac{G\big(\frac{1+\nu}{2}-is^{\rho}\zeta\big)}{G\big( \frac{1+\nu}{2} \big)} -(1-\nu)\ln \frac{\Gamma\big(\frac{1+\nu}{2}-is^{\rho}\zeta\big)}{\Gamma\big( \frac{1+\nu}{2} \big)} \nonumber
\\
& \quad -2is^\rho \zeta \ln\Gamma\bigg(\frac{1+\nu}{2}-is^{\rho}\zeta\bigg) \bigg].
\end{align}
The large $z$ asymptotics of $\ln \Gamma(z)$ and $\ln G(z)$ are given by (cf. \cite[Eqs. 5.11.1 and 5.17.5]{NIST})
\begin{align} \label{asymptotics lnGamma and lnG}
\ln G(z+1)&=\frac{z^2}{4}+ z\ln \Gamma(z+1)-\bigg( \frac{z(z+1)}{2} +\frac{1}{12}\bigg) \ln z -\frac{1}{12}+\zeta'(-1) + O(z^{-2}), \nonumber
\\
\ln \Gamma(z)&= (z-\tfrac{1}{2})\ln z -z+\frac{1}{2} \ln(2\pi)+\frac{1}{12z} +O(z^{-3})
\end{align}
as $z\to \infty$ uniformly for $|\arg z| < \pi-\epsilon$ for some $\epsilon >0$. This implies
\begin{align} \nonumber
\Psi(\zeta)&= \Omega\big( \ln(s^\rho) \big) -\frac{i}{24}\bigg\{ 1-3 \nu^2 -\frac{i \pi}{2} +24 \zeta'(-1)+\frac{3 i  \pi \nu^2}{2} - 24 \ln G \bigg(\frac{\nu+1}{2}\bigg)
\\
&\quad +12(\nu-1)\ln \Gamma \bigg(\frac{\nu+1}{2}\bigg) + (1-3\nu^2)\ln \zeta  \bigg\}+ \mathcal{O}\big( s^{-\rho} \big) \label{asymptotics of Psi}
\end{align}
as $s\to + \infty$ uniformly for $\zeta \in \gamma_{b_2 b_1}$ and $r$ in compact subsets of $(0,+\infty)$. Substituting \eqref{asymptotics of Psi} into \eqref{new expression I_1r} gives \eqref{asymtptotics I_1r} and, in particular,
\begin{align*}
I_{1,r}^{(c)}&= (c_1+c_2)\bigg\{ \frac{1-3\nu^2}{24}\bigg(1-\frac{\pi i}{2}\bigg) + \zeta'(-1)-\ln G \bigg(\frac{\nu+1}{2}\bigg) +\frac{\nu-1}{2} \ln \Gamma\bigg(\frac{\nu+1}{2}\bigg)  \bigg\} 
\\
& \quad\times \int_{\gamma_{b_2b_1}} \bigg(1-\frac{i \im b_2}{\zeta }  \bigg) \frac{1}{r(\zeta)} \frac{d\zeta}{2\pi i}
 + \frac{c_1+c_2}{24}(1-3\nu^2)  \int_{\gamma_{b_2b_1}} \ln(\zeta) \bigg(1-\frac{i \im b_2}{\zeta }  \bigg) \frac{1}{r(\zeta)} \frac{d\zeta}{2\pi i}.
\end{align*}
It was shown in \cite[Lemma 7.2]{CLMMuttalib} that
\begin{align}\label{integral identities I1r} \nonumber
2\int_{\gamma_{b_2b_1}}  \frac{1}{r(\zeta)} \frac{d\zeta}{2\pi i}&=1, & 2\int_{\gamma_{b_2b_1}}  \frac{\ln (\zeta)}{r(\zeta)} \frac{d\zeta}{2\pi i}&= \ln\big( i(|b_2|+\im b_2) \big)-\ln(2),
\\
2\int_{\gamma_{b_2b_1}}  \frac{1}{\zeta r(\zeta)} \frac{d\zeta}{2\pi i}&=-\frac{i}{|b_2|}, & 2\int_{\gamma_{b_2b_1}}  \frac{\ln(\zeta)}{\zeta r(\zeta)} \frac{d\zeta}{2\pi i}&= \frac{\ln \big(\frac{2i |b_2|^2}{|b_2|+\im b_2} \big)}{i|b_2|},
\end{align}
which proves \eqref{I_1r3 explicit} and the proposition.
\end{proof}

\subsection{Asymptotics of $I_{2,r}$}\label{subsection: asymp for I2r}
In this subsection we compute large $s$ asymptotics for $I_{2,r}$, which we recall is given by
\begin{align*}
I_{2,r}= - \frac{1}{2} \int_{\sigma} \ln \Gamma\Big(\frac{1+\nu}{2} -i s^{\rho} \zeta\Big) \tr\Big[P^{(\infty)}(\zeta)^{-1}P^{(\infty)}(\zeta)^{\prime}\sigma_{3}\Big]\frac{d\zeta}{2\pi i}.
\end{align*}
Recalling the definition \eqref{def of Pinfty} of the global parametrix $P^{(\infty)}(\zeta)$, a straightforward calculation yields
\begin{align}\label{lol10}
\tr\Big[P^{(\infty)}(\zeta)^{-1}P^{(\infty)}(\zeta)^{\prime}\sigma_{3}\Big] = \tr \Big[ Q^{(\infty)}(\zeta)^{-1}Q^{(\infty) \prime}(\zeta) \sigma_3 \Big] + \tr [p'(\zeta)I]=2 p'(\zeta).
\end{align}
Therefore, integrating by parts, using the jump condition \eqref{preliminary jumps} of $p(\zeta)$, and then collapsing the contour $\sigma$ onto $\Sigma_{5}$, we obtain
\begin{align*}
I_{2,r}=  -is^\rho  \int_{\sigma} \psi \Big(\frac{1+\nu}{2} -i s^{\rho} \zeta\Big) p(\zeta)\frac{d\zeta}{2\pi i} = is^\rho  \int_{\Sigma_5} \psi \Big(\frac{1+\nu}{2} -i s^{\rho} \zeta\Big) \big(p_+(\zeta)-p_-(\zeta)\big)\frac{d\zeta}{2\pi i} =Z_r+X_r,
\end{align*}
where
\begin{align} \label{def of Zr and Xr}
Z_r&= -2is^\rho \int_{\gamma_{b_2b_1}}\psi \Big(\frac{1+\nu}{2} -i s^{\rho} \zeta\Big) p(\zeta)\frac{d\zeta}{2\pi i}, \nonumber
\\
X_r&=is^\rho \int_{\Sigma_5}\psi \Big(\frac{1+\nu}{2} -i s^{\rho} \zeta\Big) \ln \mathcal{G}(\zeta)\frac{d\zeta}{2\pi i}.
\end{align}
It remains to find the large $s$ asymptotics of $Z_r$ and $X_r$.

\subsubsection{Asymptotics of $X_r$}
The large $s$ asymptotics of $X_{r}$ are described in terms of the Hurwitz zeta function $\zeta(z,u)$ which is defined for $\re z > 1$ and $u \neq 0, -1, -2, \dots$ by
$$\zeta(z,u) = \sum_{n=0}^\infty \frac{1}{(n + u)^z}.$$
\begin{proposition}\label{prop: asymptotics Xr}
Let $\nu \ge 0$ and $X_r$ be defined by \eqref{def of Zr and Xr}. Then
\begin{align}
X_{r} &=\Omega \big( \ln (s^\rho) \big) +X_{1,r}^{(c)}+X_{2,r}^{(c)}+X_{3,r}^{(c)}+ \mathcal{O}\bigg( \frac{\ln ( s^\rho )}{s^\rho} \bigg) \label{asymp for Xr Prop}
\end{align}
as $s\to +\infty$ uniformly for $r$ in compact subsets of $[1,+\infty)$, where
\begin{align}
X_{1,r}^{(c)} = &\; \frac{i}{48 \pi} \bigg\{ 6r\nu^{2} \big[ \ln^{2}(-i b_{1})-\ln^{2}(-ib_{2}) \big] + 6 \nu \ln(2\pi) \big[ \ln(ib_{2})-\ln(ib_{1}) \big] \nonumber \\
& + \big[ \ln(-ib_{2})-\ln(-ib_{1})\big] \Big( (1+r)(1-3\nu^{2})+6\nu (1-2r)\ln(2\pi) \Big) \nonumber \\
& + 6\nu^{2} \Big( \ln(-i b_{2}) \ln(ib_{2})-\ln(-ib_{1})\ln(ib_{1}) \Big) \bigg\}, \label{def of X1rc}
\\
X_{2,r}^{(c)}= &\; \ln G(1+\nu)- \zeta'(-1)-\frac{\nu\ln(2\pi)}{4}+ \frac{3\nu^2-1}{24}-\zeta'\bigg( -1, \frac{1+\nu}{2}+1\bigg) +\frac{\nu+1}{2}\ln \bigg( \frac{\nu+1}{2}\bigg),\label{def of X2rc}
\\
X_{3,r}^{(c)}= &\;\frac{1+r+3(r-3)\nu^2}{24} \frac{\ln(b_1/b_2)\ln(-b_1b_2)}{4\pi i} 
- \frac{(1-3\nu^2)(1+r)-6(r-1)\nu \ln(2\pi)}{24}  \frac{\ln(b_1/b_2)}{2 \pi i}, \label{def of X3rc}
\end{align}
where $\zeta'(z,u) = \partial_{z} \zeta(z,u)$ denotes the derivative in the $z$-variable of the Hurwitz zeta function.
\end{proposition}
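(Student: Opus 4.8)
The plan is to perform a single integration by parts in $X_r$ and then extract the constant terms from the resulting boundary term and from a remaining contour integral. Writing $w(\zeta)=\tfrac{1+\nu}{2}-is^\rho\zeta$, we have $is^\rho\,\psi(w(\zeta))\,d\zeta=-d\big[\ln\Gamma(w(\zeta))\big]$, and since $\zeta\mapsto\ln\Gamma(w(\zeta))$ and $\zeta\mapsto\ln\mathcal{G}(\zeta)$ are both continuous on $\Sigma_5$ (in particular at $\zeta=0$, so the boundary contributions there drop out), integrating by parts in \eqref{def of Zr and Xr} gives
\[
X_r=-\frac{1}{2\pi i}\Big[\ln\Gamma(w(\zeta))\,\ln\mathcal{G}(\zeta)\Big]_{\zeta=b_1}^{\zeta=b_2}+\int_{\Sigma_5}\ln\Gamma(w(\zeta))\,(\ln\mathcal{G})'(\zeta)\,\frac{d\zeta}{2\pi i}.
\]
In the boundary term the arguments $w(b_1),w(b_2)$ are of size $s^\rho$ with argument bounded away from $\pm\pi$, so $\ln\Gamma(w(b_j))$ is expanded via \eqref{asymptotics lnGamma and lnG} and $\ln\mathcal{G}(b_j)$ via the asymptotic expansion of $\ln F$ recalled in Section \ref{Section: preliminaries} with the coefficients \eqref{eq: values of constants cj Fr}; it is important to keep the $c_8/(is^\rho\zeta)$ term of the latter, since multiplied by the leading $w\ln w$ term of $\ln\Gamma(w(b_j))$ it already contributes at order $1$. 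Retaining only the $s$-independent terms of the product at $\zeta=b_1$ and $\zeta=b_2$ should reproduce exactly $X_{1,r}^{(c)}$; all other terms are $\Omega(\ln(s^\rho))$ or $\bigO(\ln(s^\rho)/s^\rho)$.

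For the remaining integral I would compute, from \eqref{def of G} and \eqref{def of Fr},
\[
(\ln\mathcal{G})'(\zeta)=is^\rho\Big[\psi\big(\tfrac{1+\nu}{2}+is^\rho\zeta\big)+r\,\psi\big(\tfrac{1+\nu}{2}-is^\rho\zeta\big)\Big]-is^\rho\ln s-is^\rho\big(\ln(i\zeta)+r\ln(-i\zeta)\big),
\]
and split $\Sigma_5$ into the piece with $|\zeta|>\delta$ and the piece with $|\zeta|\le\delta$ for a small fixed $\delta>0$. On $|\zeta|>\delta$ I would insert the asymptotic expansion of $\ln\mathcal{G}$ — equivalently $(\ln\mathcal{G})'(\zeta)=\tfrac{c_5+c_6}{\zeta}-\tfrac{c_8}{is^\rho\zeta^2}+\cdots$ — together with \eqref{asymptotics lnGamma and lnG} for $\ln\Gamma(w(\zeta))$; the resulting integrand is a combination of $\zeta^{-1}$ and $\zeta^{-2}$ times elementary functions of $\ln(\pm i\zeta)$ and of $\ln s$, whose antiderivatives are explicit. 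Deforming the truncated contour onto the circular arc $\gamma_{b_2b_1}$ to account for the branch point of $\ln\zeta$ at $0$ and the cuts of $\ln(\pm i\zeta)$, and evaluating the antiderivatives at $b_1,b_2$ and at the two points of $\Sigma_5$ with $|\zeta|=\delta$, should yield $X_{3,r}^{(c)}$ together with $\Omega(\ln(s^\rho))$-terms and $\delta$-dependent terms to be cancelled later.

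On the piece with $|\zeta|\le\delta$ I would substitute $\zeta=\eta/s^\rho$. Using $(c_1+c_2)\rho=1$ and $c_4=(c_5+c_6)\rho$ — both immediate from \eqref{eq: values of constants cj Fr} — the terms proportional to $\ln s$ cancel, $is^\rho\,d\zeta$ becomes $i\,d\eta$, the rescaled contour fills out the two rays issuing from $0$ in the directions of $b_1$ and $b_2$, and on it $\ln\mathcal{G}$ reduces to the $s$-independent function $\ln\Gamma\big(\tfrac{1+\nu}{2}+i\eta\big)-r\ln\Gamma\big(\tfrac{1+\nu}{2}-i\eta\big)-i\eta\big(\ln(i\eta)+r\ln(-i\eta)-(r+1)\big)$. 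The contribution of this region is evaluated in closed form by one further integration by parts, using \eqref{identity int ln Gamma} to rewrite the antiderivatives of $\ln\Gamma$ through Barnes' $G$-function and the integral representation of the Hurwitz zeta function to identify the remaining $\ln\Gamma\big(\tfrac{1+\nu}{2}-i\eta\big)$-type integral in terms of $\zeta'(-1,\tfrac{1+\nu}{2}+1)$; once the $\delta$- and $\ln(s^\rho)$-dependent pieces are combined with those from $|\zeta|>\delta$, the surviving constant is $X_{2,r}^{(c)}$, which is where $\ln G(1+\nu)$, $\zeta'(-1)$ and $\zeta'(-1,\tfrac{1+\nu}{2}+1)$ enter. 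I expect the main obstacle to be precisely this last step: isolating the $\bigO(1)$ constant in the neighbourhood of $\zeta=0$ and matching it against the $|\zeta|>\delta$ analysis so that all $\delta$- and $\ln(s^\rho)$-dependent terms cancel, while keeping every branch of $\ln$, $\ln\Gamma$ and $\ln G$ consistent with the conventions of Section \ref{Section: preliminaries}.
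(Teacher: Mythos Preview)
Your initial integration by parts and the identification of the boundary term with $X_{1,r}^{(c)}$ (including the remark about the $c_8$-contribution) agree with the paper.  The genuine divergence is in how you treat the remaining integral $\int_{\Sigma_5}\ln\Gamma\big(\tfrac{1+\nu}{2}-is^\rho\zeta\big)(\ln\mathcal{G})'(\zeta)\tfrac{d\zeta}{2\pi i}$.

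The paper does \emph{not} split the contour at $|\zeta|=\delta$.  Instead, after the change of variables $w=is^\rho\zeta$, it splits the \emph{integrand} by adding and subtracting an explicit rational function $f(w)=\tfrac{\nu}{2}\big(\tfrac{1}{w-m}-\tfrac{m}{(w-m)^2}\big)+\tfrac{1-3\nu^2}{24(w-m)^2}$ (with $m=\tfrac{1+\nu}{2}$), chosen so that one piece, $X_{2,r}$, has integrand $O(w^{-2}\ln w)$ and can be replaced by an integral over a fixed vertical contour $\gamma_\infty$ with error $O(\ln(s^\rho)/s^\rho)$.  That $s$-independent integral is then evaluated by a \emph{second} integration by parts followed by closing the contour to the right and summing residues of $\psi\big(\tfrac{1+\nu}{2}-w\big)$; the resulting infinite series is computed using the Barnes~$G$ recursion and the identity \eqref{asymptotics for X2 b}, which is where $\zeta'(-1,\tfrac{1+\nu}{2}+1)$ enters.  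The complementary piece $X_{3,r}$ is handled by deforming to $is^\rho\gamma_{b_2b_1}$ and inserting the Stirling and $\psi$ expansions, exactly as you propose for your outer region.

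Your contour-splitting scheme could in principle reproduce the same constants, but the step you flag as the ``main obstacle'' is a real gap as written.  After rescaling, the inner integrand $\ln\Gamma(\tfrac{1+\nu}{2}-i\eta)\big[\psi(\tfrac{1+\nu}{2}+i\eta)+r\psi(\tfrac{1+\nu}{2}-i\eta)-\ln(i\eta)-r\ln(-i\eta)\big]$ behaves like $(\text{const})\,\eta^{-1}\cdot\eta\ln\eta=O(\ln\eta)$ as $|\eta|\to\infty$ along the rays through $b_1,b_2$, so the inner and outer pieces each diverge like $(\ln(\delta s^\rho))^2$ and must be matched to two orders.  More importantly, there is no ``integral representation of the Hurwitz zeta function'' that directly evaluates an integral of $\ln\Gamma$ along oblique rays in the directions $b_1,b_2$; to get $\zeta'(-1,\tfrac{1+\nu}{2}+1)$ you would in any case have to deform the $\eta$-contour to a vertical line and sum residues, at which point you are back to the paper's method.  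The paper's add/subtract trick avoids all of this bookkeeping: it produces a genuinely convergent, $s$-independent integral $X_{2,r}^{(c)}$ from the outset, so no $\delta$-matching is needed.
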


\begin{proof}
We first rewrite the integral $X_r$ in a convenient way. Performing an integration by parts yields
\begin{align} \label{Xr after int by parts}
X_r&=-\frac{\ln \Gamma \big( \frac{1+\nu}{2}-is^\rho\zeta\big)  \ln \mathcal{G}(\zeta) }{2\pi i} \Bigg|_{\zeta=b_1}^{b_2}+\int_{\Sigma_5} \ln \Gamma  \bigg( \frac{1+\nu}{2}-is^\rho \zeta\bigg) \frac{\mathcal{G}'(\zeta)}{\mathcal{G}(\zeta)} \frac{d\zeta}{2\pi i}.
\end{align}
By the definition \eqref{def of G} of $\mathcal{G}(\zeta)$ (with $F$ replaced by $F_{r}$) and the identity $\rho^{-1} = c_{1}+c_{2}$, it holds that
\begin{align} \label{derivative of lnG}
\frac{\mathcal{G}'(\zeta)}{\mathcal{G}(\zeta)} = is^\rho \bigg\{ \psi\bigg( \frac{1+\nu}{2}+is^\rho \bigg)-\ln(is^\rho \zeta)  + r \psi\bigg( \frac{1+\nu}{2}-is^\rho \bigg)- r\ln(-is^\rho \zeta)  \bigg\}.
\end{align}
By substituting \eqref{derivative of lnG} into \eqref{Xr after int by parts} and using the change of variables $w=i s^\rho \zeta$, we split $X_{r}$ as
\begin{align} \label{splitting Xr}
X_r=X_{1,r}+X_{2,r}+X_{3,r},
\end{align}
where $X_{1,r}$, $X_{2,r}$ and $X_{3,r}$ are given by
\begin{align*}
X_{1,r}&= -\frac{\ln \Gamma \big( \frac{1+\nu}{2}-is^\rho\zeta\big) \ln \mathcal{G}(\zeta)}{2\pi i}\Bigg|_{\zeta=b_1}^{b_2},
\\
X_{2,r}&=  \int_{is^\rho\Sigma_5} \ln \Gamma \bigg( \frac{1+\nu}{2} -w \bigg)  \bigg\{ \psi\bigg( \frac{1+\nu}{2}+w\bigg)-\ln(w) -f(w)\bigg\} \frac{dw}{2\pi i},
\\
X_{3,r}&= \int_{is^\rho\Sigma_5} \ln \Gamma \bigg( \frac{1+\nu}{2} -w \bigg)  \bigg\{ r \psi\bigg( \frac{1+\nu}{2}-w \bigg)-r\ln(-w)+f(w)\bigg\} \frac{dw}{2\pi i},
\end{align*}
with
\begin{align} \label{def of f}
f(w)=\frac{\nu}{2} \bigg( \frac{1}{w-m} -\frac{m}{(w-m)^2} \bigg) +\frac{1-3\nu^2}{24(w-m)^2} \qquad \mbox{ and } \qquad m =\frac{1+\nu}{2}.
\end{align}
We have added and subtracted the term $f(w)$ in order to make the integrand of $X_{2,r}$ vanish as $w^{-2} \ln w $ as $w\to \infty$. Indeed, from \eqref{asymptotics lnGamma and lnG} and the asymptotic formula \cite[Eq 5.11.2]{NIST} of the di-gamma function given by
\begin{align} \label{asymtptoics for psi}
\psi(z) = \ln z - \frac{1}{2z} -\frac{1}{12z^2}+ \bigO \bigg( \frac{1}{z^4} \bigg), \qquad z \to \infty,
\end{align}
for $|\arg z | < \pi-\delta$ with some fixed $\delta > 0$, we have
\begin{align} \label{asymptotics of psi with w}
\psi\bigg( \frac{1+\nu}{2}+w\bigg) &= \ln w +\frac{\nu}{2w} +\frac{1-3\nu^2}{24 w^2} + \bigO \bigg( \frac{1}{w^3} \bigg), 
\\
&= \ln w  +f(w)+\bigO \bigg( \frac{1}{w^3} \bigg), \qquad w \to \infty, \label{asymptotics of psi with w and f}
\end{align}
where $|\arg w| < \pi-\delta$ with some fixed $\delta > 0$. Note that the integral $X_{2,r}$ is convergent as long as $m \notin is^{\rho}\Sigma_{5}$; the choice $m=\tfrac{1+\nu}{2}$ is made because it makes the upcoming computations easier. The remainder of the proof consists of computing the large $s$ asymptotics of $X_{1,r}$, $X_{2,r}$ and $X_{3,r}$.

\paragraph{Asymptotics of $X_{1,r}$.} From \cite[Eq. (3.15)]{ClaeysGirSti}, we have
\begin{align} \label{asymptotics G}
\ln \mathcal{G}(\zeta) = c_4 \ln s +c_5 \ln (i\zeta)+c_6 \ln (-i\zeta) + c_7 + \frac{c_8}{is^\rho \zeta} + \bigO \bigg( \frac{1}{s^{2\rho}\zeta^{2}}\bigg),\qquad s^{\rho}\zeta \to \infty.
\end{align}
The asymptotic formula \eqref{asymptotics G} is in particular valid for $\zeta = b_{1}$ and $\zeta = b_{2}$. By combining these asymptotics together with \eqref{asymptotics lnGamma and lnG} and \eqref{eq: values of constants cj Fr}, we obtain
\begin{align} \label{asymptotics of X1r}
X_{1,r}=\Omega \big( \ln (s^\rho) \big) +X_{1,r}^{(c)}+ \mathcal{O}\bigg( \frac{\ln ( s^\rho )}{s^\rho} \bigg) \qquad \mbox{as } s \to + \infty,
\end{align}
where $X_{1,r}^{(c)}$ is given by \eqref{def of X1rc}.

\paragraph{Asymptotics of $X_{2,r}$.} Recall that $X_{2,r}$ is given by
\begin{align*}
X_{2,r}&=  \int_{is^\rho\Sigma_5} \ln \Gamma \bigg( \frac{1+\nu}{2} -w \bigg)  \bigg\{ \psi\bigg( \frac{1+\nu}{2}+w\bigg)-\ln(w) -f(w)\bigg\} \frac{dw}{2\pi i},
\end{align*}
and that the integrand is $\bigO(w^{-2}\ln w )$ as $w \to \infty$. Thus we have
\begin{align*}
X_{2,r}&=  \int_{\gamma_\infty} \ln \Gamma \bigg( \frac{1+\nu}{2} -w \bigg)  \bigg\{ \psi\bigg( \frac{1+\nu}{2}+w\bigg)-\ln(w) -f(w)\bigg\} \frac{dw}{2\pi i} + \bigO \bigg( \frac{\ln (s^\rho) }{s^\rho} \bigg)
\\
&=: X_{2,r}^{(c)} +  \bigO \bigg( \frac{\ln (s^\rho) }{s^\rho} \bigg)
\end{align*}
as $s \to +\infty$, where $f(w)$ is defined by \eqref{def of f} and where the contour $\gamma_\infty$ is a line oriented upwards and approaching infinity which crosses the real line between the origin and $m=\frac{1+\nu}{2}$ (see Figure \ref{fig: gamma infty}). 
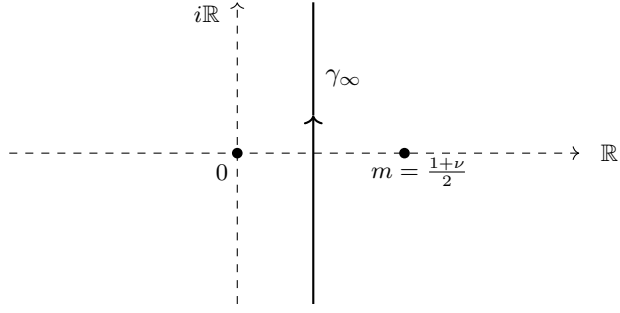
\begin{figure}
	\begin{center}
		\begin{tikzpicture}	
		
		\draw[dashed,->] (5,-2) -- (5,2);
		\draw[dashed,->] (2,0) -- (9.5,0);
		\node at (4.6,1.85) {\small $i \mathbb{R}$};
		\node at (9.9,0) {\small $\mathbb{R}$};
		
		\draw[thick,->] (6,-2) -- (6,0.5);
		\draw[thick] (6,0.5) -- (6,2);
		\node at (6.4,1) {$\gamma_\infty$};

		\fill (5,0) circle (0.07cm);
		\node at (4.8,-0.25) {\small $0$};
		
		\fill (7.2,0) circle (0.07cm);
		\node at (7.4,-0.25) {\small $m=\frac{1+\nu}{2}$};
		\end{tikzpicture}
		\caption{The contour $\gamma_\infty$.}  \label{fig: gamma infty}
	\end{center}
\end{figure}
We will compute $X_{2,r}$ by integration and then contour deformation. However, we first need to add and subtract the term $\frac{\nu}{2w}$ in the integrand and to split $X_{2,r}$ into two parts as follows:
\begin{align*}
X_{2,r}^{(c)}&=  \int_{\gamma_\infty} \ln \Gamma \bigg( \frac{1+\nu}{2} -w \bigg)  \bigg\{ \psi\bigg( \frac{1+\nu}{2}+w\bigg)-\ln(w) -\frac{\nu}{2w}-\frac{1-3\nu^2}{24(w-m)^2} \bigg\} \frac{dw}{2\pi i} 
\\
& \quad  +\int_{\gamma_\infty} \ln \Gamma \bigg( \frac{1+\nu}{2} -w \bigg)  \bigg\{  \frac{\nu}{2w} - \frac{\nu}{2} \bigg( \frac{1}{w-m} -\frac{m}{(w-m)^2}  \bigg) \bigg\} \frac{dw}{2\pi i}.
\end{align*}
Now, we integrate by parts the first integral, while the second integral can be evaluated explicitly by deforming the contour to infinity on the left half-plane (there is only a residue at $w=0$). This gives
\begin{align*}
X_{2,r}^{(c)}&=  \int_{\gamma_\infty} \psi \bigg( \frac{1+\nu}{2} -w \bigg)  \bigg\{ \ln \Gamma\bigg( \frac{1+\nu}{2}+w\bigg)+w\big( 1-\ln (w) \big) 
\\
&\qquad -\frac{\nu}{2}\ln(w)+\frac{1-3\nu^2}{12(2w-\nu-1)}-\frac{\ln(2\pi)}{2}\bigg\}\frac{dw}{2\pi i} + \frac{\nu}{2} \ln \Gamma \bigg( \frac{1+\nu}{2} \bigg).
\end{align*}
For the remaining integral, we deform the contour to infinity in the right half-plane. Note that this would not have been possible without adding and substracting the term $\frac{\nu}{2w}$. Since $\psi(\frac{1+\nu}{2}-w)$ has simple poles with residue $1$ at $w=m+n$, $n=0,1,2,...$, we pick up the following residue contributions
\begin{align*}
-\bigg\{ \ln \Gamma\big( 1+\nu + n\big)+(n+m)\big( 1-\ln (n+m) \big) -\frac{\nu}{2}\ln(n+m)+\frac{1-3\nu^2}{24n}-\frac{\ln(2\pi)}{2}\bigg\}
\end{align*}
at the points $n+m$ for $n=1,2,\ldots$, and 
\begin{align*}
 -\bigg\{ \ln \Gamma\big( 1+\nu \big)+m\big( 1-\ln (m) \big) -\frac{\nu}{2}\ln(m)-\frac{\ln(2\pi)}{2}\bigg\} +\frac{\gamma_{\mathrm{E}} (1-3\nu^2)}{24}
 \end{align*} 
at $w=m$, where $\gamma_{\mathrm{E}}$ is Euler's gamma constant.
This yields
\begin{align*}
X_{2,r}^{(c)}&= -\sum_{n=0}^{\infty} \bigg\{ \ln \Gamma\big( 1+\nu + n\big)+(n+m)\big( 1-\ln (n+m) \big) -\frac{\nu}{2}\ln(n+m)+\frac{1-3\nu^2}{24(n+1)}-\frac{\ln(2\pi)}{2}\bigg\}
\\
&\qquad +\frac{\gamma_{\mathrm{E}} (1-3\nu^2)}{24}+ \frac{\nu}{2} \ln \Gamma \bigg( \frac{1+\nu}{2} \bigg).
\end{align*}
The series is convergent since it arises from a convergent integral. We rewrite Euler's gamma constant (see \cite[Eq. 5.2.3]{NIST}) as
\begin{align} \label{identity Euler gamma}
\gamma_E=\sum_{n=1}^{\infty}\bigg[ \frac{1}{n} -\ln \bigg(  1+\frac{1}{n}\bigg) \bigg],
\end{align}
which implies
\begin{align}
X_{2,r}^{(c)}= & -\sum_{n=0}^{\infty} \bigg\{ \ln \Gamma\big( 1+\nu + n\big)+(n+m)\big( 1-\ln (n+m) \big)+ \frac{1-3\nu^{2}}{24} \ln\bigg(  1+\frac{1}{n+1}\bigg) \nonumber \\
& -\frac{\nu}{2}\ln(n+m) -\frac{\ln(2\pi)}{2}\bigg\} + \frac{\nu}{2} \ln \Gamma \bigg( \frac{1+\nu}{2} \bigg). \label{first expression X2}
\end{align}
This series can be computed explicitly. From the formula (cf. \cite[Eq. 5.17.1]{NIST})
\begin{align*}
G(z+1) =\Gamma(z) G(z),
\end{align*}
we deduce
\begin{align*}
-\sum_{n=0}^{N} \ln \Gamma\big( 1+\nu + n\big) = -\ln G \big( 2+\nu + N\big) + \ln G \big(1+\nu \big).
\end{align*}
The asymptotic formula \eqref{asymptotics lnGamma and lnG} then implies that
\begin{align}\label{asymptotics for X2 a}
-\sum_{n=0}^{N} \ln \Gamma\big( 1+\nu + n\big) = \Omega (\ln N) - \zeta'(-1)-\frac{\ln (2\pi)}{2}(1+\nu) + \ln G \big(1+\nu \big) + \bigO \big(N^{-1}\big)
\end{align}
as $N \to +\infty$, where $\zeta'(-1)$ denotes the derivative of the Riemann zeta function evaluated at $-1$. Furthermore, from \cite[Eq. (10.11)]{CLMMuttalib} with $\theta=1$, we have
\begin{align}
&\sum_{n=0}^N \bigg( \frac{1+\nu}{2} +n \bigg) \ln  \bigg( \frac{1+\nu}{2} +n \bigg) = \frac{1+\nu}{2} \ln  \bigg( \frac{1+\nu}{2} \bigg)+ \Omega(\ln N)  \nonumber
\\
&\quad + \frac{3(1+\nu^2) +8 +12 \nu}{24} - \zeta '\bigg( -1, \frac{1+\nu}{2}+1  \bigg) + \bigO \big(N^{-1}\big), \qquad N \to +\infty,\label{asymptotics for X2 b}
\end{align}
where we recall that $\zeta(z,u)$ is the Hurwitz zeta function and $\zeta'(-1, m+1)=\partial_{z}\zeta(z, m+1)|_{z=-1}$. Also, it is easy to verify from $\Gamma(z+1)=z\Gamma(z)$ that
\begin{align*}
\sum_{n=0}^N \ln \bigg( \frac{1+\nu}{2}+n \bigg)= \ln \Gamma \bigg( \frac{1+\nu}{2}+N+1 \bigg)- \ln \Gamma \bigg( \frac{1+\nu}{2}\bigg),
\end{align*}
and thus, by \eqref{asymptotics lnGamma and lnG},
\begin{align} \label{asymptotics for X2 c}
\frac{\nu}{2}\sum_{n=0}^N \ln \bigg( \frac{1+\nu}{2}+n \bigg)=  \Omega(\ln N) +\frac{\nu}{4} \ln (2\pi) -\frac{\nu}{2} \ln \Gamma \bigg( \frac{1+\nu}{2}\bigg) + \bigO \big(N^{-1}\big), \qquad N \to +\infty.
\end{align}
Another straightforward calculation shows that
\begin{align} \label{asymptotics for X2 d}
\sum_{n=0}^{N} \ln \bigg( 1+ \frac{1}{n+1} \bigg) = \ln (N+2) = \ln N + \bigO \big(N^{-1}\big), \qquad N \to +\infty.
\end{align}
Substituting \eqref{asymptotics for X2 a}, \eqref{asymptotics for X2 b}, \eqref{asymptotics for X2 c}, and \eqref{asymptotics for X2 d} into \eqref{first expression X2} gives 
\begin{align} \label{asymptotics X2r}
X_{2,r}= X_{2,r}^{(c)} +  \bigO \bigg( \frac{\ln (s^\rho) }{s^\rho} \bigg), \qquad s\to +\infty,
\end{align}
where $X_{2,r}^{(c)} $ is given by \eqref{def of X2rc}.

\paragraph{Asymptotics of $X_{3,r}$.} The integrand of $X_{3,r}$ is analytic on the left of $i s^\rho \Sigma_5$. By deforming the contour $i s^\rho \Sigma_5$ to $is^\rho \gamma_{b_2b_1}$, where $\gamma_{b_2b_1}$ is defined as in \eqref{new expression I_1r}, we rewrite $X_{3,r}$ as follows:
\begin{align*}
X_{3,r}&= -\int_{is^\rho\gamma_{b_2b_1}} \ln \Gamma \bigg( \frac{1+\nu}{2} -w \bigg)  \bigg\{ r \psi\bigg( \frac{1+\nu}{2}-w \bigg)-r\ln(-w)+f(w)\bigg\} \frac{dw}{2\pi i}
\\
&=  -\int_{is^\rho\gamma_{b_2b_1}} \ln \Gamma \bigg( \frac{1+\nu}{2} -w \bigg)  \bigg\{ r \psi\bigg( \frac{1+\nu}{2}-w \bigg)-r\ln(-w)+ \frac{\nu}{2w} + \frac{1-3 \nu^2}{24w^2}\bigg\} \frac{dw}{2\pi i} + \bigO \bigg( \frac{\ln (s^\rho)}{s^\rho}\bigg)
\end{align*}
as $s \to +\infty$. Using the asymptotic formulas \eqref{asymptotics lnGamma and lnG} and \eqref{asymtptoics for psi} for the integrand of $X_{3,r}$, we find
\begin{align*}
X_{3,r}= & -\int_{is^\rho\gamma_{b_2b_1}} \bigg\{ \frac{\nu}{2} (r-1)\big(\ln(-w)-1\big) 
\\
& + \frac{1+r-3\nu^2(1+r)-6(r-1)\nu \ln(2\pi)-(1+r+3(r-3)\nu^2)\ln(-w)}{24w} 
\\
&+\bigO \big(w^{-2}\ln(-w)\big)\bigg\} \frac{dw}{2\pi i} + \bigO \bigg( \frac{\ln (s^\rho)}{s^\rho}\bigg), \qquad s\to +\infty.
\end{align*}
After the change of variables $w=is^\rho \zeta$, we obtain
\begin{align}
X_{3,r} &= \Omega \big( \ln (s^\rho) \big)+\frac{1+r+3(r-3)\nu^2}{24} \int_{\gamma_{b_2b_1}} \frac{\ln(-i\zeta)}{\zeta} \frac{d\zeta}{2\pi i} \nonumber
\\
&\quad - \frac{1+r-3\nu^2(1+r)-6(r-1)\nu \ln(2\pi)}{24} \int_{\gamma_{b_2b_1}} \frac{1}{\zeta}\frac{d\zeta}{2\pi i}+ \bigO \bigg( \frac{\ln (s^\rho)}{s^\rho}\bigg) \label{lol6}
\end{align}
as $s \to +\infty$. Since
\begin{align} \label{integral identities X3r}
 \int_{\gamma_{b_2b_1}} \frac{1}{\zeta}\frac{d\zeta}{2\pi i}= \frac{\ln (b_1/b_2)}{2\pi i}, \qquad  \int_{\gamma_{b_2b_1}} \frac{\ln(-i\zeta)}{\zeta} \frac{d\zeta}{2\pi i}= \frac{\ln(b_1/b_2)\ln(-b_1b_2)}{4\pi i},
\end{align}
the asymptotics \eqref{lol6} can be rewritten as
\begin{align} \label{asymptotics of X3r}
X_{3,r} =\Omega \big( \ln (s^\rho) \big)+X_{3,r}^{(c)} + \bigO \bigg( \frac{\ln (s^\rho)}{s^\rho}\bigg), \qquad s \to +\infty,
\end{align}
where $X_{3,r}^{(c)} $ is given by \eqref{def of X3rc}.
\paragraph{Asymptotics of $X_{r}$.} By substituting \eqref{asymptotics of X1r}, \eqref{asymptotics X2r}, and \eqref{asymptotics of X3r} into \eqref{splitting Xr}, we obtain \eqref{asymp for Xr Prop}. This completes the proof of Proposition \ref{prop: asymptotics Xr}.
\end{proof}

\subsubsection{Asymptotics of $Z_r$}
In this subsection, we compute the asymptotics of
\begin{align}\label{def of Zr}
Z_r=-2is^\rho \int_{\gamma_{b_2b_1}}\psi \Big(\frac{1+\nu}{2} -i s^{\rho} \zeta\Big) p(\zeta)\frac{d\zeta}{2\pi i},
\end{align}
where $p(\zeta)$ is defined by \eqref{def of p}. Some of the following computations are similar to those performed in \cite[Section 8]{CLMMuttalib}. In particular, the quantity $\hat{f}_{1}$ in \cite[Eq (3.15)]{CLMMuttalib} is in our case equal to
\begin{align*}
\hat{f}_{1} = -i \frac{3\nu^{2}-1}{24}
\end{align*}
and appears naturally in the asymptotics of $Z_{r}$. 
\begin{proposition}\label{prop: asymptotics of Zr}
Let $\nu > -1$ and $Z_r$ be defined by \eqref{def of Zr}. Then
\begin{align} \label{asymptotics of Zr}
Z_r= \Omega \big(\ln (s^\rho)  \big) + Z_r^{(c)}  +\bigO \bigg( \frac{\ln s^\rho}{s^\rho} \bigg)
\end{align}
as $s \to +\infty$ uniformly for $r$ in compact subsets of $[1,+\infty)$, where
\begin{align} \label{def of Zrc}
Z_{r}^{(c)} &=-2i \Bigg\{\frac{1}{4}  \bigg( \frac{c_8(\arg b_1 - \arg b_2)}{2} + \frac{\pi (c_8-2i\hat{f}_{1})}{2}\bigg(1-\frac{\im b_2}{|b_2|}\bigg) \bigg) \nonumber +\frac{1}{2\pi i} \Bigg( \frac{ic_8}{4}\big( (\ln b_1)^2 - (\ln b_2)^2 \big) \nonumber
\\
& \qquad \quad + \frac{\pi (c_8-2i\hat{f}_{1})}{2 |b_2|} \bigg[ |b_2|-\im b_2 - |b_2| \ln\bigg(\frac{2i |b_2|^2}{|b_2|+ \im b_2}\bigg) +\im b_2 \ln \bigg(\frac{i(|b_2|+\im b_2)}{2} \bigg)  \bigg] \Bigg)\nonumber
\\
&\qquad + \frac{\nu}{16 \pi} \bigg(-4 i \pi \ln(|b_2|)(c_5-c_6)+4i \pi c_6 \ln\bigg(\frac{2}{|b_2|+\im b_2}\bigg) +\ln b_1(-2c_7+i\pi(3c_5+c_6)) \nonumber
\\
& \quad \qquad -(c_5+c_6)\ln^2 (b_1) +\ln(b_2)(2c_7+i \pi (c_5-c_6)) + (c_5+c_6) \ln^2 (b_2) +2 \pi^2 c_5\bigg) \Bigg\}.
\end{align}
\end{proposition}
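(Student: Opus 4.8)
The plan is to mirror the treatment of $X_r$ and the analysis in \cite[Section 8]{CLMMuttalib}. The first step is to obtain the large $s$ asymptotics of $p(\zeta)$, uniformly for $\zeta\in\gamma_{b_2b_1}$, by inserting the expansion \eqref{asymptotics G} of $\ln\mathcal{G}(\xi)$ into the defining Cauchy integral \eqref{def of p}. The constant-in-$\xi$ part $c_4\ln s+c_7$ contributes a multiple of $r(\zeta)\int_{\Sigma_5}\frac{d\xi}{r_+(\xi)(\xi-\zeta)}$, while the terms $c_5\ln(i\xi)$, $c_6\ln(-i\xi)$ and $\frac{c_8}{is^\rho\xi}$ contribute similar, slightly more involved integrals; all of these are elementary and can be evaluated by collapsing $\Sigma_5$ and computing residues at $\xi=\zeta$, $\xi=0$ and $\xi=\infty$, exactly as in the construction of the global parametrix $P^{(\infty)}$ in \cite{ClaeysGirSti}. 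The error term $\bigO(s^{-2\rho}\xi^{-2})$ in \eqref{asymptotics G} contributes, after the Cauchy transform and the factor $s^\rho$ in \eqref{def of Zr}, only to the final remainder, because $\gamma_{b_2b_1}$ stays a fixed positive distance from $\Sigma_5$. This yields an expansion $p(\zeta)=(\text{a term proportional to }\ln s)+p^{(c)}(\zeta)+\bigO(s^{-2\rho})$, uniformly on $\gamma_{b_2b_1}$.

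The second step is to insert the digamma asymptotics. With $w=-is^\rho\zeta$ and $m=\frac{1+\nu}{2}$, \eqref{asymtptoics for psi} (equivalently \eqref{asymptotics of psi with w}) gives $\psi\big(\frac{1+\nu}{2}-is^\rho\zeta\big)=\ln(-is^\rho\zeta)+\frac{\nu}{2(-is^\rho\zeta)}+\frac{1-3\nu^2}{24(-is^\rho\zeta)^2}+\bigO(s^{-3\rho})$, uniformly for $\zeta\in\gamma_{b_2b_1}$, since on this arc the argument stays a fixed distance from $(-\infty,0]$ and (for $\nu>-1$) the poles at $\zeta=\frac{1}{is^\rho}(m+j)$ lie on $i\mathbb{R}^-$ near $0$, off the contour. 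The coefficient $\frac{1-3\nu^2}{24}$ is precisely what produces the quantity $\hat{f}_1=-i\frac{3\nu^2-1}{24}$, the specialization of \cite[Eq.\ (3.15)]{CLMMuttalib}, after accounting for the $i$ coming from $w=-is^\rho\zeta$.

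The third step is to multiply the two expansions, multiply by $-2is^\rho$, and integrate over $\gamma_{b_2b_1}$ term by term, tracking orders in $s^\rho$. Products whose $\zeta$-integral is $\bigO(1)$ combine with the $\ln s$ factors to give genuinely growing contributions, which are all absorbed into $\Omega(\ln s^\rho)$; products of order $s^{-\rho}$ — obtained by pairing the $\frac{c_8}{is^\rho\xi}$-part of $p$ (or the $\frac{\nu}{2w}$ and $\frac{1-3\nu^2}{24w^2}$ corrections of $\psi$) with an $\bigO(1)$-or-$\ln s$ factor from the other factor — produce, after multiplication by $s^\rho$, the constant term $Z_r^{(c)}$ (possibly times $\ln s$, again absorbed); products of order $s^{-2\rho}$ or smaller give the error $\bigO(s^{-\rho}\ln s^\rho)$. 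Isolating the $s^{-\rho}$-terms reduces everything to a finite list of elementary integrals over $\gamma_{b_2b_1}$: the four identities \eqref{integral identities I1r}, the two identities \eqref{integral identities X3r}, together with a handful more of the same type, such as $\int_{\gamma_{b_2b_1}}\frac{d\zeta}{\zeta^2 r(\zeta)}$, $\int_{\gamma_{b_2b_1}}\frac{\ln(\pm i\zeta)}{\zeta}d\zeta$ and $\int_{\gamma_{b_2b_1}}\frac{\ln^2(\pm i\zeta)}{\zeta}d\zeta$, evaluated by explicit antiderivatives or residues; these account for the $(\ln b_1)^2-(\ln b_2)^2$, $\arg b_1-\arg b_2$ and $\im b_2$ structure visible in \eqref{def of Zrc}. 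Summing the contributions yields \eqref{def of Zrc}.

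The main obstacle will be the bookkeeping: the product of the two multi-term expansions generates many cross-terms, and one must simultaneously keep the branch choices of $\ln\zeta$, $\ln(\pm i\zeta)$ and $r(\zeta)$ consistent along $\gamma_{b_2b_1}$ (which runs through the upper half-plane from $b_2$ to $b_1$, crossing $i\mathbb{R}^+$) and reliably separate the genuinely growing pieces — safely relegated to $\Omega(\ln s^\rho)$ — from the order-$1$ pieces that must be retained in $Z_r^{(c)}$. A secondary point requiring care is the uniformity on $\gamma_{b_2b_1}$ of the asymptotics of $p$ and of $\psi$, together with the uniform bound on $\int_{\Sigma_5}\frac{|\xi|^{-2}}{|r_+(\xi)|\,|\xi-\zeta|}\,|d\xi|$ for $\zeta\in\gamma_{b_2b_1}$ that justifies absorbing the error term in \eqref{asymptotics G} into the final remainder.
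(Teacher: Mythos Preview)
Your overall strategy --- expand both $\psi$ and $p$ in inverse powers of $s^\rho$, multiply, integrate over $\gamma_{b_2b_1}$, and sort by orders --- is exactly what the paper does. However, there is a genuine gap in your Step~1 and a resulting misattribution in Step~2.

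The expansion \eqref{asymptotics G} of $\ln\mathcal{G}(\xi)$ is an asymptotic as $s^\rho\xi\to\infty$; it is \emph{not} uniform in $\xi$ on $\Sigma_5$, and in particular it breaks down near $\xi=0$, where $\ln\mathcal{G}(\xi)$ tends to the finite value $\ln F\big(\frac{1+\nu}{2}\big)$. The remainder you write as $\bigO(s^{-2\rho}\xi^{-2})$ is therefore not a uniform bound on $\Sigma_5$, and since $\xi^{-2}/r_+(\xi)$ is not integrable near $\xi=0$, your claim that this remainder passes harmlessly through the Cauchy transform is unjustified (the positive distance from $\gamma_{b_2b_1}$ to $\Sigma_5$ is irrelevant here; the difficulty is in the $\xi$-integral, not in $\zeta$). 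The correct analysis, carried out in \cite[Section~3]{CLMMuttalib} and simply quoted in the paper as \eqref{asymptptics of p}, handles the neighbourhood of $\xi=0$ separately; that local contribution produces an additional term at order $s^{-\rho}$ in $p(\zeta)$, namely the piece proportional to $\hat f_1$ inside $\mathcal{A}(\zeta)$. Here $\hat f_1$ is the coefficient in the expansion of $\ln\mathcal{G}$ at $\zeta=0$, which is what \cite[Eq.~(3.15)]{CLMMuttalib} records.

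Consequently, your claim that $\hat f_1$ arises from the $\frac{1-3\nu^2}{24w^2}$ term of the digamma expansion is wrong. That term in $\psi$ is of order $s^{-2\rho}$; paired with the $O(\ln s^\rho)$ or $O(1)$ parts of $p$ and then multiplied by $s^\rho$, it contributes only to the error $\bigO(s^{-\rho}\ln s^\rho)$, not to $Z_r^{(c)}$. In the paper (see \eqref{asymptotics Zr intermediate}), $\hat f_1$ enters through $\mathcal{A}(\zeta)$ paired with the $O(1)$ term $\ln(-i\zeta)$ of $\psi$; the expansion \eqref{asymptotics of psi with parameter s} of $\psi$ is only needed to order $s^{-\rho}$ with error $O(s^{-2\rho})$. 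If you follow your plan as written, you will obtain $c_8$ in place of $c_8-2i\hat f_1$ throughout \eqref{def of Zrc} and miss the correct constant.
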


\begin{proof}
By \eqref{asymtptoics for psi}, we see that
\begin{align} \label{asymptotics of psi with parameter s}
\psi\bigg( \frac{1+\nu}{2} -is^\rho \zeta\bigg)= \ln(s^\rho) + \ln(-i\zeta) -\frac{\nu}{2 i s^\rho \zeta} + \bigO\big(s^{-2\rho}\big),\qquad \mbox{as } s \to +\infty,
\end{align}
uniformly for $\zeta \in \gamma_{b_2 b_1}$. From \cite[beginning of Section 8.3 and Eq. (3.38)]{CLMMuttalib} with the coefficients $c_{1},\ldots,c_{8}$ given by \eqref{eq: values of constants cj Fr}, we have
\begin{align} \label{asymptptics of p}
p(\zeta)= -\frac{c_4}{2 \rho} \ln (s^\rho) +\frac{\mathcal{B}(\zeta)}{2} + \frac{\mathcal{A}(\zeta)}{s^\rho} + \bigO\big(s^{-2\rho}\big)
\end{align}
as $s\to +\infty$ uniformly for $\zeta \in \gamma_{b_2b_1}$ and $r$ in compact subsets of $[1,+\infty)$, where
\begin{align} \label{def of mathcalB}
\mathcal{B}(\zeta)&= -c_{5}r(\zeta) \int_{0}^{i\infty} \frac{d\xi}{r(\xi)(\xi-\zeta)} - c_{6}r(\zeta) \int_{0}^{-i\infty} \frac{d\xi}{r(\xi)(\xi-\zeta)} - c_{5} \ln(i\zeta)-c_{6}\ln(-i\zeta)-c_{7},
\\
\mathcal{A}(\zeta) &= \frac{i c_8}{2\zeta} +r(\zeta) \frac{c_8-\frac{3\nu^2 -1}{12} }{2 \zeta |b_2|}.
\end{align}
After substituting \eqref{asymptotics of psi with parameter s} and \eqref{asymptptics of p} into the definition \eqref{def of Zr} of $Z_r$, we write
\begin{align} 
Z_r & = \Omega\big( \ln (s^\rho) \big)  -2i \bigg( \int_{\gamma_{b_2b_1}} \mathcal{A}(\zeta)\big(\ln(\zeta) - \tfrac{\pi i}{2}\big) \frac{d\zeta}{2\pi i} - \frac{\nu}{4i}\int_{\gamma_{b_2b_1}}\frac{\mathcal{B}(\zeta)}{\zeta} \frac{d\zeta}{2\pi i} \bigg) + \bigO \bigg( \frac{\ln s^\rho}{s^\rho} \bigg) \label{asymptotics Zr intermediate}
\end{align}
as $s \to +\infty$. It was shown in \cite[Lemma 8.4]{CLMMuttalib} that
\begin{align}
 \int_{\gamma_{b_2b_1}} \mathcal{A}(\zeta) d\zeta  \label{calAintegral}
= &\;  -\frac{c_8(\arg b_1-\arg b_2)}{2} - \frac{\pi \big( c_8-\frac{3\nu^2-1}{12} \big)}{2} \bigg( 1-\frac{\im b_2}{|b_2|} \bigg),
\\
 \int_{\gamma_{b_2b_1}} \ln (\zeta) \mathcal{A}(\zeta) d\zeta  \nonumber
= &\; \frac{ic_8}{4}((\ln{b_1})^2 - (\ln{b_2})^2) 
+ \frac{\pi \big(c_8-\frac{3\nu^2-1}{12}\big)}{2|b_2|}\bigg\{|b_2| - \im{b_2} - |b_2| \ln\bigg(\frac{2i|b_2|^2}{|b_2| + \im{b_2}}\bigg)
\\ \label{lnzetacalAintegral}
& + (\im{b_2})\ln\bigg(\frac{i(|b_2| + \im{b_2})}{2}\bigg)\bigg\},
\\
\int_{\gamma_{b_2b_1}} \frac{\mathcal{B}(\zeta)}{\zeta} d\zeta 
= &\; \frac{1}{2} \bigg\{-4 i \pi  \ln(|b_2|) (c_5-c_6)+4 i \pi  c_6 \ln \left(\frac{2}{|b_2|+\im(b_2)}\right) \nonumber
\\\nonumber
& +\ln(b_1) (-2 c_7+i \pi  (3 c_5+c_6))-(c_5+c_6)(\ln{b_1})^2 
\\\label{calBoverzetaintegral}
& +\ln(b_2) (2 c_7+i \pi  (c_5-c_6))+ (c_5+c_6)(\ln{b_2})^2 +2 \pi ^2 c_5\bigg\}.
\end{align}
By substituting \eqref{calAintegral}, \eqref{lnzetacalAintegral}, and \eqref{calBoverzetaintegral} into \eqref{asymptotics Zr intermediate}, we obtain \eqref{asymptotics of Zr} after a long computation. This completes the proof of the proposition.
\end{proof}

\subsection{Asymptotics of $I_{3,r}(K)$ and $I_{4,r}(K)$}
In this section we compute the large $s$ asymptotics of $I_{3,r}(K)$ and $I_{4,r}(K)$ defined in \eqref{def of I3r} and \eqref{def of I4r}.
\begin{proposition} \label{prop: asymptotics I3r}
Let $\nu > -1$ and $K=s^\rho$. Then
\begin{align}\label{asymp I3r in prop}
I_{3,r}(K) = \Omega \big( \ln(s^\rho) \big) + I_{3,r}^{(c)} + \bigO \bigg( \frac{\ln (s^\rho)}{s^\rho} \bigg)
\end{align}
as $s \to +\infty$ uniformly for $r$ in compact subsets of $[1,+\infty)$, where
\begin{align} \nonumber
I_{3,r}^{(c)}=\frac{-2(1+r)+3r(r^2-1)\nu^2 +2r(1+3r\nu^2)\ln(r)}{24r (1+r)^2}.
\end{align}
\end{proposition}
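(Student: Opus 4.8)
The plan is to combine the small-norm structure of the RH problem for $R$ with an explicit residue computation, uniformly for $r$ in compact subsets of $[1,+\infty)$. Since the jump matrix of $R$ is $I + \bigO(s^{-\rho})$ on $\Gamma_{R}$, the solution has the uniform expansion $R(\zeta) = I + s^{-\rho}R^{(1)}(\zeta) + \bigO(s^{-2\rho})$ for $\zeta$ bounded away from $b_{1}$ and $b_{2}$ (in particular on $\sigma_{s^{\rho}}$ once $s$ is large), where $R^{(1)}$ is the rational matrix with simple poles at $b_{1},b_{2}$ coming from the matching with the Airy local parametrices, exactly as in \cite{ClaeysGirSti,CLMMuttalib}, and the remainder decays like $\bigO(s^{-2\rho}\zeta^{-2})$ as $\zeta\to\infty$; hence $R^{-1}R' = s^{-\rho}(R^{(1)})' + \bigO(s^{-2\rho})$ on $\sigma_{s^{\rho}}$. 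Using $P^{(\infty)} = e^{-p_{0}\sigma_{3}}Q^{(\infty)}e^{p\sigma_{3}}$, the cyclicity of the trace, and the commutation of $\sigma_{3}$ with diagonal matrices, the exponential factors in the integrand of \eqref{def of I3r} cancel:
\begin{align*}
\tr\!\Big[P^{(\infty)}(\zeta)^{-1}e^{-p_{0}\sigma_{3}}\,X\,e^{p_{0}\sigma_{3}}P^{(\infty)}(\zeta)\,\sigma_{3}\Big] = \tr\!\Big[Q^{(\infty)}(\zeta)^{-1}\,X\,Q^{(\infty)}(\zeta)\,\sigma_{3}\Big].
\end{align*}
Writing $T(\zeta):=\tr[Q^{(\infty)}(\zeta)^{-1}(R^{(1)})'(\zeta)Q^{(\infty)}(\zeta)\sigma_{3}]$, it follows that $I_{3,r}(s^{\rho}) = -\tfrac{1}{2s^{\rho}}\int_{\sigma_{s^{\rho}}}\ln\Gamma\big(\tfrac{1+\nu}{2}-is^{\rho}\zeta\big)T(\zeta)\tfrac{d\zeta}{2\pi i} + \bigO(s^{-\rho}\ln s^{\rho})$, the error being controlled with Stirling's formula, which gives $|\ln\Gamma(\tfrac{1+\nu}{2}-is^{\rho}\zeta)| = \bigO(s^{2\rho}\ln s^{\rho})$ on the circular part of $\sigma_{s^{\rho}}$ (matched against the $\bigO(s^{-2\rho}|\zeta|^{-2})$ remainder) and $= \bigO(1)$ near the crossing point $\zeta_{0}/2$.

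It then remains to evaluate the leading integral by collapsing the contour. By construction of $\sigma_{K}$ the function $\zeta\mapsto\ln\Gamma(\tfrac{1+\nu}{2}-is^{\rho}\zeta)$ is analytic inside $\sigma_{s^{\rho}}$; the matrix $(R^{(1)})'$ has double poles only at $b_{1},b_{2}$; and the only remaining singularity is the branch cut $\Sigma_{5}$ of $Q^{(\infty)}$, across which the constant off-diagonal jump anticommutes with $\sigma_{3}$, so that $T_{+}=-T_{-}$ on $\Sigma_{5}$. Collapsing $\sigma_{s^{\rho}}$ onto $\Sigma_{5}\cup\{b_{1},b_{2}\}$ thus writes $I_{3,r}(s^{\rho})$, up to $\bigO(s^{-\rho}\ln s^{\rho})$, as $-\tfrac{1}{2s^{\rho}}$ times the sum of the residues of $\ln\Gamma(\cdot)T(\cdot)$ at $b_{1}$ and $b_{2}$ plus a term proportional to $\int_{\Sigma_{5}}\ln\Gamma(\tfrac{1+\nu}{2}-is^{\rho}\zeta)T_{+}(\zeta)d\zeta$; the detailed local analysis at $b_{j}$ (the $(\zeta-b_{j})^{\pm1/4}$ behaviour of $Q^{(\infty)}$ against the double pole of $(R^{(1)})'$) and the passage from the $\Sigma_{5}$-integral to an integral over the arc $\gamma_{b_{2}b_{1}}$ by rewriting everything in terms of $r_{+}$ are carried out exactly as in the proof of Proposition \ref{prop: I_1r} and as in \cite{CLMMuttalib}. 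Each residue produces a factor $-is^{\rho}\psi(\tfrac{1+\nu}{2}-is^{\rho}b_{j}) = -is^{\rho}(\ln(-is^{\rho}b_{j})+\bigO(s^{-\rho}))$ by \eqref{asymtptoics for psi}, cancelling the $1/s^{\rho}$; after inserting $\ln\Gamma(\tfrac{1+\nu}{2}-is^{\rho}\zeta) = -is^{\rho}\zeta(\ln(-is^{\rho}\zeta)-1) + \tfrac{\nu}{2}\ln(-is^{\rho}\zeta) + \tfrac{1}{2}\ln(2\pi) + \bigO((s^{\rho}\zeta)^{-1})$ and discarding the parts proportional to $\ln(s^{\rho})$ (these go into the $\Omega(\ln(s^{\rho}))$ term) and the parts of size $\bigO(s^{-\rho}\ln s^{\rho})$, one is left with $I_{3,r}^{(c)}$ expressed through $\ln(-ib_{1})$, $\ln(-ib_{2})$, the entries of $R^{(1)}$ at $b_{1},b_{2}$, and the elementary arc integrals already tabulated in \cite[Lemmas 7.2 and 8.4]{CLMMuttalib} (cf. \eqref{integral identities I1r}, \eqref{integral identities X3r}). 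Substituting $b_{2}=|b_{2}|e^{i\phi}$, $b_{1}=-\overline{b_{2}}$ with $\sin\phi=\tfrac{r-1}{r+1}$, $\cos\phi=\tfrac{2\sqrt{r}}{r+1}$, $|b_{2}|=(1+r)r^{-r/(1+r)}$ and simplifying yields the claimed formula.

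I expect the main obstacle to be computational rather than conceptual. One must write down $R^{(1)}$ and its derivative at $b_{1},b_{2}$ explicitly from the Airy parametrix matching; carry out the two double-pole residue computations, which intertwine $\psi$, the derivative of $\gamma(\zeta)=((\zeta-b_{1})/(\zeta-b_{2}))^{1/4}$, and the structure of $R^{(1)}$; evaluate the resulting $\gamma_{b_{2}b_{1}}$-integrals; and finally perform the lengthy algebraic simplification, in which all $\phi$-dependent and all $\ln(1+r)$-dependent contributions must cancel to leave the stated rational-plus-$\ln r$ expression. A secondary technical point is the mild non-uniformity of Stirling's formula near $\zeta=0\in\Sigma_{5}$, where $\ln\Gamma(\tfrac{1+\nu}{2}-is^{\rho}\zeta)$ is in fact bounded; this is handled by excising a small disk around the origin and checking that its contribution to the constant term tends to zero with the radius. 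Throughout, one must keep careful track of which terms fall into $\Omega(\ln(s^{\rho}))$, which survive as the constant, and which are $\bigO(s^{-\rho}\ln s^{\rho})$.
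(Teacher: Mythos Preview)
Your reduction to the leading term is the same as the paper's: cancel the $e^{\pm p_{0}\sigma_{3}}$ factors by cyclicity, replace $R^{-1}R'$ by $s^{-\rho}(R^{(1)})'$ up to $\bigO(s^{-2\rho})$, and arrive at
\[
I_{3,r}(s^{\rho})=-\frac{1}{2s^{\rho}}\int_{\tilde\sigma}\ln\Gamma\!\Big(\tfrac{1+\nu}{2}-is^{\rho}\zeta\Big)\,W(\zeta)\,\frac{d\zeta}{2\pi i}+\bigO\!\Big(\frac{\ln(s^{\rho})}{s^{\rho}}\Big),
\]
where $W(\zeta)=\tr\!\big[(R^{(1)})'(\zeta)\,Q^{(\infty)}(\zeta)\sigma_{3}Q^{(\infty)}(\zeta)^{-1}\big]$ (your $T$). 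The evaluation of this integral, however, does not go the way you sketch.

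First, $R^{(1)}$ does \emph{not} have only simple poles at $b_{1},b_{2}$: because $P^{(\infty)}$ carries the nontrivial factor $e^{p(\zeta)\sigma_{3}}$, the Airy matching produces $R^{(1)}(\zeta)=\frac{A}{\zeta-b_{1}}+\frac{B}{(\zeta-b_{1})^{2}}-\frac{\bar A}{\zeta-b_{2}}+\frac{\bar B}{(\zeta-b_{2})^{2}}$, with $A,B$ as in \eqref{ABexplicit}. Hence $(R^{(1)})'$ has poles of order three at $b_{j}$.

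Second, and more seriously, the points $b_{1},b_{2}$ are \emph{branch points} of $W$, not poles. The conjugating factor $Q^{(\infty)}\sigma_{3}(Q^{(\infty)})^{-1}$ has entries $\sim(\zeta-b_{j})^{-1/2}$, so $W(\zeta)\sim(\zeta-b_{j})^{-7/2}$ near $b_{j}$. There are no residues to compute there, and if you collapse $\sigma$ onto $\Sigma_{5}$ using $T_{+}=-T_{-}$, the resulting integral $\int_{\Sigma_{5}}\ln\Gamma(\cdots)\,T_{+}(\zeta)\,d\zeta$ diverges at both endpoints. The analogy with Proposition~\ref{prop: I_1r} breaks down precisely because there the singularity is only $r(\zeta)^{-1}\sim(\zeta-b_{j})^{-1/2}$, which is integrable.

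The paper therefore does \emph{not} collapse inward. Instead it rewrites $Q^{(\infty)}\sigma_{3}(Q^{(\infty)})^{-1}$ in terms of $\tilde r(\zeta)=\sqrt{(\zeta-b_{1})(\zeta-b_{2})}$ with branch cut on the straight segment $[b_{1},b_{2}]$ (so that $W$ is analytic on $\C\setminus[b_{1},b_{2}]$, in particular at $0$), replaces $\sigma_{K}$ by a small loop $\tilde\sigma$ around $[b_{1},b_{2}]$ not enclosing the origin, inserts Stirling to reduce to $\int_{\tilde\sigma}(1-\ln(-is^{\rho}\zeta))\,\zeta\,W(\zeta)\,\tfrac{d\zeta}{2\pi i}$, and then deforms \emph{outward}. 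The constant-in-$\zeta$ part is read off from the residue at infinity (using $W(\zeta)=-\tr[(A-\bar A)\sigma_{3}]\,\zeta^{-2}+\bigO(\zeta^{-3})$), and the $\ln(-i\zeta)$ part picks up only the branch cut of $\ln(-i\zeta)$ on $i\R^{-}$, yielding an explicit real integral $\int_{-R}^{0}it\,W(it)\,i\,dt$ that is evaluated by direct antidifferentiation. No endpoint singularities enter. After substituting \eqref{eq: values of constants cj Fr} and \eqref{b1b2def} this produces the stated $I_{3,r}^{(c)}$.
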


\begin{proof}
Proceeding as in the proof of \cite[Proposition 9.1]{CLMMuttalib}, one obtains
\begin{align*}
I_{3,r}(K)= -\frac{1}{2 s^\rho} \int_{\tilde{\sigma}} \ln \Gamma \bigg( \frac{1+\nu}{2} - is^\rho \zeta \bigg) W(\zeta) \frac{d\zeta}{2\pi i} + \bigO \bigg(  \frac{\ln (s^\rho)}{s^\rho}\bigg), \qquad s\to + \infty,
\end{align*}
where $\tilde{\sigma}$ surrounds the horizontal segment $[b_1,b_2]$ but does not surround the origin, and is oriented counterclockwise. The function $W(\zeta)$ is defined by
\begin{align} \nonumber
W(\zeta) &= \frac{1}{\tilde{r}(\zeta)} \tr \bigg[ \bigg(- \frac{A}{(\zeta-b_1)^2} - \frac{2B}{(\zeta-b_1)^3} + \frac{\bar A}{(\zeta-b_2)^2} - \frac{2\bar B}{(\zeta-b_2)^3} \bigg) 
\\ \label{def of W}
& \qquad \times \begin{pmatrix}
\zeta - i \im b_2 &  i \re b_2 \\ i \re b_2 & i \im b_2 -\zeta 
\end{pmatrix} \bigg],
\end{align}
where $\tilde{r}(\zeta)= \sqrt{(\zeta-b_1)(\zeta-b_2)}$ has a branch cut on $[b_1,b_2]$, such that $\tilde r(\zeta) \sim \zeta$ as $\zeta \to \infty$. The matrices $A$ and $B$ denote the coefficients appearing in the large $s$ asymptotics of $R$ (cf. \cite[Proposition 4.1]{CLMMuttalib}) and are given by
\begin{align}\label{ABexplicit}
A = \begin{pmatrix}
A_{1,1} & A_{1,2} \\ A_{2,1} & A_{2,2}
\end{pmatrix},
\qquad 
B =-\frac{5b_1}{48(c_1+c_2)} \begin{pmatrix}
i & 1 \\ 1 & -i
\end{pmatrix},
\end{align}
with
\begin{align*}
A_{1,1}&=\frac{3 \im b_2 +2i\re b_2-12(|b_2|(c_5-c_6)(c_5+c_6)+(c_5^2+c_6^2)\im b_2+2ic_5c_6 \re b_2)}{48 (c_1+c_2) \re b_2}, 
\\
A_{1,2}&=\frac{4i(3|b_2|(c_5-c_6)(1+c_5+c_6)+\im b_2 + 3(c_5+c_5^2+c_6+c_6^2)\im b_2)}{48(c_1+c_2)\re b_2}
\\
&\quad -\frac{(5+12c_6+12c_5(1+2c_6))}{48(c_1+c_2)},
\\
A_{2,1}&= \frac{12i|b_2|(c_5-c_6)(-1+c_5+c_6)+4i(1+3(c_5-1)c_5+3(c_6-1)c_6) \im b_2}{48(c_1+c_2)\re b_2}
\\
& \quad +\frac{-5+12(c_5+c_6-2c_5c_6)}{48(c_1+c_2)},
\\
A_{2,2}&=-A_{1,1}.
\end{align*}
Using the asymptotics \eqref{asymptotics lnGamma and lnG} of $\ln \Gamma$, we obtain
\begin{align}\label{lol7}
I_{3,r}(K)= -\frac{i}{2} \int_{\tilde{\sigma}} \big(1-\ln(-is^\rho \zeta)\big) \zeta W(\zeta) \frac{d\zeta}{2\pi i} + \bigO \bigg(  \frac{\ln (s^\rho)}{s^\rho}\bigg), \qquad s\to + \infty.
\end{align}
From \eqref{def of W}, we see that
\begin{align*}
W(\zeta)= - \frac{\tr [(A-\bar{A})\sigma_3]}{\zeta^2} + \bigO \big( \zeta^3 \big) =-i\frac{1-12c_5c_6}{6(c_1+c_2)\zeta^2}+ \bigO \big( \zeta^3 \big) \qquad \mbox{as } \zeta \to \infty.
\end{align*}
Thus, after splitting the leading term in \eqref{lol7} into two parts, we obtain
\begin{align} \label{intermediate I3rK}
I_{3,r}(K)=-i\frac{1-12c_5c_6}{6(c_1+c_2)}\frac{-i}{2}\bigg(1-\ln (s^\rho) \bigg) + \frac{i}{2} \int_{\tilde{\sigma}} \ln(-i\zeta) \zeta W(\zeta) \frac{d\zeta}{2\pi i} + \bigO \bigg(  \frac{\ln (s^\rho)}{s^\rho}\bigg)
\end{align}
as $s\to + \infty$, where we have deformed $\tilde{\sigma}$ to infinity for the first part. The last integral in \eqref{intermediate I3rK} can be evaluated as follows:
\begin{align*}
\frac{i}{2} \int_{\tilde{\sigma}} \ln(-i\zeta) \zeta W(\zeta) \frac{d\zeta}{2\pi i}
&= \lim_{R\to \infty} \bigg\{ \frac{i}{2}\int_{C_R}\ln(-i\zeta)\zeta W(\zeta)\frac{d\zeta}{2 \pi i} +  \frac{i}{2}\int_{-iR}^0 \zeta W(\zeta) d\zeta   \bigg\}
\\
&=\lim_{R\to \infty} \bigg\{ -\frac{i}{2}\tr [(A-\bar{A})\sigma_3] \int_{C_R}\frac{\ln(-i\zeta)}{\zeta}\frac{d\zeta}{2 \pi i} +  \frac{i}{2}\int_{-R}^0 it W(it) idt  \bigg\},
\end{align*}
where $C_{R}$ is the circle centered at the origin of radius $R$ oriented positively. Since
\begin{align*}
& \int_{C_R}\frac{\ln(-i\zeta)}{\zeta}\frac{d\zeta}{2 \pi i} = \ln(R), \\
& \tilde{r}(it)= -i \sqrt{(t-\im b_2)^2 + (\re b_2)^2}, \qquad \mbox{for }t<0,
\end{align*}
we compute the integral $\int_{-iR}^0 \zeta W(\zeta) d\zeta$ by a rather long primitive calculation, which uses the definition \eqref{def of W} of $W(\zeta)$. Then, after substituting the expressions \eqref{eq: values of constants cj} and \eqref{b1b2def}, we obtain
\begin{align}\label{lol8}
\frac{i}{2} \int_{\tilde{\sigma}} \ln(-i \zeta) \zeta W(\zeta) \frac{d\zeta}{2\pi i} &= \frac{(1+r)\big(-2+r(2+(9r-3)\nu^2)\big)+2r(1+3r\nu^2)\ln (r)}{24 r(1+r)^2}.
\end{align}
Substituting \eqref{lol8} into \eqref{intermediate I3rK} and using again \eqref{eq: values of constants cj Fr}, we obtain \eqref{asymp I3r in prop}, which finishes the proof.
\end{proof}

\begin{proposition}\label{prop: asymptotics I4r}
Let $\nu > -1$ and $K=s^\rho$. Then, for any integer $N\ge 1$,
\begin{align}
I_{4,r}(K) = \bigO \big( s^{-N\rho} \big)
\end{align}
as $s \to +\infty$ uniformly for $r$ in compact subsets of $[1,+\infty)$.
\end{proposition}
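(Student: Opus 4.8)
The plan is to establish the stronger estimate that $I_{4,r}(K)$ is exponentially small in $s^{2\rho}$; since $e^{-cs^{2\rho}}=\mathcal{O}(s^{-N\rho})$ for every $N\ge1$, this proves the proposition. The crucial input is the fact, which is part of the steepest descent analysis of \cite{ClaeysGirSti,CLMMuttalib} and which holds verbatim when $F$ is replaced by $F_r$ (see Remark \ref{remark: the case Fr}): the jump matrix $J_R=R_-^{-1}R_+$ of $R$ on the unbounded lens contours $\tilde\Sigma_j$, $j=1,\dots,4$, is exponentially small, with a decay rate growing at least linearly in $|\zeta|$; more precisely there are constants $c,C>0$, uniform for $r$ in compact subsets of $[1,+\infty)$, such that
\begin{align*}
\|J_R(\zeta)-I\|+\|J_R'(\zeta)\|\le C e^{-cs^\rho(1+|\zeta|)},\qquad \zeta\in\tilde\Sigma_j\setminus\big(\mathbb{D}_\delta(b_1)\cup\mathbb{D}_\delta(b_2)\big),\quad j=1,\dots,4.
\end{align*}

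First I would estimate $M(\zeta):=R_+^{-1}(\zeta)R_+'(\zeta)-R_-^{-1}(\zeta)R_-'(\zeta)$ on $\tilde\Sigma_K$. From $R_+=R_-J_R$ we have the identity
\begin{align*}
M=\big(J_R^{-1}(R_-^{-1}R_-')J_R-R_-^{-1}R_-'\big)+J_R^{-1}J_R',
\end{align*}
in which the first bracket vanishes identically when $J_R=I$. The small-norm estimates of \cite{ClaeysGirSti,CLMMuttalib} imply that $R$, $R^{-1}$ and $R'$ are uniformly bounded on $\tilde\Sigma_K$, so the displayed identity together with the jump estimate yields $\|M(\zeta)\|\le C e^{-cs^\rho(1+|\zeta|)}$ for $\zeta\in\tilde\Sigma_K$. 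Next, writing $P^{(\infty)}=e^{-p_0\sigma_3}Q^{(\infty)}e^{p\sigma_3}$ as in \eqref{def of Pinfty}, using cyclicity of the trace and the fact that $\sigma_3$ commutes with $e^{p\sigma_3}$, the integrand of \eqref{def of I4r} simplifies to
\begin{align*}
\tr\Big[P^{(\infty)}(\zeta)^{-1}e^{-p_0\sigma_3}M(\zeta)e^{p_0\sigma_3}P^{(\infty)}(\zeta)\sigma_3\Big]=\tr\Big[Q^{(\infty)}(\zeta)^{-1}M(\zeta)Q^{(\infty)}(\zeta)\sigma_3\Big],
\end{align*}
so the potentially large factors $e^{\pm p_0\sigma_3}$ and $e^{\pm p(\zeta)\sigma_3}$ drop out. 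Since $\gamma(\zeta)=1+\mathcal{O}(\zeta^{-1})$ as $\zeta\to\infty$, both $Q^{(\infty)}(\zeta)$ and $Q^{(\infty)}(\zeta)^{-1}$ are bounded on $\tilde\Sigma_K$, and hence this trace is $\mathcal{O}\big(e^{-cs^\rho(1+|\zeta|)}\big)$ uniformly there.

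It remains to control the Gamma factor. The rays $\tilde\Sigma_j$, $j=1,\dots,4$, lie in a fixed double sector about $\R$ that stays away from $i\R$, uniformly for $r$ in compact subsets of $[1,+\infty)$ (this uses that $\phi$, which satisfies $\sin\phi=\frac{r-1}{r+1}$, remains in a compact subset of $[0,\frac{\pi}{2})$, and that $\epsilon$ is chosen small enough); consequently $\frac{1+\nu}{2}-is^\rho\zeta$ stays in a sector $\{\,|\arg w|\le\pi-\delta_0\,\}$ for $\zeta\in\tilde\Sigma_j$, and Stirling's formula gives $\big|\ln\Gamma\big(\tfrac{1+\nu}{2}-is^\rho\zeta\big)\big|\le C s^{2\rho}|\zeta|^2$ there. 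Combining this with the previous estimates and parametrizing each ray of $\tilde\Sigma_K$ by $|\zeta|=t\in(s^\rho,\infty)$, we obtain
\begin{align*}
|I_{4,r}(K)|\le C s^{2\rho}\int_{\tilde\Sigma_K}|\zeta|^2 e^{-cs^\rho(1+|\zeta|)}\,|d\zeta|\le C s^{2\rho}\int_{s^\rho}^{\infty}t^2 e^{-cs^\rho t}\,dt=\mathcal{O}\big(s^{3\rho}e^{-cs^{2\rho}}\big)=\mathcal{O}\big(e^{-\frac{c}{2}s^{2\rho}}\big)
\end{align*}
as $s\to+\infty$, uniformly for $r$ in compact subsets of $[1,+\infty)$, which is in particular $\mathcal{O}(s^{-N\rho})$ for every integer $N\ge1$.

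I expect no conceptual difficulty in this proof: it is the statement that the far-field contribution ($|\zeta|>s^\rho$) is negligible, and everything is driven by the exponential smallness of the jumps of $R$. The only points requiring care are bookkeeping ones --- that the exponential decay of $J_R$ on the lens boundaries genuinely involves $|\zeta|$ (so that the integral over the unbounded contour converges with room to spare), that the algebraic prefactors ($\ln\Gamma$, $Q^{(\infty)}$, $R$, $R^{-1}$, $R'$) are bounded by fixed powers of $s$ uniformly in $\zeta$ on $\tilde\Sigma_K$, and that all implied constants are uniform for $r$ in compact subsets of $[1,+\infty)$ --- and the argument is essentially identical to that of the corresponding estimate for $I_{4,r}$ in \cite{CLMMuttalib}.
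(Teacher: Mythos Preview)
Your proposal is correct and follows essentially the same route as the paper. The paper's own proof is a one-line deferral to \cite[Proposition 9.2]{CLMMuttalib}, whose argument is precisely the one you spell out: exponential smallness of the jumps of $R$ on the unbounded lens contours $\tilde\Sigma_j$ forces $R_+^{-1}R_+'-R_-^{-1}R_-'$ to be exponentially small there, the cyclicity/commuting-$\sigma_3$ trick strips the $e^{\pm p\sigma_3}$ and $e^{\pm p_0\sigma_3}$ factors so that only the bounded $Q^{(\infty)}$ remains as a conjugator, and the polynomial growth of $\ln\Gamma$ along the rays is then dominated by the exponential decay. Your write-up is in fact more explicit than what either paper provides.
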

\begin{proof}
The proof is analogous to the proof of \cite[Proposition 9.2]{CLMMuttalib} and relies on the large $s$ asymptotics of $R$.
\end{proof}

\subsection{Integration of the differential identity in $r$}
\label{subsec: integration1}
In this subsection, we compute the constant term $C_{r}$ in the large gap asymptotics for the point process defined by $\mathbb{K}_r$. From \cite{ClaeysGirSti,CLMMuttalib}, these asymptotics are of the form
\begin{align} \label{asymptotics det Kr}
\det \big( 1- \mathbb{K}_r|_{[0,s]}\big) = C_r \exp \bigg( -a_r s^{2 \rho} + b_rs^{\rho} +c_r \ln s + \bigO \big( s^{-\rho} \big) \bigg),
\end{align}
where the constants $\rho$, $a_r$, $b_r$ and $c_r$ are given by
\begin{align*} 
& \rho=\frac{1}{1+r}, \qquad a_{r}=\frac{r^{\frac{1-r}{1+r}}(r+1)^2}{4}, \qquad  b_{r}=(1+r)r^{\frac{1}{1+r}} \nu, \qquad c_{r}=\frac{r-1}{12(r+1)}-\frac{r \nu^2}{2(r+1)}.
\end{align*} 
\begin{proposition} \label{prop: asymptotics of Kr}
Let $\nu > -1$ and $r\geq 1$. Then
\begin{align*}
\det \big( 1- \mathbb{K}_r|_{[0,s]}\big) = C_r \exp \bigg( -a_r s^{2 \rho} + b_rs^{\rho} +c_r \ln s + \bigO \big( s^{-\rho} \big) \bigg),
\end{align*}
where
\begin{align*}
C_r&=\frac{G(1+\nu)^r}{ (2\pi)^{\frac{r \nu}{2}}} \exp\big\{ -(r-1)\zeta'(-1)  \big\}
\exp \bigg\{ \frac{-2+r^2(r-1+12\nu^2)}{24(r+1)} \ln(r) \bigg\}
\\
&\quad \times \exp\bigg\{  - \frac{(r-1)^2+12r\nu^2}{24}\ln(1+r) \bigg\}.
\end{align*}
\end{proposition}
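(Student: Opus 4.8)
The plan is to integrate the differential identity of Lemma~\ref{lemma: differential identities}(a) in $r$, evaluating its right-hand side asymptotically by means of the five propositions of this section, and then to compare the outcome with the known form \eqref{asymptotics det Kr} of the large $s$ asymptotics in order to read off $C_r$. First I would take $K=s^\rho$ in \eqref{diff identity r} and insert the asymptotics of Propositions~\ref{prop: I_1r}, \ref{prop: asymptotics Xr}, \ref{prop: asymptotics of Zr}, \ref{prop: asymptotics I3r} and \ref{prop: asymptotics I4r} (recalling that $I_{2,r}=Z_r+X_r$ by \eqref{def of Zr and Xr}); this gives
\begin{align*}
\partial_r \ln\det\big(1-\mathbb{K}_r|_{[0,s]}\big)=\Omega\big(\ln(s^\rho)\big)+\mathcal{D}(r,\nu)+\bigO\!\left(\frac{\ln(s^\rho)}{s^\rho}\right),\qquad s\to+\infty,
\end{align*}
uniformly for $r$ in compact subsets of $[1,+\infty)$, where $\mathcal{D}(r,\nu):=I_{1,r}^{(c)}+X_{1,r}^{(c)}+X_{2,r}^{(c)}+X_{3,r}^{(c)}+Z_r^{(c)}+I_{3,r}^{(c)}$ is independent of $s$ and the $\Omega(\ln(s^\rho))$ term collects only the growing powers $s^{2\rho}$, $s^\rho$, $\ln s$.

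Next I would integrate over $r'\in[1,r]$, using the exact identity $\ln\det(1-\mathbb{K}_r|_{[0,s]})=\ln\det(1-\mathbb{K}_1|_{[0,s]})+\int_1^r\partial_{r'}\ln\det(1-\mathbb{K}_{r'}|_{[0,s]})\,dr'$. The growing terms match automatically against \eqref{asymptotics det Kr} (evaluated at $r'=r$ and at $r'=1$), and comparing the $s$-independent parts yields
\begin{align*}
\ln C_r=\ln C_1+\int_1^r\mathcal{D}(r',\nu)\,dr'.
\end{align*}
The initial value is supplied by the Bessel case: $\mathbb{K}_1$ is the kernel $\mathbb{K}$ with $q=0$, $r=1$, $\nu_1=\nu$, so \eqref{Asymptotics Bessel} and \eqref{K reduces to Bessel here} give $C_1=\frac{G(1+\nu)}{(2\pi)^{\nu/2}}2^{-\nu^2/2}$ (which is the claimed formula at $r=1$).

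It remains to compute $\mathcal{D}(r,\nu)$ in closed form and integrate it. I would substitute $c_1=1$, $c_2=r$, $c_3=-(r+1),\ldots,c_8=\tfrac{1+r}{8}(\nu^2-\tfrac13)$ from \eqref{eq: values of constants cj Fr} and $b_1=-\overline{b_2}$, $\sin\phi=\tfrac{r-1}{r+1}$, $|b_2|=r^{-r/(1+r)}(r+1)$, $\im b_2=(r-1)r^{-r/(1+r)}$ from \eqref{b1b2def} into the expressions for $I_{1,r}^{(c)}$, $X_{1,r}^{(c)}$, $X_{2,r}^{(c)}$, $X_{3,r}^{(c)}$, $Z_r^{(c)}$ and $I_{3,r}^{(c)}$, and then simplify the many logarithms of $b_1$ and $b_2$ using $b_1b_2=-|b_2|^2$ and $b_1/b_2=-e^{-2i\phi}$. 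A useful reduction is that $\tfrac{c_1+c_2}{2}\big(1-\tfrac{\im b_2}{|b_2|}\big)=1$, which renders several contributions $r$-independent; in particular, rewriting the Hurwitz term $\zeta'(-1,\tfrac{1+\nu}{2}+1)$ in $X_{2,r}^{(c)}$ via the identities $\zeta'(-1,z)=\zeta'(-1)-\ln G(z)+(z-1)\ln\Gamma(z)$ and $G(z+1)=\Gamma(z)G(z)$ makes it cancel exactly the $\ln G(\tfrac{\nu+1}{2})$ and $\ln\Gamma(\tfrac{\nu+1}{2})$ terms coming from $I_{1,r}^{(c)}$, so that the $r$-independent part of $\mathcal{D}$ reduces to $\ln G(1+\nu)-\zeta'(-1)-\tfrac{\nu}{4}\ln(2\pi)$. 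The main obstacle is precisely this bookkeeping: the transcendental constants ($\zeta'(-1)$, the Hurwitz zeta value, $\ln G$ and $\ln\Gamma$ at $\tfrac{\nu+1}{2}$, and the numerous branch-dependent logarithms of $b_1$ and $b_2$) must all cancel or recombine exactly, so that $\mathcal{D}(r,\nu)$ collapses to $\ln G(1+\nu)-\zeta'(-1)-\tfrac{\nu}{4}\ln(2\pi)$ plus an elementary function of $r$ and $\nu$; the computation is long but mechanical.

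Finally, carrying out the $r$-integration of the elementary remainder (the $\ln r$ and $\ln(1+r)$ terms of the statement arising from its rational-in-$r$ coefficients, and the $\tfrac{\nu^2}{2}\ln 2$ inside $C_1$ being reproduced by $\tfrac{(r-1)^2+12r\nu^2}{24}\ln(1+r)$ at $r=1$) and adding $\ln C_1$ produces the asserted expression for $C_r$. Since Proposition~\ref{prop: asymptotics Xr} is stated only for $\nu\ge0$, this establishes the formula for $\nu\ge0$; because both sides are analytic in $\nu$ on a complex neighbourhood of $(-1,+\infty)$ and the expansion \eqref{asymptotics det Kr} holds locally uniformly in $\nu$ (so that $C_r=C_r(\nu)$ is analytic), the identity then extends to all $\nu>-1$ by analytic continuation.
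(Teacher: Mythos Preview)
Your proposal is correct and follows essentially the same route as the paper: identify $\partial_r(\ln C_r)$ with the sum $I_{1,r}^{(c)}+X_{1,r}^{(c)}+X_{2,r}^{(c)}+X_{3,r}^{(c)}+Z_r^{(c)}+I_{3,r}^{(c)}$, collapse the transcendental pieces via the Hurwitz--Barnes identity, and integrate from $r=1$ using the Bessel value of $C_1$. One small slip: after the cancellations the $r$-independent contribution is $\ln G(1+\nu)-\zeta'(-1)-\tfrac{\nu}{2}\ln(2\pi)$, not $-\tfrac{\nu}{4}\ln(2\pi)$ (the two quarters from $I_{1,r}^{(c)}$ and $X_{2,r}^{(c)}$ add). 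Your final analytic-continuation remark, extending from $\nu\ge 0$ to $\nu>-1$, is a point the paper does not make explicit.
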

\begin{proof}
It follows from the analysis of \cite{ClaeysGirSti,CLMMuttalib} that the error term in \eqref{asymptotics det Kr} can be differentiated with respect to $r$ and that its $r$-derivative is of order $\bigO(s^{-\rho}\ln (s^\rho))$ uniformly for $r$ in compact subsets of $[1,+\infty)$. Therefore we have
\begin{align*}
\partial_r \ln \det \big( 1- \mathbb{K}_r|_{[0,s]}\big) &= -2 \partial_{r}(\rho) a_r s^{2\rho} \ln s - \partial_{r} a_r s^{2\rho} + \partial_{r}(\rho) b_r s^{\rho} \ln s + \partial_{r}(b_r) s^{\rho} \\
&\qquad  + \partial_{r}(c_r) \ln(s) + \partial_{r}(\ln C_r) + \bigO(s^{-\rho}\ln (s^\rho))
\end{align*}
as $s \to + \infty$, uniformly for $r$ in compact subsets of $[1,+\infty)$. 
Thus, from Lemma \ref{lemma: differential identities} and Propositions \ref{prop: I_1r}, \ref{prop: asymptotics Xr}, \ref{prop: asymptotics of Zr}, \ref{prop: asymptotics I3r}, and \ref{prop: asymptotics I4r}, we infer that
\begin{align*}
\partial_{r}(\ln C_r)&= I_{1,r}^{(c)} + X_{1,r}^{(c)} + X_{2,r}^{(c)} +X_{3,r}^{(c)}+Z_r^{(c)} + I_{3,r}^{(c)}
\\
&=- \frac{2-2r+3r \nu^2}{24r}-\frac{\nu \ln(2\pi)}{4} +\frac{1+r(r-1+r^2+6(2+r)\nu^2)}{12(r+1)^2}\ln(r)
\\
&\quad -\frac{1}{12}(r-1+6\nu^2)\ln(1+r) + \frac{\nu-1}{2} \ln \Gamma \bigg(\frac{1+\nu}{2}  \bigg)  -\ln G\bigg( \frac{1+\nu}{2}\bigg)-\frac{\nu\ln(2\pi)}{4}
\\
&\quad +\ln G(1+\nu)+ \frac{3\nu^2-1}{24}-\zeta'\bigg( -1; \frac{1+\nu}{2}+1\bigg) +\frac{\nu+1}{2}\ln \bigg( \frac{\nu+1}{2}\bigg).
\end{align*}
Applying the identity \cite[Eq (18)]{A1998}
\begin{align} \label{identity Hurwitz Barnes Gamma}
\ln G(z+1) = \zeta'(-1)-\zeta'(-1,z+1)+z \ln \Gamma(z+1)
\end{align}
with $z = \frac{1+\nu}{2}$, we get
\begin{align*}
\partial_{r}(\ln C_r)&=\frac{r-2}{24r}-\frac{\nu \ln(2\pi)}{2} +\frac{1+r(r-1+r^2+6(2+r)\nu^2)}{12(r+1)^2}\ln(r)
\\
&\quad -\frac{1}{12}(r-1+6\nu^2)\ln(1+r)- \zeta'(-1)+\ln G(1+\nu).
\end{align*}
Integrating this identity with respect to $r$ from $r=1$ to a fixed $r \geq 1$, we obtain
\begin{align*}
\int_{1}^{r} \partial_{r'}(\ln C_{r'})dr' &= \frac{\nu^2}{2}\ln(2)- \frac{\nu(r-1) \ln(2\pi)}{2}+ (r-1)\bigg( \ln G(1+\nu)- \zeta'(-1) \bigg)
\\
&\quad +\frac{-2+r^2(r-1+12\nu^2)}{24(r+1)} \ln(r) - \frac{(r-1)^2+12r\nu^2}{24}\ln(1+r).
\end{align*}
Since $\ln C_1=\ln G(1+\nu)- \frac{\nu}{2}\ln(2\pi)- \frac{\nu^{2}}{2}\ln 2$ by \eqref{Asymptotics Bessel}, we arrive at
\begin{align*}
\ln C_r&= r\ln G(1+\nu) - r\frac{\nu \ln(2\pi)}{2}- (r-1) \zeta'(-1)
\\
&\quad +\frac{-2+r^2(r-1+12\nu^2)}{24(r+1)} \ln(r) - \frac{(r-1)^2+12r\nu^2}{24}\ln(1+r)
\end{align*}
and the proposition follows by exponentiating both sides.
\end{proof}

\section{Asymptotics of the differential identities in $\nu_\ell$ and $\mu_\ell$}  \label{Section: Asymptotics nu mu}
In this section, we compute the large $s$ asymptotics of the differential identities \eqref{diff identity munu} and \eqref{diff identity munu2}. For $\alpha \in \{\nu_{1},\ldots,\nu_{r},\mu_{1},\ldots,\mu_{q}\}$, these identities express $\partial_\alpha  \ln \det  ( 1- \mathbb{K}|_{[0,s]})$ in terms of the quantities $I_{1,\alpha}$, $I_{2,\alpha}$, $I_{3,\alpha}(K)$, and $I_{4,\alpha}(K)$ defined in Lemma \ref{lemma: differential identities}. By computing the asymptotics of these quantities and then integrating with respect to $\alpha$, we can deduce the large $s$ asymptotics of
\begin{align*}
  \ln \det \big( 1- \mathbb{K}|_{[0,s]}\big),
\end{align*}
see also the outline in Section \ref{outlinesubsec}; in particular \eqref{second diff identity outline}-\eqref{third diff identity outline}. 

In the remainder of the paper, we let $r>q\geq 0$ be integers and let $\nu_1,\ldots, \nu_r,\mu_1, \ldots, \mu_q > -1$ be the parameters associated to the kernel $\mathbb{K}$ defined in (\ref{Meijer G kernel}). We set $\nu_{\min} := \min \{ \nu_1,\ldots, \nu_r,\mu_{1},\ldots,\mu_{q} \}$. The constants $c_1,\ldots,c_8$ and $b_1,b_2$ are defined in \eqref{eq: values of constants cj} and \eqref{b1b2def}, respectively, and we will use the notation $\Omega$ introduced at the beginning of Section \ref{Section: Asymptotics r}.



\subsection{Asymptotics of $I_{1,\alpha}$}
\begin{proposition} \label{prop: I_1alpha}
	Let $I_{1,\alpha}$ be the function defined in \eqref{I1alphadef}. Then
	\begin{align} \label{asymtptotics I_1alpha}
	I_{1,\alpha}= \Omega\big(\ln (s^\rho)\big)+ I_{1,\alpha}^{(c)} + \mathcal{O}\bigg( \frac{\ln ( s^\rho )}{s^\rho} \bigg)
	\end{align}
	as $s\to +\infty$ uniformly for $\alpha$ in compact subsets of $(-1,+\infty)$, where
	\begin{align} \label{I_1alpha3 explicit} \nonumber
	I_{1,\alpha}^{(c)}&=-\frac{(c_1+c_2)}{2}\bigg( \frac{\ln(2\pi)}{2}-\ln\Gamma\Big(\frac{1+2\alpha-\nu_{\min}}{2} \Big)\bigg)\bigg(1-\frac{\im b_2}{|b_2|} \bigg)
	\\
	&\quad -(c_1+c_2)\frac{2\alpha- \nu_{\min}}{2 } \Bigg(\frac{\ln(|b_2|+\im b_2)-\ln(2)}{2}-\frac{\im b_2}{2 |b_2|} \ln\bigg( \frac{2 |b_2|^2}{|b_2|+\im b_2}\bigg) \Bigg).
	\end{align}
\end{proposition}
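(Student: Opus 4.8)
The plan is to run the argument of Proposition~\ref{prop: I_1r} essentially verbatim, the single modification being that a primitive of $\psi$ is now needed in place of a primitive of $\ln\Gamma$, and a primitive of $\psi$ is of course $\ln\Gamma$ itself. Write $\beta := \tfrac{1+2\alpha-\nu_{\min}}{2}$; provided $\alpha\ge\nu_{\min}$, which holds on the integration ranges used in Section~\ref{Section: Asymptotics nu mu}, one has $\beta>0$, so the poles of $\psi(\beta - is^\rho\zeta)$ lie on $i\mathbb{R}^{-}$ and remain exterior to $\sigma$ by its defining property. Introduce
\[
\Phi(\zeta) := s^\rho\int_0^\zeta \psi\Big(\beta - is^\rho\xi\Big)\,d\xi = i\Big[\ln\Gamma\big(\beta - is^\rho\zeta\big) - \ln\Gamma(\beta)\Big],
\]
which is analytic inside and on $\sigma$ and vanishes at $\zeta=0$. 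Integrating \eqref{I1alphadef} by parts (the boundary terms vanish since $\sigma$ is closed) turns $I_{1,\alpha}$ into $\int_\sigma \Phi(\zeta)\,g''(\zeta)\,\tfrac{d\zeta}{2\pi i}$; then, using the formula \eqref{def of g''} for $g''$, discarding $\int_\sigma\Phi(\zeta)\zeta^{-1}\tfrac{d\zeta}{2\pi i}=0$ (as $\Phi/\zeta$ is regular at $0$), collapsing $\sigma$ onto $\Sigma_5$, using $r_+(\zeta)+r_-(\zeta)=0$ there, and deforming to the circular arc $\gamma_{b_2b_1}$ exactly as in the passage to \eqref{new expression I_1r}, I would arrive at
\[
I_{1,\alpha} = i(c_1+c_2)\int_{\gamma_{b_2b_1}}\Phi(\zeta)\Big(1-\frac{i\,\im b_2}{\zeta}\Big)\frac{1}{r(\zeta)}\,\frac{d\zeta}{2\pi i}.
\]

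Next I would expand $\Phi(\zeta)$ as $s\to+\infty$, uniformly for $\zeta\in\gamma_{b_2b_1}$. On this arc $\zeta$ lies in the upper half plane and is bounded away from $0$ and $\infty$, so $\beta - is^\rho\zeta\to\infty$ inside the right half plane, and Stirling's formula \eqref{asymptotics lnGamma and lnG}, together with $\ln(\beta - is^\rho\zeta)=\ln(s^\rho)+\ln(-i\zeta)+\mathcal{O}(s^{-\rho})$, gives
\[
\Phi(\zeta)=\Omega\big(\ln(s^\rho)\big) + i\Big[\big(\beta-\tfrac12\big)\ln(-i\zeta) + \tfrac12\ln(2\pi) - \ln\Gamma(\beta)\Big] + \mathcal{O}\Big(\tfrac{\ln(s^\rho)}{s^\rho}\Big),
\]
all $s$-growing contributions being absorbed into the $\Omega$ term. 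Substituting this into the contour integral, writing $\ln(-i\zeta)=\ln\zeta-\tfrac{i\pi}{2}$ on $\gamma_{b_2b_1}$, and evaluating by means of the four elementary integrals recorded in \eqref{integral identities I1r} (from \cite[Lemma 7.2]{CLMMuttalib}), the $\Omega$ part feeds the $\Omega(\ln s^\rho)$ term and the remainder the $\mathcal{O}(s^{-\rho}\ln s^\rho)$ error; the piece $\tfrac12\ln(2\pi)-\ln\Gamma(\beta)$, which is constant in $\zeta$, acquires the factor $\tfrac12\big(1-\tfrac{\im b_2}{|b_2|}\big)$ and yields the first line of \eqref{I_1alpha3 explicit}, while the $\ln(-i\zeta)$ piece, weighted by $\beta-\tfrac12=\tfrac{2\alpha-\nu_{\min}}{2}$, yields the second line — the residual $\pm i\pi/4$ terms coming from $\ln(-i)$ and from the $\ln i$'s hidden in \eqref{integral identities I1r} cancelling one another.

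Since this is basically a transcription of the proof of Proposition~\ref{prop: I_1r}, I do not expect a genuine obstacle. The two points that need care are the branch bookkeeping for $\ln\Gamma$ and for the logarithms in \eqref{integral identities I1r}, which must be handled so that the spurious imaginary constants cancel, and the verification that $\Phi$ stays analytic throughout the contour deformations, which is ensured by $\beta>0$ and by the fact that $\sigma$ is chosen so that the poles of the functions in \eqref{functions with poles} lie in its exterior.
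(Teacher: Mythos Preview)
Your proposal is correct and follows essentially the same approach as the paper's proof: integrate by parts to replace $g'$ by $g''$, use that $\ln\Gamma$ is a primitive of $\psi$, collapse $\sigma$ onto $\Sigma_5$ and deform to $\gamma_{b_2b_1}$, expand via Stirling, and evaluate using the integrals \eqref{integral identities I1r}. The only cosmetic difference is a sign convention (the paper uses $\tilde\Psi=-\Phi$), and your remark that $\alpha\ge\nu_{\min}$ is what ensures the poles stay outside $\sigma$ is a useful clarification that the paper leaves implicit.
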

\begin{proof}
Integrating $I_{1,\alpha}$ by parts as in the proof of Proposition \ref{prop: I_1r}, we find
\begin{align} \label{new expression I1alpha}
I_{1,\alpha} &=  -i(c_1+c_2)\int_{\gamma_{b_2b_1}} \tilde \Psi(\zeta)\bigg( 1-\frac{i \im b_2}{\zeta } \bigg) \frac{1}{r(\zeta)} \frac{d\zeta}{2\pi i} ,
\end{align}
where $\gamma_{b_2b_1}$ is defined as in \eqref{new expression I_1r} and
\begin{align*}
\tilde \Psi(\zeta)= & -s^\rho\int_0^\zeta \psi \Big(\frac{1+2\alpha-\nu_{\min}}{2} - i s^{\rho} \xi \Big)d\xi 
	\\
= & -i\ln \Gamma \Big(\frac{1+2\alpha-\nu_{\min}}{2} - i s^{\rho} \zeta \Big)+i\ln\Gamma\Big(\frac{1+2\alpha-\nu_{\min}}{2} \Big).
\end{align*}
A direct computation using the asymptotics \eqref{asymptotics lnGamma and lnG} yields
\begin{align}\label{lol9}
\tilde \Psi(\zeta) =\Omega\big(\ln (s^\rho)\big)  - i\frac{2\alpha- \nu_{\min}}{2 } \ln(-i \zeta) -i \frac{\ln(2\pi)}{2} +i\ln\Gamma\Big(\frac{1+2\alpha-\nu_{\min}}{2} \Big) + \mathcal{O}\bigg( \frac{\ln ( s^\rho )}{s^\rho} \bigg)
\end{align}
as $s \to + \infty$ uniformly for $\alpha$ in compact subsets of $(-1,+\infty)$, and uniformly for $\zeta \in \gamma_{b_{2}b_{1}}$. We obtain \eqref{asymtptotics I_1alpha}-\eqref{I_1alpha3 explicit} after substituting \eqref{lol9} into \eqref{new expression I1alpha}, using \eqref{integral identities I1r} and then simplifying.
\end{proof}

\subsection{Asymptotics of $I_{2,\alpha}$}
Let $I_{2,\alpha}$ be the function defined in \eqref{I2alphadef}. Using \eqref{lol10}, we can write
\begin{align*}
I_{2,\alpha}=\int_\sigma \mathcal{F}(\zeta)p'(\zeta)\frac{d{\zeta}}{2\pi i},
\end{align*}
where $p(\zeta)$ is defined by \eqref{def of p} and $\mathcal{F}(\zeta)$ is defined by
\begin{align}\label{calFdef}
\mathcal{F}(\zeta)=-\psi \bigg( \frac{1+2\alpha-\nu_{\min} }{2} -is^\rho \zeta \bigg).
\end{align}
We integrate by parts and then deform the contour and use the jumps for $p$ as in the beginning of Section \ref{subsection: asymp for I2r}. This yields
\begin{align*}
I_{2,\alpha}= Z_\alpha + X_\alpha,
\end{align*}
where
\begin{align} \label{def of Xalpha and Zalpha}
Z_\alpha=-2\int_{\gamma_{b_2b_1}} \mathcal{F}'(\zeta)p(\zeta) \frac{d\zeta}{2\pi i}, \qquad X_\alpha =\int_{\Sigma_5} \mathcal{F}'(\zeta)\ln \mathcal{G}(\zeta)\frac{d\zeta}{2\pi i}.
\end{align}
\begin{proposition}\label{prop: asymptotics Xalpha}
	Let $X_\alpha$ be defined by \eqref{def of Xalpha and Zalpha}. Then
\begin{align}
X_{\alpha} &=\Omega \big( \ln (s^\rho) \big) +X_{1,\alpha}^{(c)}+X_{2,\alpha}^{(c)}+X_{3,\alpha}^{(c)}+ \mathcal{O}\bigg( \frac{\ln ( s^\rho )}{s^\rho} \bigg) \qquad \mbox{as } s \to + \infty,
\end{align}
uniformly for $\alpha$ in compact subsets of $(-1,+\infty)$, where
	\begin{align}
	X_{1,\alpha}^{(c)}&=\frac{1}{2\pi i} \bigg\{c_{5} \big( \ln(-ib_{1})\ln(ib_{1})-\ln(-ib_{2})\ln(ib_{2}) \big) \nonumber \\
	& + c_{6} \big( \ln^{2}(-ib_{1})-\ln^{2}(-ib_{2}) \big) + c_{7} \big( \ln(-ib_{1})-\ln(-ib_{2}) \big) \bigg\}, \label{def of X1ac}
	\\
	X_{2,\alpha}^{(c)}&= \sum_{k=0}^{\infty} \bigg\{ \ln\bigg( k+ \frac{1+2\alpha-\nu_{\min} }{2} \bigg) -\psi\big( k+1+\alpha\big) +\frac{\nu_{\min}}{2}\ln\bigg(1+\frac{1}{k+1} \bigg) \bigg\}, \label{def of X2ac} 	\\
	X_{3,\alpha}^{(c)}&= \bigg\{- \frac{\nu_{\min}}{2} +\sum_{j=1}^{r} \frac{2\nu_j -\nu_{\min}}{2} -\sum_{k=1}^{q} \frac{2\mu_k - \nu_{\min}}{2}\bigg\} \frac{\ln(b_1/b_2) \ln(|b_2|)}{2\pi i}. \label{def of X3ac}
	\end{align}
\end{proposition}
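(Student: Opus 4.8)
The proof runs parallel to that of Proposition~\ref{prop: asymptotics Xr} (the asymptotics of $X_r$), with $F$ replacing $F_r$ and the parameter $\tfrac{1+2\alpha-\nu_{\min}}{2}$ playing the role of $\tfrac{1+\nu}{2}$. First I would integrate $X_\alpha = \int_{\Sigma_5}\mathcal{F}'(\zeta)\ln\mathcal{G}(\zeta)\frac{d\zeta}{2\pi i}$ by parts. Since $\mathcal{F}$ is an antiderivative of $\mathcal{F}'$ and $\mathcal{F}(\zeta) = -\psi\bigl(\tfrac{1+2\alpha-\nu_{\min}}{2} - is^\rho\zeta\bigr)$ has no poles on $\Sigma_5$ (on $\Sigma_5$ one has $\im\zeta\ge0$, hence $\re(\tfrac{1+2\alpha-\nu_{\min}}{2}-is^\rho\zeta)\ge\tfrac{1+\nu_{\min}}{2}>0$), this gives $X_\alpha = X_{1,\alpha} - \int_{\Sigma_5}\mathcal{F}(\zeta)\frac{\mathcal{G}'(\zeta)}{\mathcal{G}(\zeta)}\frac{d\zeta}{2\pi i}$ with the boundary term $X_{1,\alpha} := \frac{1}{2\pi i}\bigl[\mathcal{F}(\zeta)\ln\mathcal{G}(\zeta)\bigr]_{\zeta=b_1}^{b_2}$. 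Next, from the definitions \eqref{def of G}, \eqref{def of F} and the identities $c_1+c_2+c_3=0$, $\rho^{-1}=c_1+c_2$, one computes
\[
\frac{\mathcal{G}'(\zeta)}{\mathcal{G}(\zeta)} = is^\rho\biggl\{\Bigl[\psi\bigl(\tfrac{1+\nu_{\min}}{2}+is^\rho\zeta\bigr) - \ln(is^\rho\zeta)\Bigr] + \sum_{j=1}^r\Bigl[\psi\bigl(\tfrac{1+2\nu_j-\nu_{\min}}{2}-is^\rho\zeta\bigr) - \ln(-is^\rho\zeta)\Bigr] - \sum_{k=1}^q\Bigl[\psi\bigl(\tfrac{1+2\mu_k-\nu_{\min}}{2}-is^\rho\zeta\bigr) - \ln(-is^\rho\zeta)\Bigr]\biggr\}.
\]
Substituting this, performing the change of variables $w = is^\rho\zeta$, and adding and subtracting an elementary rational function $f(w)$ (of the form $\tfrac{\nu_{\min}/2}{w-m}$) so that $\psi\bigl(\tfrac{1+\nu_{\min}}{2}+w\bigr) - \ln w - f(w)$ is $\mathcal{O}(w^{-2})$ at infinity, I would write $X_\alpha = X_{1,\alpha} + X_{2,\alpha} + X_{3,\alpha}$, with $X_{2,\alpha}$ collecting the $\psi\bigl(\tfrac{1+\nu_{\min}}{2}+w\bigr)$-, $\ln w$- and $-f(w)$-terms, and $X_{3,\alpha}$ collecting the $\nu_j$-, $\mu_k$- and $+f(w)$-terms.

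The three pieces are then estimated as in Proposition~\ref{prop: asymptotics Xr}. For $X_{1,\alpha}$, inserting the asymptotics \eqref{asymptotics G} of $\ln\mathcal{G}$ at $\zeta = b_1, b_2$ together with $\psi\bigl(\tfrac{1+2\alpha-\nu_{\min}}{2}-is^\rho\zeta\bigr) = \ln(-is^\rho\zeta) + \mathcal{O}(s^{-\rho})$, the $\ln(s^\rho)$-growing contributions are absorbed into $\Omega(\ln s^\rho)$ and the $s$-independent remainder is $X_{1,\alpha}^{(c)} = \frac{1}{2\pi i}\bigl[-\ln(-i\zeta)\bigl(c_5\ln(i\zeta) + c_6\ln(-i\zeta) + c_7\bigr)\bigr]_{\zeta=b_1}^{b_2}$, which is \eqref{def of X1ac}. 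For $X_{2,\alpha}$, whose integrand is now $\mathcal{O}(w^{-2}\ln w)$, I would replace the contour $is^\rho\Sigma_5$ by a fixed vertical line $\gamma_\infty$ crossing the real axis between $0$ and the pole $w=\tfrac{1+2\alpha-\nu_{\min}}{2}$ (error $\mathcal{O}(s^{-\rho}\ln s^\rho)$), and then push $\gamma_\infty$ to $+\infty$ in the right half-plane, picking up minus the residues at the simple poles $w=\tfrac{1+2\alpha-\nu_{\min}}{2}+k$, $k\ge0$, of $\psi\bigl(\tfrac{1+2\alpha-\nu_{\min}}{2}-w\bigr)$. Using $\tfrac{1+\nu_{\min}}{2}+\tfrac{1+2\alpha-\nu_{\min}}{2}=1+\alpha$, this produces a series of the shape $\sum_{k\ge0}\bigl\{\ln\bigl(k+\tfrac{1+2\alpha-\nu_{\min}}{2}\bigr) - \psi(k+1+\alpha) + f\bigl(k+\tfrac{1+2\alpha-\nu_{\min}}{2}\bigr)\bigr\}$; rewriting the $f$-terms by means of the Euler-constant identity \eqref{identity Euler gamma}, and accounting for the residue at the pole of $f$ exactly as the $\tfrac{\gamma_{\mathrm{E}}(1-3\nu^2)}{24}$-contribution is accounted for in Proposition~\ref{prop: asymptotics Xr}, turns it into the convergent series \eqref{def of X2ac}. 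For $X_{3,\alpha}$, the integrand is analytic to the left of $is^\rho\Sigma_5$ (all the relevant poles lie on the positive real axis), so I deform $is^\rho\Sigma_5$ to $is^\rho\gamma_{b_2b_1}$, use $\psi\bigl(\tfrac{1+2\alpha-\nu_{\min}}{2}-w\bigr) = \ln(-w) + \mathcal{O}(w^{-1})$ and $\psi\bigl(\tfrac{1+2\nu_j-\nu_{\min}}{2}-w\bigr) - \ln(-w) = -\tfrac{2\nu_j-\nu_{\min}}{2w} + \mathcal{O}(w^{-2})$ (and the analogous expansions for the $\mu_k$-terms and for $f$), change variables back to $\zeta$, and evaluate the resulting elementary integrals over $\gamma_{b_2b_1}$ via \eqref{integral identities X3r} and $-b_1b_2=|b_2|^2$; this gives $X_{3,\alpha}^{(c)}$ of \eqref{def of X3ac}, the $-\tfrac{\nu_{\min}}{2}$-term arising from the $+f(w)$-piece. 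Summing $X_{1,\alpha}^{(c)}+X_{2,\alpha}^{(c)}+X_{3,\alpha}^{(c)}$ completes the proof. All estimates are uniform for $\alpha$ in compact subsets of $(-1,+\infty)$, as needed for the later integration in $\alpha$.

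I expect the main obstacle to be the treatment of $X_{2,\alpha}$. In contrast with $X_{2,r}$, its integrand involves a product of two digamma functions, $\psi\bigl(\tfrac{1+2\alpha-\nu_{\min}}{2}-w\bigr)\,\psi\bigl(\tfrac{1+\nu_{\min}}{2}+w\bigr)$, rather than $\ln\Gamma\cdot\psi$. This is in some ways simpler --- the digamma grows only logarithmically, so a single $1/w$-subtraction already yields a convergent integrand and no integration by parts inside $X_{2,\alpha}$ is needed --- but one has to choose the shift $m$ in $f$ and keep careful track of which poles are swept during each deformation (in particular the pole of $f$), so that the residue series reorganizes cleanly, via \eqref{identity Euler gamma}, into the exact series \eqref{def of X2ac} and any spurious $\gamma_{\mathrm{E}}$- or $\psi(\cdot)$-contributions cancel between $X_{2,\alpha}$ and $X_{3,\alpha}$. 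The remaining computations --- the explicit boundary constant in $X_{1,\alpha}$ and the elementary $\gamma_{b_2b_1}$-integrals in $X_{3,\alpha}$ --- are lengthy but run entirely parallel to Proposition~\ref{prop: asymptotics Xr}.
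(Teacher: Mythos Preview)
Your proposal is correct and follows essentially the same approach as the paper: the same integration by parts, the same splitting $X_\alpha=X_{1,\alpha}+X_{2,\alpha}+X_{3,\alpha}$ with the subtraction $f(w)=\tfrac{\nu_{\min}}{2(w-m)}$ (the paper takes $m=\tfrac{1+2\alpha-\nu_{\min}}{2}$), and the same treatment of each piece via the asymptotics \eqref{asymptotics G}, the residue expansion for $X_{2,\alpha}$ combined with \eqref{identity Euler gamma}, and the deformation to $\gamma_{b_2b_1}$ with \eqref{integral identities X3r} for $X_{3,\alpha}$. The only minor point is that no cancellation between $X_{2,\alpha}$ and $X_{3,\alpha}$ is actually needed: with the choice $m=\tfrac{1+2\alpha-\nu_{\min}}{2}$ the $\gamma_{\mathrm E}$-term from the double pole at $w=m$ is absorbed entirely within $X_{2,\alpha}$ via \eqref{identity Euler gamma}.
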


\begin{proof}
Recalling the definition \eqref{def of G} of $\mathcal{G}(\zeta)$ and using the identity $\rho^{-1} = c_{1}+c_{2}$, we see that
\begin{align*}
\ln \mathcal{G}(\zeta) &= \ln \Gamma \bigg( \frac{1+\nu_{\min}}{2}+is^\rho \zeta \bigg) + \sum_{k=1}^{q} \ln \Gamma \bigg(  \frac{1+2\mu_k-\nu_{\min}}{2}-is^\rho \zeta\bigg) -\sum_{j=1}^{r} \ln \Gamma \bigg(   \frac{1+2\nu_j -\nu_{\min}}{2}-is^\rho \zeta \bigg)
\\
&\quad - is^\rho \zeta \big(  c_1  \ln(i s^\rho \zeta) + c_2\ln(-is^\rho\zeta) + c_3 \big).
\end{align*}
Hence, recalling the values \eqref{eq: values of constants cj} of $c_{1}$, $c_{2}$ and $c_{3}$,
\begin{align*}
\frac{\mathcal{G}'(\zeta)}{\mathcal{G}(\zeta)} &= is^\rho \Bigg( \psi\bigg( \frac{1+\nu_{\min}}{2}+is^\rho \zeta \bigg) -\sum_{k=1}^{q} \psi \bigg(  \frac{1+2\mu_k-\nu_{\min}}{2}-is^\rho \zeta\bigg)  +\sum_{j=1}^{r} \psi \bigg(   \frac{1+2\nu_j -\nu_{\min}}{2}-is^\rho \zeta \bigg)
\\
&\quad - \ln(is^\rho \zeta) -(r-q) \ln(-is^\rho \zeta)\Bigg).
\end{align*}
Therefore, after integrating by parts, we can write
\begin{align*}
X_\alpha = X_{1,\alpha}+X_{2,\alpha}+X_{3,\alpha},
\end{align*}
where
\begin{align*}
X_{1,\alpha}&=\frac{\mathcal{F}(\zeta)\ln \mathcal{G}(\zeta)}{2\pi i} \Bigg|_{\zeta=b_1}^{b_2},
\\
X_{2,\alpha}&= \int_{is^\rho\Sigma_5} \psi \bigg( \frac{1+2\alpha-\nu_{\min} }{2} -w \bigg) \bigg\{   \psi\bigg( \frac{1+\nu_{\min}}{2}+w \bigg) -\ln(w) - \frac{\nu_{\min}}{2 (w-m)} \bigg\} \frac{dw}{2\pi i},
\\
X_{3,\alpha}&=  \int_{is^\rho \Sigma_5} \psi \bigg( \frac{1+2\alpha-\nu_{\min} }{2} -w \bigg) \bigg\{   -\sum_{k=1}^{q} \psi \bigg(  \frac{1+2\mu_k-\nu_{\min}}{2}-w\bigg)  
\\
&\quad \qquad +\sum_{j=1}^{r} \psi \bigg(   \frac{1+2\nu_j -\nu_{\min}}{2}-w\bigg) -(r-q) \ln(-w) + \frac{\nu_{\min}}{2 (w-m)}\bigg\} \frac{dw}{2\pi i}.
\end{align*}
Here we have used the change of variables $is^{\rho}\zeta = w$ in the expressions for $X_{2,\alpha}$ and $X_{3,\alpha}$, and $m$ is an arbitrary constant which lies in $\mathbb{C}\setminus i s^{\rho}\Sigma_{5}$; it will be convenient to henceforth choose $m=\frac{1+2\alpha - \nu_{\min}}{2}$. As in the proof of Proposition \ref{prop: asymptotics Xr}, we have added and substracted one term in order to ensure that $X_{2,\alpha}$ has a limit as $s \to + \infty$.
It remains to compute the large $s$ asymptotics of  $X_{1,\alpha}$, $X_{2,\alpha}$, and $X_{3,\alpha}$.

\paragraph{Asymptotics of $X_{1,\alpha}$.} By combining \eqref{asymptotics G} and \eqref{asymptotics of psi with parameter s}, we obtain directly that
\begin{align*}
X_{1,\alpha}= \Omega(\ln(s^\rho)) + X_{1,\alpha}^{(c)} +\bigO\bigg( \frac{\ln ( s^\rho )}{s^\rho} \bigg), \qquad s \to +\infty,
\end{align*}
where $ X_{1,\alpha}^{(c)} $ is given by \eqref{def of X1ac}.

\paragraph{Asymptotics of $X_{2,\alpha}$.} The expansion \eqref{asymptotics of psi with w} implies that
\begin{align*}
X_{2,\alpha} = \int_{-i\infty}^{i\infty} \psi \bigg( \frac{1+2\alpha-\nu_{\min} }{2} -w \bigg) \bigg\{   \psi\bigg( \frac{1+\nu_{\min}}{2}+w \bigg) -\ln(w) - \frac{\nu_{\min}}{ 2(w-m)} \bigg\} \frac{dw}{2\pi i} + \bigO\bigg( \frac{\ln ( s^\rho )}{s^\rho} \bigg)
\end{align*}
as $s\to +\infty$. We deform the contour of integration to infinity in the right half-plane and pick up infinitely many residue contributions at the points $m+k$, $k =1,2,\ldots$, of the form
\begin{align*}
- \bigg\{   \psi\big( k+1+\alpha\big) -\ln\bigg( k+ \frac{1+2\alpha-\nu_{\min} }{2} \bigg) - \frac{\nu_{\min}}{ 2k} \bigg\} 
\end{align*}
and one residue contribution $-\psi(1+\alpha)+\ln(m)-\frac{\gamma_{\mathrm{E}} \nu_{\min}}{2}$ at the point $m$. 
Using the identity \eqref{identity Euler gamma}, it follows that
\begin{align*}
X_{2,\alpha} &=-\sum_{k=1}^{\infty}  \bigg\{   \psi\bigg( k+1+\alpha\bigg) -\ln\bigg( k+ \frac{1+2\alpha-\nu_{\min} }{2} \bigg) - \frac{\nu_{\min}}{ 2k} \bigg\} 
\\
&\quad -\psi\big( 1+\alpha\big) +\ln\bigg(  \frac{1+2\alpha-\nu_{\min} }{2} \bigg) -\frac{\gamma_{\mathrm{E}} \nu_{\min}}{2}+ \bigO\bigg( \frac{\ln ( s^\rho )}{s^\rho} \bigg)
\\
&=X_{2,\alpha}^{(c)}+\bigO\bigg( \frac{\ln ( s^\rho )}{s^\rho} \bigg), \qquad s\to +\infty.
\end{align*}

\paragraph{Asymptotics of $X_{3,\alpha}$.} For $X_{3,\alpha}$, we first deform the contour $is^\rho \Sigma_5$ to $is^\rho \gamma_{b_2 b_1}$, then we apply the change of variables $w=is^\rho \zeta$ and the large $s$ asymptotics \eqref{asymptotics of psi with parameter s} of $\psi$. This gives
\begin{align*}
X_{3,\alpha}&=\Omega\big( \ln (s^\rho) \big)+ \bigg\{- \frac{\nu_{\min}}{2} +\sum_{j=1}^{r} \frac{2\nu_j -\nu_{\min}}{2} -\sum_{k=1}^{q} \frac{2\mu_k - \nu_{\min}}{2}\bigg\} \int_{\gamma_{b_2b_1}} \frac{\ln(-i\zeta)}{\zeta}  \frac{d\zeta}{2\pi i}+\mathcal{O} \bigg( \frac{\ln(s^\rho)}{s^\rho} \bigg)
\end{align*}
as $s\to +\infty$. Using the second integral in \eqref{integral identities X3r}, we obtain
\begin{align*}
X_{3,\alpha}&=\Omega\big( \ln (s^\rho) \big)+ X_{3,\alpha}^{(c)}+\mathcal{O} \bigg( \frac{\ln(s^\rho)}{s^\rho} \bigg), \qquad \mbox{as } s\to +\infty.
\end{align*}
\end{proof}

\begin{proposition}\label{prop: asymptotics of Zalpha}
	Let $Z_\alpha$ be defined by \eqref{def of Xalpha and Zalpha}. Then
	\begin{align} \label{asymptotics of Za}
	Z_\alpha= \Omega \big(\ln (s^\rho)  \big) + Z_\alpha^{(c)}  +\bigO \bigg( \frac{\ln s^\rho}{s^\rho} \bigg)
	\end{align}
	as $s \to +\infty$ uniformly for $\alpha$ in compact subsets of $[0,+\infty)$, where
	\begin{align} \label{def of Zac}
	Z_{\alpha}^{(c)} &=\frac{1}{4\pi i} \bigg\{ -4\pi i\ln(|b_2|) (c_5-c_6)+4i\pi c_6 \ln\bigg( \frac{2}{|b_2|+\im b_2} \bigg) \nonumber
	\\
	&\quad +\ln(b_1)(-2c_7 +i \pi (3c_5+c_6))-(c_5+c_6)(\ln b_1)^2 \nonumber
	\\
	&\quad+ \ln(b_2)(2c_7+i \pi (c_5-c_6))+(c_5+c_6)(\ln b_2)^2 +2 \pi^2 c_5\bigg\}.
	\end{align}
\end{proposition}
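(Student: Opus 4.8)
The plan is to mimic the proof of Proposition~\ref{prop: asymptotics of Zr}, which in the present case is actually shorter. Recall from \eqref{calFdef} and \eqref{def of Xalpha and Zalpha} that
\begin{align*}
Z_\alpha = -2\int_{\gamma_{b_2b_1}} \mathcal{F}'(\zeta)\, p(\zeta)\, \frac{d\zeta}{2\pi i}, \qquad \mathcal{F}(\zeta) = -\psi\bigg(\frac{1+2\alpha-\nu_{\min}}{2}-is^\rho\zeta\bigg),
\end{align*}
so that $\mathcal{F}'(\zeta)=is^\rho\,\psi'\big(\tfrac{1+2\alpha-\nu_{\min}}{2}-is^\rho\zeta\big)$. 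First I would expand $\mathcal{F}'$. Differentiating the asymptotic formula \eqref{asymtptoics for psi} gives $\psi'(z)=z^{-1}+\tfrac12 z^{-2}+\bigO(z^{-3})$ as $z\to\infty$ with $|\arg z|<\pi-\delta$; since $\alpha\ge 0$ and $\alpha\ge\nu_{\min}>-1$, the argument $z=\tfrac{1+2\alpha-\nu_{\min}}{2}-is^\rho\zeta$ has real part $\ge\tfrac12$ for $\zeta\in\gamma_{b_2b_1}$ (because $\re(-i\zeta)\ge 0$ on that arc) and tends to infinity uniformly as $s\to+\infty$. Hence
\begin{align*}
\mathcal{F}'(\zeta)=-\frac{1}{\zeta}+\bigO\bigg(\frac{1}{s^\rho}\bigg)
\end{align*}
uniformly for $\zeta\in\gamma_{b_2b_1}$ and for $\alpha$ in compact subsets of $[0,+\infty)$.

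Next I would insert the large $s$ expansion of $p$. The identities \eqref{asymptptics of p}--\eqref{def of mathcalB} are valid verbatim, with the coefficients $c_1,\dots,c_8$ now taken from \eqref{eq: values of constants cj}, so that
\begin{align*}
p(\zeta)=-\frac{c_4}{2\rho}\ln(s^\rho)+\frac{\mathcal{B}(\zeta)}{2}+\bigO\bigg(\frac{1}{s^\rho}\bigg)
\end{align*}
uniformly for $\zeta\in\gamma_{b_2b_1}$. Substituting both expansions into the integral for $Z_\alpha$ and using that $\gamma_{b_2b_1}$ is a fixed contour on which $\mathcal{F}'=\bigO(1)$ and $p=\bigO(\ln(s^\rho))$, every cross term that contains one of the two $\bigO(s^{-\rho})$ errors is $\bigO(s^{-\rho}\ln(s^\rho))$. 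The term proportional to $\ln(s^\rho)$, namely $\tfrac{c_4}{\rho}\ln(s^\rho)\int_{\gamma_{b_2b_1}}\tfrac{d\zeta}{\zeta}\tfrac{1}{2\pi i}$, is $\Omega(\ln(s^\rho))$, and all that is left at constant order is exactly $\int_{\gamma_{b_2b_1}}\tfrac{\mathcal{B}(\zeta)}{\zeta}\tfrac{d\zeta}{2\pi i}$. Thus
\begin{align*}
Z_\alpha=\Omega\big(\ln(s^\rho)\big)+\int_{\gamma_{b_2b_1}}\frac{\mathcal{B}(\zeta)}{\zeta}\frac{d\zeta}{2\pi i}+\bigO\bigg(\frac{\ln(s^\rho)}{s^\rho}\bigg).
\end{align*}
Finally, the integral identity \eqref{calBoverzetaintegral} proved in \cite[Lemma 8.4]{CLMMuttalib} gives $\int_{\gamma_{b_2b_1}}\tfrac{\mathcal{B}(\zeta)}{\zeta}\tfrac{d\zeta}{2\pi i}=\tfrac{1}{4\pi i}\{\cdots\}$, and the bracket is precisely the one appearing in \eqref{def of Zac}; this identifies the constant as $Z_\alpha^{(c)}$ and completes the proof.

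There is no serious obstacle here. The two points requiring a little care are the uniform validity of the trigamma expansion on $\gamma_{b_2b_1}$ (guaranteed by the lower bound on $\re z$, which also keeps $z$ in a fixed sector away from the double poles of $\psi'$) and the convergence of the contour integrals (on $\gamma_{b_2b_1}$ the only singularity of $\mathcal{B}(\zeta)/\zeta$ comes from the integrable square-root vanishing of $r(\zeta)$ at the fixed endpoints $b_1,b_2$, while $\zeta$ stays bounded away from $0$). The essential simplification over Proposition~\ref{prop: asymptotics of Zr} is that $\mathcal{F}'(\zeta)$ has a leading term of size $1$ with no $s$-dependence — in contrast with the $\ln(s^\rho)+\ln(-i\zeta)$ leading term that arose there — so the $\mathcal{A}(\zeta)/s^\rho$ part of $p$ never feeds into the constant and only $\mathcal{B}$ survives.
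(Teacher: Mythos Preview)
Your proof is correct and follows essentially the same route as the paper's own argument: expand $\mathcal{F}'(\zeta)=-\tfrac{1}{\zeta}+\bigO(s^{-\rho})$ uniformly on $\gamma_{b_2b_1}$, insert the expansion \eqref{asymptptics of p} of $p(\zeta)$, and evaluate the surviving $\int_{\gamma_{b_2b_1}}\mathcal{B}(\zeta)\zeta^{-1}\tfrac{d\zeta}{2\pi i}$ via \eqref{calBoverzetaintegral}. The only cosmetic difference is that the paper obtains the asymptotics of $\mathcal{F}'$ by differentiating the expansion \eqref{asymptotics of psi with parameter s} of $\psi$, whereas you invoke the trigamma expansion $\psi'(z)=z^{-1}+\bigO(z^{-2})$ directly; the two are equivalent.
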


\begin{proof}
A formula for the large $s$ asymptotics of the function $\mathcal{F}(\zeta)$ defined in (\ref{calFdef}) can be deduced from \eqref{asymptotics of psi with parameter s} (with $\nu$ replaced by $2\alpha-\nu_{\min}$). This formula is uniform for $\zeta \in \gamma_{b_2b_1}$ and can be differentiated with respect to $\zeta$. Hence \begin{align} \label{asymptotics of Fprime}
\mathcal{F}'(\zeta)=-\frac{1}{\zeta} + O(s^{-\rho}) \qquad \mbox{as } \quad s \to +\infty,
\end{align}
uniformly for $\zeta \in \gamma_{b_2b_1}$. Substituting \eqref{asymptotics of Fprime} and the asymptotic formula \eqref{asymptptics of p} for $p(\zeta)$ into the definition \eqref{def of Xalpha and Zalpha} of $Z_\alpha$, we infer that
\begin{align*}
Z_\alpha&=-\frac{c_4}{ \rho} \ln(s^\rho) \int_{\gamma_{b_2b_1}} \frac{1}{\zeta}\frac{d\zeta}{2\pi i} +\int_{\gamma_{b_2b_1}} \frac{\mathcal{B}(\zeta)}{\zeta}\frac{d\zeta}{2\pi i}  +\bigO \bigg( \frac{\ln s^\rho}{s^\rho} \bigg)
\end{align*}
as $s\to +\infty$ uniformly for $\alpha$ in compact subsets of $(-1,+\infty)$, where $\mathcal{B}(\zeta)$ is defined by \eqref{def of mathcalB}. Recalling \eqref{calBoverzetaintegral}, the proposition follows.
\end{proof}

\subsection{Asymptotics of $I_{3,\alpha}(K)$ and $I_{4,\alpha}(K)$}
We next show that the quantities $I_{3,\alpha}(K)$ and $I_{4,\alpha}(K)$ defined in \eqref{I3alphadef} and (\ref{I4alphadef}) vanish as $s$ tends to $+\infty$ for $K = s^{\rho}$.

\begin{proposition}\label{prop: asymptotics I34alpha}
Let $\nu > -1$, $K=s^\rho$, and let $N\ge 1$ be an integer. Then
\begin{align}
I_{3,\alpha}(K) = \bigO \bigg( \frac{\ln (s^\rho)}{s^\rho}\bigg) \quad \text{and} \quad	I_{4,\alpha}(K) = \bigO \big( s^{-N\rho} \big)
\end{align}
as $s \to +\infty$ uniformly for $\alpha$ in compact subsets of $(-1,+\infty)$.
\end{proposition}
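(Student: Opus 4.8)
Proof proposal for Proposition 6.7 (the statement that $I_{3,\alpha}(K)=\bigO(s^{-\rho}\ln s^\rho)$ and $I_{4,\alpha}(K)=\bigO(s^{-N\rho})$ for $K=s^\rho$).

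The plan is to follow exactly the template established in the proof of Proposition~\ref{prop: asymptotics I3r} and Proposition~\ref{prop: asymptotics I4r}, replacing the factor $\ln\Gamma(\frac{1+\nu}{2}-is^\rho\zeta)$ by $\psi(\frac{1+2\alpha-\nu_{\min}}{2}-is^\rho\zeta)$, and tracking how this change affects the asymptotics. The underlying structural facts are unchanged: $R$ has the large $s$ asymptotics $R(\zeta)=I+\frac{R_1(\zeta)}{s^\rho}+\bigO(s^{-2\rho})$ with $R_1$ built from the matrices $A$ and $B$ (see \cite[Proposition 4.1]{CLMMuttalib}), the jumps $R_+^{-1}R_+'-R_-^{-1}R_-'$ on $\tilde\Sigma_K$ are exponentially small, and the global parametrix $P^{(\infty)}$ is bounded on the relevant contours away from $b_1,b_2$.

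For $I_{4,\alpha}(K)$: since $K=s^\rho$, the contour $\tilde\Sigma_K$ lies at distance $\gtrsim s^\rho$ from $b_1,b_2$, where the jump matrix of $R$ is $I+\bigO(e^{-cs^{2\rho}})$ for some $c>0$. The factor $\psi(\frac{1+2\alpha-\nu_{\min}}{2}-is^\rho\zeta)$ grows only logarithmically in $s^\rho\zeta$, so it cannot compete with the exponential decay; integrating over $\tilde\Sigma_K$ and using that the total variation of the jump is exponentially small gives $I_{4,\alpha}(K)=\bigO(e^{-cs^{2\rho}})=\bigO(s^{-N\rho})$ for any $N$. This is identical to the proof of Proposition~\ref{prop: asymptotics I4r} and presents no new difficulty.

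For $I_{3,\alpha}(K)$: proceeding as in the proof of Proposition~\ref{prop: asymptotics I3r}, one first substitutes the asymptotics of $R^{-1}R'$ to reduce the integral over $\sigma_K$ to an integral over $\tilde\sigma$ (a small contour around $[b_1,b_2]$ not enclosing the origin) of $-\frac{1}{2}\psi(\frac{1+2\alpha-\nu_{\min}}{2}-is^\rho\zeta)W(\zeta)\frac{d\zeta}{2\pi i}$, up to a $\bigO(s^{-\rho}\ln s^\rho)$ error, where $W$ is the same function \eqref{def of W} as before. The key point is that in the case treated in Proposition~\ref{prop: asymptotics I3r} there was an explicit prefactor $\frac{1}{s^\rho}$ coming from $\partial_r J$; here, by contrast, there is no such prefactor, but the large $s$ asymptotics \eqref{asymtptoics for psi} give $\psi(\frac{1+2\alpha-\nu_{\min}}{2}-is^\rho\zeta)=\ln(s^\rho)+\ln(-i\zeta)+\bigO(s^{-\rho})$ on $\tilde\sigma$, i.e. the leading term is $\ln(s^\rho)$ times a $\zeta$-independent constant plus a bounded function. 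Now one uses the crucial fact that $W(\zeta)=\bigO(\zeta^{-2})$ as $\zeta\to\infty$ and, more importantly, that $\int_{\tilde\sigma}W(\zeta)\frac{d\zeta}{2\pi i}=0$ and $\int_{\tilde\sigma}\ln(-i\zeta)W(\zeta)\frac{d\zeta}{2\pi i}=0$ --- these vanish because $\tilde\sigma$ can be collapsed onto $[b_1,b_2]$ and $W(\zeta)\tilde r(\zeta)$ is a rational function whose relevant residues cancel (the branch-cut integral of $W$ against a function analytic across the cut vanishes by the same mechanism that made the $\Omega$-terms drop out in Proposition~\ref{prop: asymptotics I3r}, where only the $\zeta$-dependent pieces survived). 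Hence the $\ln(s^\rho)$ term and the $\ln(-i\zeta)$ term both integrate to zero, and what remains is $\bigO(s^{-\rho})$ times a bounded integral, giving $I_{3,\alpha}(K)=\bigO(s^{-\rho}\ln s^\rho)$.

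The main obstacle is verifying that the contour integrals of $W(\zeta)$ and $\ln(-i\zeta)W(\zeta)$ over $\tilde\sigma$ vanish (or at worst are $\bigO(1)$, which combined with the $s^{-\rho}$ prefactor of the subleading term in $\psi$ still suffices); this requires carefully inspecting the explicit form \eqref{def of W} of $W$ together with the structure of $A$ and $B$ in \eqref{ABexplicit}, exactly as in \cite[Proposition 9.1]{CLMMuttalib}. Once this is in hand, the error term $\bigO(s^{-\rho}\ln s^\rho)$ follows by combining the $\bigO(s^{-\rho})$ correction to $\psi$ with the $\ln s^\rho$ already present, and the proof is complete. We omit the remaining details, which are routine and parallel to those in \cite{CLMMuttalib}.
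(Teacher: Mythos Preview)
Your argument for $I_{4,\alpha}(K)$ is fine and matches the paper's. For $I_{3,\alpha}(K)$, however, you have misidentified the source of the factor $\frac{1}{s^\rho}$ in the proof of Proposition~\ref{prop: asymptotics I3r}. That factor does \emph{not} come from ``$\partial_r J$''; it comes from the large $s$ asymptotics of $R^{-1}R'$, which are $\bigO(s^{-\rho})$ because $R=I+R_1/s^\rho+\cdots$. Comparing the definitions \eqref{def of I3r} and \eqref{I3alphadef}, you see that $I_{3,r}(K)$ and $I_{3,\alpha}(K)$ have \emph{identical} structure---the only difference is the scalar factor $\ln\Gamma(\cdot)$ versus $\psi(\cdot)$. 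Consequently, proceeding exactly as in Proposition~\ref{prop: asymptotics I3r} gives
\[
I_{3,\alpha}(K)=-\frac{1}{2s^\rho}\int_{\tilde\sigma}\psi\Big(\tfrac{1+2\alpha-\nu_{\min}}{2}-is^\rho\zeta\Big)W(\zeta)\,\frac{d\zeta}{2\pi i}+\bigO\Big(\tfrac{\ln(s^\rho)}{s^\rho}\Big),
\]
with the $\frac{1}{s^\rho}$ still present. Since $\psi(\tfrac{1+2\alpha-\nu_{\min}}{2}-is^\rho\zeta)=\bigO(\ln s^\rho)$ uniformly on the bounded contour $\tilde\sigma$ (in contrast to $\ln\Gamma$, which is $\bigO(s^\rho\ln s^\rho)$), the integral is $\bigO(\ln s^\rho)$ and hence $I_{3,\alpha}(K)=\bigO(s^{-\rho}\ln s^\rho)$ immediately---this is exactly the paper's (very short) argument.

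Your attempted rescue, claiming that $\int_{\tilde\sigma}W(\zeta)\,\frac{d\zeta}{2\pi i}=0$ and $\int_{\tilde\sigma}\ln(-i\zeta)W(\zeta)\,\frac{d\zeta}{2\pi i}=0$, is both unnecessary and unjustified: $W$ has a branch cut on $[b_1,b_2]$ (from $\tilde r$), so one cannot simply argue by residues, and no such vanishing was used or established in Proposition~\ref{prop: asymptotics I3r} (indeed, the analogous integral \eqref{lol8} with an extra factor of $\zeta$ is explicitly nonzero). Once you restore the missing $\frac{1}{s^\rho}$ from $R^{-1}R'$, the whole detour becomes moot.
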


\begin{proof}
The proof for $I_{3,\alpha}(K)$ is similar to (but easier than) the proof of Proposition \ref{prop: asymptotics I3r}. In fact, it follows from \eqref{asymptotics of psi with parameter s} that
\begin{align*}
\psi\bigg( \frac{1+2\alpha-\nu_{\min}}{2} -i s^{\rho} \zeta \bigg) = \bigO (\ln (s^\rho))
\end{align*}
as $s \to + \infty$ uniformly for $\zeta \in \tilde{\sigma}$ (see the proof of Proposition \ref{prop: asymptotics I3r} for the definition of $\tilde{\sigma}$), from which we immediately deduce that
\begin{align*}
I_{3,\alpha}(K) = \bigO \bigg( \frac{\ln (s^\rho)}{s^\rho}\bigg), \qquad  \mbox{as } s\to + \infty.
\end{align*}
The proof for $I_{4,\alpha}(K)$ is analogous to the proof of Proposition \ref{prop: asymptotics I4r} and we omit the details.
\end{proof}

\subsection{Integration of the differential identities in $\nu_{\ell}$ and $\mu_{\ell}$} \label{subsec: integration2}
By using the results of the previous subsections, we can compute the large $s$ asymptotics of $\partial_{\alpha} \ln C$ for any $\alpha \in \{\nu_{1},\ldots,\nu_{r},\mu_{1},\ldots,\mu_{q}\}$. Integration with respect to $\alpha$ then yields the following proposition.

%

\begin{proposition} \label{prop: constant after deforming numu}
Let $r>q\geq 0$ be integers and suppose that $\nu_1,\ldots, \nu_r,\mu_1, \ldots, \mu_q >-1$. Let $\nu_{\min} = \min \{ \nu_1,\ldots, \nu_r,\mu_{1},\ldots,\mu_{q} \}$. If $\ell \in \{1,\ldots, r\}$, then
\begin{align} \nonumber
\int_{\nu_{\min}}^{\nu_{\ell}}\partial_{\nu_{\ell}'}(\ln C) d\nu_{\ell}'&=(\nu_{\ell}-\nu_{\min})\bigg(\sum_{\substack{j=1\\j\neq \ell}}^r \nu_j -\sum_{k=1}^{q}\mu_k  \bigg)\big(  \ln(r-q) - \ln(1+r-q)\big)
\\ \nonumber
&\quad +\frac{\nu_{\ell}^2-\nu_{\min}^2}{2}\frac{1+r-q-(r-q)^2 }{1+r-q} \ln(r-q) -(2+q-r)\frac{\nu_{\ell}^2-\nu_{\min}^2}{2} \ln(1+r-q)
\\
&\quad +\ln G(1+\nu_{\ell})-\ln G(1+\nu_{\min}) +(\nu_{\min}-\nu_{\ell} ) \frac{\ln(2\pi)}{2}.\label{after integrating in nu}
\end{align}
If $\ell \in \{1,\ldots, q\}$, then
\begin{align} \nonumber
\int_{\nu_{\min}}^{\mu_{\ell}}\partial_{\mu_{\ell}'}(\ln C) d\mu_{\ell}'&=(\nu_{\min}-\mu_\ell)\bigg(\sum_{j=1}^r \nu_j -\sum_{\substack{k=1\\k\neq \ell}}^{q}\mu_k  \bigg)\big(  \ln(r-q) - \ln(1+r-q)\big)
\\ \nonumber
&\quad +\frac{\mu_\ell^2-\nu_{\min}^2}{2}\frac{1+r-q+(r-q)^2 }{1+r-q} \ln(r-q) -(r-q)\frac{\mu_\ell^2 -\nu_{\min}^2}{2} \ln(1+r-q)
\\
&\quad -\ln G(1+\mu_{\ell})+\ln G(1+\nu_{\min}) -(\nu_{\min}-\mu_\ell)\frac{\ln(2\pi)}{2}.\label{after integrating in mu}
\end{align}
\end{proposition}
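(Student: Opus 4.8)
The plan is to combine Lemma \ref{lemma: differential identities} with the asymptotic estimates of Propositions \ref{prop: I_1alpha}, \ref{prop: asymptotics Xalpha}, \ref{prop: asymptotics of Zalpha}, and \ref{prop: asymptotics I34alpha}, exactly as was done for the $r$-identity in Proposition \ref{prop: asymptotics of Kr}. First I would recall that the all-order expansion \eqref{all order expansion} (with coefficients \eqref{higher order constants11}, \eqref{higher order constants12}, \eqref{higher order constants2}) holds with an error term that can be differentiated in $\alpha \in \{\nu_1,\dots,\nu_r,\mu_1,\dots,\mu_q\}$ at the cost of an extra factor $\bigO(\ln(s^\rho))$; this is implicit in the analysis of \cite{ClaeysGirSti,CLMMuttalib}, just as for the $r$-derivative. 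Differentiating \eqref{all order expansion} in $\alpha$ and comparing the $\bigO(1)$ term with the $\bigO(1)$ term coming from \eqref{diff identity munu} (resp.\ \eqref{diff identity munu2}), all the $\Omega(\ln s^\rho)$ contributions and the $\bigO(\ln(s^\rho)/s^\rho)$ remainders must cancel (since the left-hand side has no term of order $s^{-\rho}\ln s$ that is not already accounted for, and $\partial_\alpha$ applied to $-as^{2\rho}+bs^\rho+c\ln s$ produces only powers of $s$ and $\ln s$ with no constant term except via $\partial_\alpha c \cdot \ln s$, which is $\Omega(\ln s^\rho)$). Hence
\begin{align*}
\partial_{\nu_\ell}(\ln C) &= I_{1,\nu_\ell}^{(c)} + X_{1,\nu_\ell}^{(c)} + X_{2,\nu_\ell}^{(c)} + X_{3,\nu_\ell}^{(c)} + Z_{\nu_\ell}^{(c)}, \\
\partial_{\mu_k}(\ln C) &= -\Big( I_{1,\mu_k}^{(c)} + X_{1,\mu_k}^{(c)} + X_{2,\mu_k}^{(c)} + X_{3,\mu_k}^{(c)} + Z_{\mu_k}^{(c)} \Big).
\end{align*}

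Next I would carry out the explicit bookkeeping. Substituting the values \eqref{eq: values of constants cj} for $c_1,\dots,c_8$ and \eqref{b1b2def} for $b_1,b_2$ into the constants $I_{1,\alpha}^{(c)}$, $X_{1,\alpha}^{(c)}$, $X_{3,\alpha}^{(c)}$, $Z_\alpha^{(c)}$, and using $\sin\phi = \frac{r-q-1}{r-q+1}$, $|b_2| = 2(r-q)^{\frac{1-r+q}{2(1+r-q)}}$, $\re b_2 = -\re b_1$, many of the logarithmic terms involving $\ln b_1$, $\ln b_2$, $\ln(b_1/b_2)$, $\ln(\pm i b_j)$ collapse to expressions in $\ln(r-q)$ and $\ln(1+r-q)$; the only ``transcendental'' pieces that survive are $\psi(k+1+\alpha)$ and $\ln(k+\tfrac{1+2\alpha-\nu_{\min}}{2})$ inside $X_{2,\alpha}^{(c)}$. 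The sum $X_{2,\alpha}^{(c)}$ telescopes: using $\sum_{k=0}^{N}\psi(k+1+\alpha) = \ln G(N+2+\alpha)-\ln G(1+\alpha)$ (from $G(z+1)=\Gamma(z)G(z)$ and $\psi=(\ln\Gamma)'$ is not quite it---rather one uses that $\sum_{k=0}^N \ln\Gamma(k+1+\alpha)$ relates to $\ln G$; here the relevant telescoping is via $\ln\Gamma(k+1+\alpha) = \ln G(k+2+\alpha)-\ln G(k+1+\alpha)$), together with $\sum_{k=0}^N \ln(k+\tfrac{1+2\alpha-\nu_{\min}}{2}) = \ln\Gamma(N+1+\tfrac{1+2\alpha-\nu_{\min}}{2}) - \ln\Gamma(\tfrac{1+2\alpha-\nu_{\min}}{2})$ and the asymptotics \eqref{asymptotics lnGamma and lnG}, and \eqref{asymptotics for X2 d}-type identities for the $\ln(1+\frac{1}{k+1})$ terms, the series $X_{2,\alpha}^{(c)}$ evaluates to an explicit combination of $\ln G(1+\alpha)$, $\ln\Gamma(\tfrac{1+2\alpha-\nu_{\min}}{2})$, $\zeta'(-1)$, $\ln(2\pi)$, and polynomial terms in $\alpha, \nu_{\min}$. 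One then assembles $\partial_{\nu_\ell}(\ln C)$ as a function of $\nu_\ell$ (with the other parameters and $\nu_{\min}$ held fixed).

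The last step is the $\alpha$-integration. For the $\nu_\ell$-identity I integrate $\partial_{\nu_\ell}(\ln C)$ from $\nu_{\min}$ to $\nu_\ell$; the polynomial-in-$\nu_\ell$ terms integrate to the quadratic expressions $\tfrac{\nu_\ell^2-\nu_{\min}^2}{2}(\cdots)$ and linear expressions $(\nu_\ell-\nu_{\min})(\cdots)$ appearing on the right of \eqref{after integrating in nu}; the term with $\partial_{\nu_\ell}\ln G(1+\nu_\ell) = \psi(1+\nu_\ell)\cdot$(coefficient $1$, after the dust settles) integrates to $\ln G(1+\nu_\ell) - \ln G(1+\nu_{\min})$; and one must check that the $\ln\Gamma(\tfrac{1+2\alpha-\nu_{\min}}{2})$ and $\zeta'(\cdot)$ pieces cancel among themselves before integration (using the Barnes identity \eqref{identity Hurwitz Barnes Gamma} / \eqref{identity Euler gamma} as in the proof of Proposition \ref{prop: asymptotics of Kr}), so that the integrand is elementary in $\nu_\ell$ apart from the single $\psi(1+\nu_\ell)$. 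The $\mu_\ell$ case is identical up to the overall sign in \eqref{diff identity munu2} and the fact that $\mu_\ell$ enters $F$ in the numerator rather than the denominator, which flips several signs and changes $(r-q)^2$ terms to $+(r-q)^2$; tracking these sign changes carefully yields \eqref{after integrating in mu}. I expect the main obstacle to be the sheer algebraic bulk of substituting the $c_j$'s and $b_j$'s and verifying the cancellation of all the spurious $\ln b_j$, $\ln(-ib_j)$, $\pi^2$, and Hurwitz-zeta terms so that only $\ln(r-q)$, $\ln(1+r-q)$, $\ln G$, and $\ln(2\pi)$ remain; this is bookkeeping-heavy but mechanical, and the Barnes $G$-function identity \eqref{identity Hurwitz Barnes Gamma} is the one nontrivial ingredient needed to collapse $X_{2,\alpha}^{(c)}$. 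A useful internal consistency check at the end is the poles-zeros cancellation from Section \ref{section: consistency check}: setting $\nu_\ell = \nu_{\min}$ must make \eqref{after integrating in nu} vanish, which it visibly does.
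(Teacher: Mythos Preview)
Your overall strategy is right---assemble the constant-order pieces from Propositions \ref{prop: I_1alpha}, \ref{prop: asymptotics Xalpha}, \ref{prop: asymptotics of Zalpha}, \ref{prop: asymptotics I34alpha} and integrate in $\alpha$---but the handling of $X_{2,\alpha}^{(c)}$ is where the proposal breaks down. The series \eqref{def of X2ac} contains $\psi(k+1+\alpha)$, not $\ln\Gamma(k+1+\alpha)$, so the telescoping $\sum_k\ln\Gamma=\ln G$ does not apply directly; your tentative identity $\sum_{k=0}^N\psi(k+1+\alpha)=\ln G(N+2+\alpha)-\ln G(1+\alpha)$ is false (and you correctly flag it as such), yet you still conclude that $X_{2,\alpha}^{(c)}$ evaluates to an expression containing $\ln G(1+\alpha)$. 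It does not. Relatedly, $\partial_{\nu_\ell}\ln G(1+\nu_\ell)\neq\psi(1+\nu_\ell)$ (that is the derivative of $\ln\Gamma$, not of $\ln G$), so the mechanism you propose for producing the $\ln G(1+\nu_\ell)-\ln G(1+\nu_{\min})$ term after integration is incorrect.

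The paper avoids this by reversing the order: it does \emph{not} attempt to close $X_{2,\alpha}^{(c)}$ explicitly, but integrates it in $\alpha$ first (Fubini on the convergent series), which turns each $\psi(k+1+\alpha)$ into $\ln\Gamma(k+1+\alpha)$, and \emph{then} the $k$-sum telescopes via $G(z+1)=\Gamma(z)G(z)$ to produce $\ln G(1+\nu_\ell)-\ln G(1+\nu_{\min})$ together with Hurwitz-zeta terms. A second point: the $\ln\Gamma\big(\tfrac{1+2\alpha-\nu_{\min}}{2}\big)$ piece (coming from $I_{1,\alpha}^{(c)}$, see \eqref{I_1alpha3 explicit}) does \emph{not} cancel before integration as you assert; it survives in $\partial_{\nu_\ell}(\ln C)-X_{2,\nu_\ell}^{(c)}$ and is integrated using the identity \eqref{identity int ln Gamma}. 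The simplification via the Hurwitz--Barnes relation \eqref{identity Hurwitz Barnes Gamma} occurs only \emph{after} both integrations have been carried out and the results are combined. If you reorganize your argument along these lines (integrate $X_{2,\alpha}^{(c)}$ termwise first; integrate the remaining $\ln\Gamma$ via \eqref{identity int ln Gamma}; simplify at the end with \eqref{identity Hurwitz Barnes Gamma}), the bookkeeping goes through.
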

\begin{proof}
We only consider the case $\alpha=\nu_{\ell}$, $\ell \in \{1,\ldots,r\}$. The case $\alpha=\mu_{\ell}$, $\ell \in \{1,\ldots,q\}$, is analogous. We start by integrating the term $X_{2,\nu_{\ell}}^{(c)}$ defined in \eqref{def of X2ac}. By Fubini's theorem, we can interchange the order of integration and summation, which implies
\begin{align*}
\int_{\nu_{\min}}^{\nu_{\ell}} X_{2,\nu_{\ell}'}^{(c)} d\nu_{\ell}' =&\; \sum_{k=0}^{\infty}  \Bigg\{   -\ln \Gamma \bigg( k+1+\nu_{\ell}\bigg) -\nu_{\ell}+\bigg( k+ \frac{1+2\nu_\ell-\nu_{\min} }{2} \bigg)\ln\bigg(  k+\frac{1+2\nu_\ell-\nu_{\min} }{2} \bigg)
\\
& +\frac{\nu_{\min}}{2}(\nu_\ell-\nu_{\min})\ln\bigg(1+\frac{1}{k+1} \bigg) +\ln \Gamma \bigg( k+1+\nu_{\min}\bigg) +\nu_{\min}
\\
& -\bigg( k+ \frac{1+\nu_{\min} }{2} \bigg)\ln\bigg(  k+\frac{1+\nu_{\min} }{2} \bigg) \Bigg\}.
\end{align*}
Simplification gives (see \eqref{asymptotics for X2 a}, \eqref{asymptotics for X2 b}, and \eqref{asymptotics for X2 d})
\begin{align} 
\int_{\nu_{\min}}^{\nu_{\ell}} X_{2,\nu_{\ell}'}^{(c)} d\nu_{\ell}' = &\; \frac{\ln(2\pi)}{2}(\nu_{\min}-\nu_{\ell}) + \ln G(1+\nu_{\ell})-\ln G(1+\nu_{\min}) \nonumber
\\ 
& + \frac{1+2\nu_\ell-\nu_{\min} }{2} \ln\bigg(  \frac{1+2\nu_\ell-\nu_{\min} }{2} \bigg) -  \frac{1+\nu_{\min} }{2} \ln\bigg(  \frac{1+\nu_{\min} }{2} \bigg) \nonumber
\\
& -\zeta'\bigg( -1;\frac{1+2\nu_{\ell}-\nu_{\min}}{2} \bigg)+\zeta'\bigg( -1;\frac{1+\nu_{\min}}{2}\bigg)  +\frac{\nu_{\ell}}{2} (\nu_{\ell}-\nu_{\min}).\label{X2nu after integrating}
\end{align}
On the other hand, part (b) of Lemma \ref{lemma: differential identities} and Propositions \ref{prop: I_1alpha}, \ref{prop: asymptotics Xalpha}, \ref{prop: asymptotics of Zalpha}, and \ref{prop: asymptotics I34alpha} imply that
\begin{align} \nonumber
\partial_{\nu_{\ell}}(\ln C) -X_{2,\nu_{\ell}}^{(c)}=&\; \ln \Gamma \bigg( \frac{1+2 \nu_{\ell}-\nu_{\min}}{2} \bigg)-\frac{\ln(2\pi)}{2}
	\\ \nonumber
& + \bigg(\sum_{\substack{j=1 \\ j \neq \ell}}^r\nu_j -\sum_{k=1}^{q}\mu_k  \bigg)  \ln(r-q)+\nu_{\ell}\frac{1+r-q-(r-q)^2 }{1+r-q} \ln(r-q)
	\\
& - \bigg(\sum_{\substack{j=1 \\ j \neq \ell}}^r \nu_j -\sum_{k=1}^{q}\mu_k  \bigg) \ln(1+r-q) -(2+q-r)\nu_{\ell} \ln(1+r-q).\label{cosntantnu before integrating}
\end{align}
Using the identity \eqref{identity int ln Gamma}, we  can integrate the term in (\ref{cosntantnu before integrating}) involving the Gamma function:
\begin{align*}
\int_{\nu_{\min}}^{\nu_{\ell}}  \ln \Gamma \bigg( \frac{1+2 \nu_{\ell}'-\nu_{\min}}{2} \bigg)d\nu_{\ell}' &= \frac{\ln(2\pi)}{2}(\nu_{\ell}-\nu_{\min} )  - \frac{\nu_{\ell}}{2}(\nu_{\ell}-\nu_{\min} ) 
\\
&\quad - \frac{1-2 \nu_{\ell}+\nu_{\min}}{2} \ln \Gamma \bigg(  \frac{1+2 \nu_{\ell}-\nu_{\min}}{2} \bigg) + \frac{1-\nu_{\min}}{2} \ln \Gamma \bigg(   \frac{1+\nu_{\min}}{2}\bigg)
\\
&\quad -\ln G \bigg(   \frac{1+2 \nu_{\ell}-\nu_{\min}}{2}\bigg) + \ln G \bigg(  \frac{1+\nu_{\min}}{2} \bigg).
\end{align*}
Integrating also the other terms in \eqref{cosntantnu before integrating} and utilizing \eqref{X2nu after integrating}, we arrive at
\begin{align*}
& \int_{\nu_{\min}}^{\nu_{\ell}}\partial_{\nu_{\ell}'}(\ln C) d\nu_{\ell}' = \frac{\ln(2\pi)}{2}(\nu_{\ell}-\nu_{\min} )  - \frac{\nu_{\ell}}{2}(\nu_{\ell}-\nu_{\min} ) - \frac{1-2 \nu_{\ell}+\nu_{\min}}{2} \ln \Gamma \bigg(  \frac{1+2 \nu_{\ell}-\nu_{\min}}{2} \bigg)
\\
&\quad  + \frac{1-\nu_{\min}}{2} \ln \Gamma \bigg(   \frac{1+\nu_{\min}}{2}\bigg) -\ln G \bigg(   \frac{1+2 \nu_{\ell}-\nu_{\min}}{2}\bigg) + \ln G \bigg(  \frac{1+\nu_{\min}}{2} \bigg)  - \frac{\ln(2\pi)}{2}(\nu_{\ell}-\nu_{\min}) 
\\
&\quad+ (\nu_{\ell}-\nu_{\min})\bigg(\sum_{\substack{j=1 \\ j\neq \ell}}^r \nu_j -\sum_{k=1}^{q}\mu_k  \bigg)\big(  \ln(r-q) - \ln(1+r-q)\big) +\frac{\ln(2\pi)}{2}(\nu_{\min}-\nu_{\ell} )
\\
&\quad +\frac{\nu_{\ell}^2-\nu_{\min}^2}{2}\frac{1+r-q-(r-q)^2 }{1+r-q} \ln(r-q) -(2+q-r)\frac{\nu_{\ell}^2-\nu_{\min}^2}{2} \ln(1+r-q)
\\
&\quad  +\ln G(1+\nu_{\ell})-\ln G(1+\nu_{\min}) + \frac{1+2\nu_\ell-\nu_{\min} }{2} \ln\bigg(  \frac{1+2\nu_\ell-\nu_{\min} }{2} \bigg) -  \frac{1+\nu_{\min} }{2} \ln\bigg(  \frac{1+\nu_{\min} }{2} \bigg) \nonumber
\\
&\quad -\zeta'\bigg( -1;\frac{1+2\nu_{\ell}-\nu_{\min}}{2} \bigg)+\zeta'\bigg( -1;\frac{1+\nu_{\min}}{2}\bigg) +\frac{\nu_{\ell}}{2} (\nu_{\ell}-\nu_{\min}).
\end{align*}
In view of \eqref{identity Hurwitz Barnes Gamma} and the identities $\Gamma(z+1) = z\Gamma(z)$ and $G(z+1)= \Gamma(z)G(z)$, this expression simplifies to \eqref{after integrating in nu}. The proof of \eqref{after integrating in mu} is analogous.
\end{proof}

\section{Proof of Theorem \ref{mainthm}} \label{Section: proof main thm}
We use the strategy described in Section \ref{outlinesubsec} to prove Theorem \ref{mainthm}. Let $C^{(\ell)}$ be the multiplicative constant arising in the large gap asymptotics for the point process induced by $\mathbb{K}^{(\ell)}$, $\ell \in \{0,\ldots,r+q\}$, where $\mathbb{K}^{(\ell)}$ is given by \eqref{def of K ell}.
The final constant is given by
\begin{align}
\ln C &=\ln C_{r-q} + \sum_{\ell=1}^{r} \int_{\nu_{\min}}^{\nu_{\ell}}\partial_{\nu_{\ell}'}(\ln C^{(\ell)}) d\nu_{\ell}' + \sum_{\ell=1}^{q} \int_{\nu_{\min}}^{\mu_{k}}\partial_{\mu_{\ell}'}(\ln C^{(r+\ell)}) d\mu_{\ell}', \label{sum for final constant}
\end{align}
where $C_{r-q}$ is the constant in (\ref{asymptotics det Kr}) associated with $\mathbb{K}_{r-q}$. 
Proposition \ref{prop: asymptotics of Kr} with $r$ replaced by $r-q$ gives
\begin{align} \nonumber
& \ln C_{r-q}=(r-q)\ln G(1+\nu_{\min}) - \frac{\ln(2\pi)}{2}\nu_{\min}(r-q)- (r-q-1) \zeta'(-1)
\\ \label{value Cr-q}
& +\frac{-2+(r-q)^2(r-q-1+12\nu_{\min}^2)}{24(r-q+1)} \ln(r-q) - \frac{(r-q-1)^2+12(r-q)\nu_{\min}^2}{24}\ln(1+r-q).
\end{align}
On the other hand, Proposition \ref{prop: constant after deforming numu} shows that, for $\ell = 1,\ldots,r$,
\begin{align} \nonumber
\int_{\nu_{\min}}^{\nu_{\ell}} & \partial_{\nu_{\ell}'}(\ln C^{(\ell)}) d\nu_{\ell}' =(\nu_{\ell}-\nu_{\min})\bigg(\sum_{j=1}^{\ell-1} \nu_j + (r-q-\ell) \nu_{\min} \bigg)\big(  \ln(r-q) - \ln(1+r-q)\big)
\\ \nonumber
&\quad +\frac{\nu_{\ell}^2-\nu_{\min}^2}{2}\frac{1+r-q-(r-q)^2 }{1+r-q} \ln(r-q) -(2+q-r)\frac{\nu_{\ell}^2-\nu_{\min}^2}{2} \ln(1+r-q)
\\ \label{integrating step nuell}
&\quad +\ln G(1+\nu_{\ell})-\ln G(1+\nu_{\min}) +(\nu_{\min}-\nu_{\ell} ) \frac{\ln(2\pi)}{2}
\end{align}
and, for $\ell = 1,\ldots,q$,
\begin{align} \nonumber
 \int_{\nu_{\min}}^{\mu_{\ell}}\partial_{\mu_{\ell}'}(\ln C^{(r+\ell)}) d\mu_{\ell}' &=(\nu_{\min}-\mu_\ell)\bigg(\sum_{j=1}^r \nu_j -\sum_{k=1}^{\ell-1}\mu_k-(q-\ell) \nu_{\min}  \bigg)\big(  \ln(r-q) - \ln(1+r-q)\big)
 \\ \nonumber
 &\quad+\frac{\mu_\ell^2 -\nu_{\min}^2}{2}\frac{1+r-q+(r-q)^2 }{1+r-q} \ln(r-q) -(r-q)\frac{\mu_\ell^2 -\nu_{\min}^2}{2} \ln(1+r-q)
 \\ \label{integrating step muell}
 &\quad -\ln G(1+\mu_{\ell})+\ln G(1+\nu_{\min}) -(\nu_{\min}-\mu_\ell)\frac{\ln(2\pi)}{2}.
\end{align}
By substituting \eqref{value Cr-q}, \eqref{integrating step nuell}, and \eqref{integrating step muell} into \eqref{sum for final constant}, we find the expression \eqref{constant C2}. This completes the proof of Theorem \ref{mainthm}. \hfill{$\square$}

\end{document}